\newtheorem{theorem}{Theorem}
\newtheorem{proposition}{Proposition}
\newtheorem{definition}{Definition}
\newcommand{\commentout}[1]{}
\newcommand{\prs}{\par}
\newenvironment{proof}{\noindent{\bf Proof}:}{\mbox{$\Box$}
\prs\vspace{3mm}\prs}
\newcounter{myenumeratecounter}
\newtheorem{examplehidden}{Example}
\newenvironment{example}{
\begin{examplehidden}
\em
}
{\end{examplehidden}}
\newcommand{\ave}[1]{\ensuremath{\langle #1\rangle}}
\newcommand{\determines}{\ensuremath{\rightsquigarrow}}
\newcommand{\cA}{\ensuremath{\mathcal A}}
\newcommand{\cC}{\ensuremath{\mathcal C}}
\newcommand{\cE}{\ensuremath{\mathcal E}}
\newcommand{\cM}{\ensuremath{\mathcal M}}
\newcommand{\cP}{\ensuremath{\mathcal P}}
\newcommand{\cQ}{\ensuremath{\mathcal Q}}
\newcommand{\cS}{\ensuremath{\mathcal S}}
\newcommand{\cT}{\ensuremath{\mathcal T}}
\newcommand{\cU}{\ensuremath{\mathcal U}}
\newcommand{\cV}{\ensuremath{\mathcal V}}
\newcommand{\cX}{\ensuremath{\mathcal X}}
\newcommand{\cY}{\ensuremath{\mathcal Y}}
\newcommand{\cZ}{\ensuremath{\mathcal Z}}
\newcommand{\indicator}{\ensuremath{{\mathbf 1}}}
\newcommand{\reals}{{\mathbb R}}
\newcommand{\naturals}{{\mathbb N}}
\newcommand{\Exp}{\ensuremath{\text{\rm E}}}
\newcommand{\range}{\ensuremath \mathbf U}
\begin{document}


\newlength{\ownboxwidth} \newlength{\ownleftboxwidth}
  \newlength{\ownrightboxwidth} \setlength{\ownboxwidth}{\textwidth}
  \addtolength{\ownboxwidth}{-0.1 cm} \setlength{\ownleftboxwidth}{0.1
    cm} \setlength{\ownrightboxwidth}{0.01 cm}
 
\def\pperp{\perp\!\!\!\perp}

\newcommand{\wksafe}[1]{\ensuremath{ \llbracket #1 \rrbracket}}
\newcommand{\prag}[1]{\ensuremath{\tilde{#1}}}
\newcommand{\pragset}[1]{\ensuremath{\tilde{#1}}}
\newcommand{\pur}[1]{\ensuremath{\tilde{#1}}}
\newcommand{\tpmf}[1]{\ensuremath{\tilde{p}_{\tiny [#1]}}}
\newcommand{\pmf}[1]{\ensuremath{p_{\tiny [#1]}}}
\newcommand{\cali}[1]{[#1]}
\newcommand{\cdf}{\prag{F}_{\small [U|V]}}
\newcommand{\cs}{\cC}
\newcommand{\comp}{\text{\sc comp}}
\newcommand{\ct}{{\cC}^{[a,b]}}
\renewcommand{\range}[1]{\text{\sc range}(#1)}
\newcommand{\cod}[1]{{\mathcal{#1}}}
\newcommand{\todo}[1]{{\color {blue} #1 \ }}
\newcommand{\Pdetermines}[1]{\ensuremath{\rightsquigarrow}_{#1}}
\newcommand{\fPdetermines}[2]{\ensuremath{\overset{\small #1}{\rightsquigarrow}}_{#2}}
\newcommand{\fdetermines}{\overset{\small f}{\determines}} 
\newcommand{\onetoones}{\ensuremath{\leftrightsquigarrow}}
\newcommand{\Ponetoones}[1]{\ensuremath{\leftrightsquigarrow}_{#1}}
\newcommand{\cPonetoones}{\ensuremath{\leftrightsquigarrow}_{\cP^*}}

\newcommand{\Pmarg}{\prag{P}_{\text{\sc marg}}}
\newcommand{\we}{}
\newcommand{\cont}{\text{\sc cont}}
\newcommand{\Pstand}{\prag{P}_{\text{\sc standard}}}
\newcommand{\ind}[1]{\text{\sc ind}{(#1)}}
\newcommand{\oss}{\text{\bf  oss}}
\newcommand{\tss}{\text{\bf  tss}}
\newcommand{\allx}{\bigcup_{s \in \cS} \cX_s}
\newcommand{\ally}{\bigcup_{t \in\cT} \cY_t}
\newcommand{\longversion}[1]{}
\newcommand{\longversionnew}[1]{}
\newcommand{\preds}{\prag{P}_{[Y_t|X_s]}}
\newcommand{\predm}{\prag{P}_{[Y_t|{\bf 0}]}}
\newcommand{\support}{\text{\sc supp}}
\newcommand{\fun}[1]{\bar{#1}}
\newcommand{\noproc}[1]{}
\newcommand{\proconly}[1]{#1}
\newcommand{\shortp}[1]{#1}
\newcommand{\heads}{\text{\sc h}}
\newcommand{\tails}{\text{\sc t}}
\newcommand{\direct}{\text{\sc d}}
\newcommand{\indirect}{\text{\sc i}}
\newcommand{\lzo}{L_{01}}
\newcommand{\ps}{\text{\sc \bf PS}}

\title{Safe Probability}


%
\author{Peter Gr{\"u}nwald}
%
%

\maketitle              

\begin{abstract}
  We formalize the idea of probability distributions that lead to
  reliable predictions about some, but not all aspects of a domain.
  The resulting notion of `safety' provides a fresh perspective on
  foundational issues in statistics, providing a middle ground between
  imprecise probability and multiple-prior models on the one hand and
  strictly Bayesian approaches on the other. It also allows us to
  formalize fiducial distributions in terms of the set of random
  variables that they can safely predict, thus taking some of the
  sting out of the fiducial idea. By restricting probabilistic
  inference to safe uses, one also automatically avoids paradoxes such
  as the Monty Hall problem. Safety comes in a variety of degrees, such as   
  `validity' (the strongest notion),  `calibration', `confidence safety' and `unbiasedness' (almost the weakest notion). 
\end{abstract}
\section{Introduction}
We formalize the idea of probability distributions that lead to
reliable predictions about some, but not all aspects of a domain. Very
broadly speaking, we call a distribution $\tilde{P}$ {\em safe\/} for
predicting random variable $U$ given random variable $V$ if
predictions concerning $U$ based on $\tilde{P}(U|V)$ tend to be as
good as one would expect them to be if $\tilde{P}$ were an accurate
description of one's uncertainty, even if $\tilde{P}$ may not
represent one's actual beliefs, let alone the truth.
Our formalization of this notion of `safety' 
has repercussions for the foundations of
statistics, 
providing a joint perspective on issues hitherto viewed as distinct:
%

\paragraph{1. All models are wrong...\protect\footnote{...yet some are
    useful, as famously remarked by \cite{Box79}.}}  Some statistical
models are evidently both entirely wrong yet very useful. For example,
in some highly successful applications of Bayesian statistics, such as
latent Dirichlet allocation for topic modeling \citep{blei2003latent},
one assumes that natural language text is i.i.d., which is fine for
the task at hand (topic modeling) --- yet no-one would want to use
these models for predicting the next word of a text given the
past. Yet, one can use a Bayesian posterior to make such predictions
any way --- Bayesian inference has no mechanism to distinguish between
`safe' and `unsafe' inferences. Safe probability allows us to impose
such a distinction.

\paragraph{2. The Eternal Discussion\protect\footnote{When the single-vs. multiple-prior issue came up in a discussion 
on the {\em decision-theory forum\/} mailing list, the well-known economist I. Gilboa referred to it as `the eternal discussion'.}}
More generally, representing uncertainty by a single distribution, as
is standard in Bayesian inference, implies a willingness to make
definite predictions about random variables that, some claim, one
really knows nothing about. Disagreement on this issue goes back at
least to \cite{Keynes21} and \cite{Ramsey31}, has led many economists
to sympathize with {\em multiple-prior models\/} \citep{GilboaS89} and
some statisticians to embrace the related {\em imprecise
  probability\/} \citep{Walley91,augustin2014introduction} in which so-called `Knightian'
uncertainty is modeled by a {\em set\/} $\cP^*$ of distributions. But
imprecise probability is not without problems of its own, an
important one being {\em dilation\/} (Example~\ref{ex:dilation}
below). Safe probability can be understood as starting from a set
$\cP^*$, but then {\em mapping\/} the set of distributions to a single
distribution, where the mapping invoked may depend on the prediction task
at hand --- thus avoiding both dilation and overly precise
predictions. The use of such mappings has been advocated before, under
the name {\em pignistic transformation\/}
\citep{Smets89,hampel2001outline}, but a general theory for
constructing and evaluating them has been lacking (see also
Section~\ref{sec:conclusion}).

\paragraph{3. Fisher's Biggest Blunder\protect\footnote{While Fisher
    is generally regarded as (one of) the greatest statisticians of
    all time, fiducial inference is often considered to be his `big
    blunder' --- see \cite{Hampel06} and \cite{Efron96}, who writes
    {\em Maybe Fisher's biggest blunder will become a big hit in the
      21st century!}}}  Fisher (1930) \nocite{Fisher30} introduced
{\em fiducial inference}, a method to come up with a `posterior'
$\prag{P}(\theta \mid X^n)$ on a model's parameter space based on data
$X^n$, but without anything like a `prior', in an approach to
statistics that was neither Bayesian nor frequentist. The approach
turned out problematic however, and, despite progress on related {\em 
structural inference\/}  \citep{Fraser68,Fraser79} was largely abandoned. 
Recently, however, fiducial distributions have made a comeback
\citep{hannig2009generalized,taraldsen2013fiducial,martin2013inferential,veronese2015fiducial},
in some instances with a more modest, frequentist interpretation as
{\em confidence distributions\/} \citep{SchwederH02,SchwederH16}. 
As noted by
\cite{xie2013confidence}, these `contain a wealth of information for
inference', e.g. to determine valid confidence intervals and unbiased
estimation of the median, but their interpretation remains difficult,
viz. the insistence by \cite{Hampel06,xie2013confidence} and many
others that, although $\prag{P}(\cdot \mid X^n)$ is defined as a
distribution on the parameter space, the parameter itself is not
random. Safe probability offers an alternative perspective, where the
insistence that `$\theta$ is not random' is replaced by the weaker
(and perhaps liberating) statement that `we can treat $\theta$ as
random' {\em as long as we restrict ourselves to safe inferences about
  it}' --- in Section~\ref{sec:confidence} we determine precisely what these
safe inferences are  and how they fit into a general hierarchy:

\paragraph{4. The Hierarchy} Pursuing the idea that some distributions
are reliable for a smaller subset of random variables/prediction tasks
than others, leads to a natural {\em hierarchy\/} of safeties --- a
first taste of which is in Figure~\ref{fig:figure} on
page~\pageref{fig:figure}, with  notations explained later.
At the top are distributions that are fully reliable for whatever task
one has in mind; at the bottom those that are reliable only for a
single task in a weak, average sense. In between there is a natural
place for distributions that are {\em calibrated\/}
(Example~\ref{ex:calibration} below), that are {\em confidence--safe\/}
(i.e. valid confidence distributions) and that are {\em optimal for
  squared-error prediction}. 

\paragraph{5.  ``The concept of a conditional probability with regard to an isolated hypothesis...\protect\footnote{...
whose probability equals 0 is inadmissible,'' as remarked by \cite{Kolmogorov33}. As will be seen, safe probability suggests an even more radical statement related to the Monty Hall sanity check. } } 
Upon first hearing of the Monty Hall (quiz master, three doors)
problem \citep{vossavant90,Gill11}, most people naively think that the
probability of winning the car is the same whether one switches doors
or not. Most can eventually, after much arguing, be convinced that
this is wrong, but wouldn't it be nice to have a simple sanity check
that {\em immediately\/} tells you that the naive answer must be
wrong, without even pondering the `right' way to approach the problem?
Safe probability provides such a check: one can immediately tell that
the naive answer is {\em not safe}, and thus cannot be right. Such a
check is applicable more generally, whenever conditioning on events
rather than on random variables (Example~\ref{ex:montyhall} and
Section~\ref{sec:general}).

\paragraph{6. ``Could Neyman, Jeffreys and Fisher have agreed on testing?\protect\footnote{...'', as asked by Jim \cite{Berger03}. }}
\cite{ryabko2005using} shows that sequences of 0s and 1s produced by
standard random number generators can be substantially compressed by
standard data compression algorithms such as {\tt rar} or {\tt zip}.  While this
is clear evidence that such sequences are not random, this method is
neither a valid Neyman-Pearson hypothesis test nor a valid Bayesian
test (in the tradition of Jeffreys). The reason is that both these
standard paradigms require the existence of an {\em alternative
  statistical model}, and start out by the assumption that, if the
null model (i.i.d. Bernoulli (1/2)) is incorrect, then the alternative
must be correct. However, there is no clear sense in which {\tt zip}
could be `correct' --- see Section~\ref{sec:conclusion}. There is a
third testing paradigm, due to Fisher, which does view testing as
accumulating evidence against $h_0$, and not necessarily as confirming
some precisely specified $h_1$. Yet Fisher's paradigm is not without
serious problems either --- see Section~\ref{sec:conclusion}.

\cite{BergerBW94} started a line of work culminating in
\cite{Berger03}, who presents tests that have interpretations in all
three paradigms and that avoid some of the problems of their original
implementations. However, it is essentially an objective Bayes
approach and thus inevitably, strong evidence against $h_0$ implies a
high posterior probability that $h_1$ is true. If one is really doing
Fisherian testing, this is unwanted.  Using the idea of safety, we can
extend Berger's paradigm by stipulating the inferences for which we
think it is safe: roughly speaking, if we are in a Fisherian
set-up, then we declare all inferences conditional on $h_1$ to be
unsafe, and inferences conditional on $h_0$ to be safe; if we really
believe that $h_1$ may represent the state of the world, we can
declare inferences conditional on $h_1$ to be safe.  But much more is
possible using safe probability --- a DM can decide, on a case by case
basis, what inferences based on her tests would be safe, and under
what situations the test results itself are safe --- for example, some
tests remain safe under optional stopping, whereas others (even
Bayesian ones!) do not. While we will report on this application of
safety (which comprises a long paper in itself) elsewhere,
we will briefly return to it in the conclusion.

\paragraph{7. Further Applications: Objective Bayes, Epistemic Probability}
Apart from the applications above, the results in this paper suggest
that safe probability be used to formalize the status of default
priors in {\em objective Bayesian\/} inferences, and to enable an
alternative look at {\em epistemic probability}. But this remains a
topic for future work, to which we briefly return at the end of the
paper.

\paragraph{The Dream}
Imagine a world in which one would require any statistical analysis
--- whether it be testing, prediction, regression, density estimation
or anything else --- to be accompanied by a {\em safety
  statement}. Such a statement should list what inferences, the
analysists think, can be safely made based on the conclusion of the
analysis, and in what formal `safety' sense. Is the alternative $h_1$
really true even though $h_0$ is found to be false? Is the suggested
predictive distribution valid or merely calibrated? Is the posterior
really just good for making predictions via the predictive
distribution, or is it confidence-safe, or is it generally safe? Does
the inferred regression function only work well on covariates drawn
randomly from the same distribution, or also under covariate shift?
(an application of safety we did not address here but which we can
easily incorporate). The present, initial formulation of safe
probability is too complicated to have any realistic hopes for a
practice like this to emerge, but I can't help hoping that the ideas
can be simplified substantially, and a safer practice of statistics
might emerge.

\ \\
Starting with \cite{Grunwald99a}, my own work --- often in
collaboration with J. Halpern --- has regularly used the idea of
`safety', for example in the context of Maximum Entropy inference
\citep{Grunwald00a}, and also dilation \citep{GrunwaldH04},
calibration \citep{GrunwaldH11}, and probability puzzles like Monty
Hall \citep{GrunwaldH03,Grunwald13}. However, the insights of earlier
papers were very partial and scattered, and the present paper presents
for the first time a general formalism, definitions and a
hierarchy. It is also the first one to make a connection to confidence
distributions and pivots.
\subsection{Informal Overview}
Below we explain the basic ideas using three recurring
examples. We assume that we are given a set of distributions $\cP^*$
on some space of outcomes $\cZ$.  Under a frequentist interpretation,
$\cP^*$ is the set of distributions that we regard as `potentially
true'; under a subjectivist interpretation, it is the {\em credal
  set\/} that describes our uncertainty or `beliefs'; all developments
below work under both interpretations.

All probability distributions mentioned below are either an element of
$\cP^*$, or they are a {\em pragmatic distribution\/} $\prag{P}$,
which some decision-maker (DM) uses to predict the outcomes of some
variable $U$ given the value of some other variable $V$,where both $U$
and $V$ are random quantities defined on $\cZ$.  $\prag{P}$ is also
used to estimate the quality of such predictions. $\prag{P}$ (which
may be, but is not always in $\cP^*$) is `pragmatic' because we assume
from the outset that some element of $\cP^*$ might actually lead to
better predictions --- we just do not know which one.
\begin{figure}[t]{\hspace*{0.25\textwidth}
\includegraphics[width=0.7\textwidth]{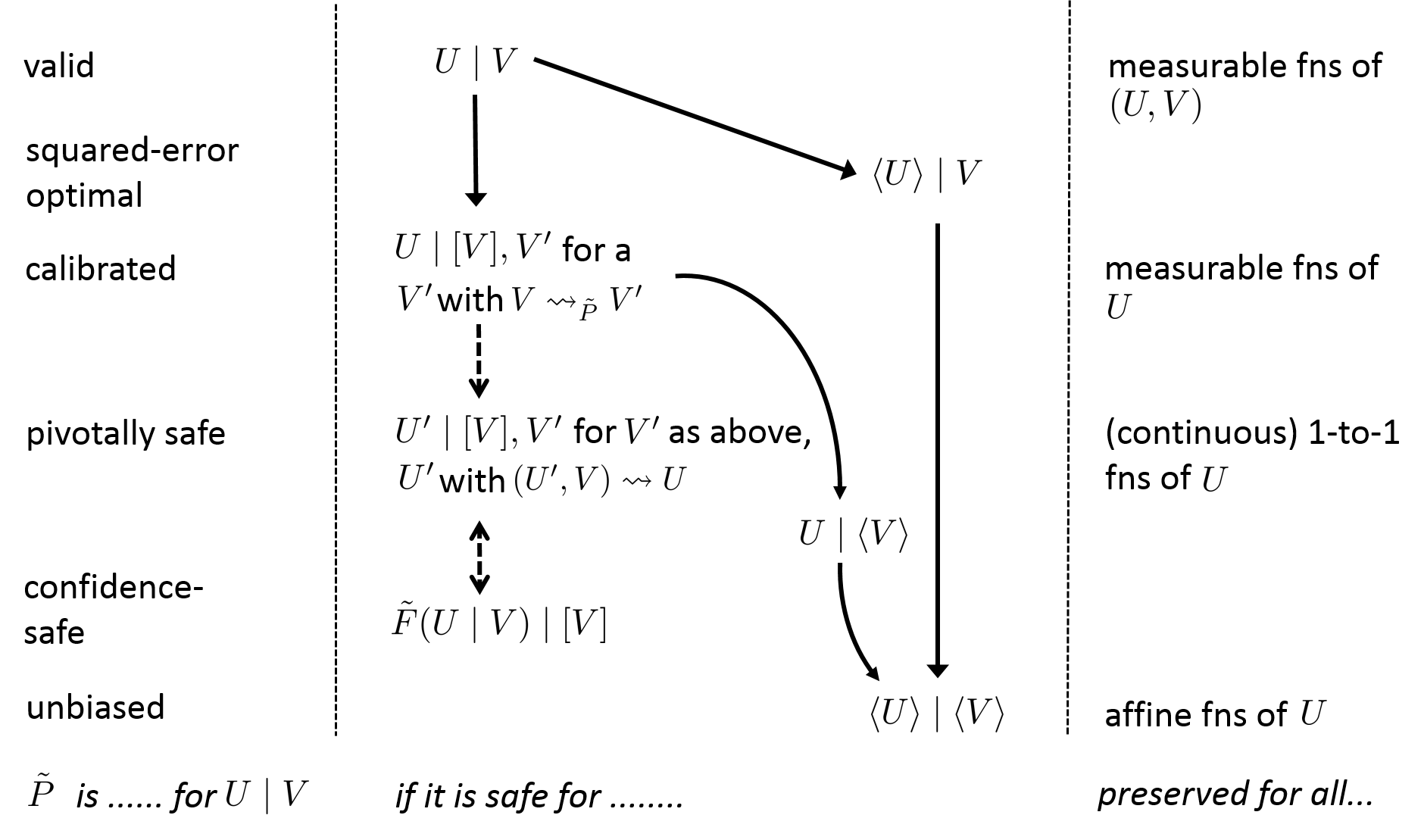}}
\caption{\small A Hierarchy of Relations for $\prag{P}$.  The concepts
  on the right correspond (broadly) to existing notions, whose name is
  given on the left (with the exception of $U \mid \ave{V}$, for which
  no regular name seems to exist).  $A \rightarrow B$ means that
  safety of $\prag{P}$ for $A$ implies safety for $B$ --- at least,
  under some conditions: for all solid arrows, this is proven under
  the assumption of $V$ with countable range (see underneath
  Proposition~\ref{prop:newstart}). For the dashed arrows, this is
  proven under additional conditions (see Theorem~\ref{thm:confidence} and
  subsequent remark). On the right are shown transformations on $U$
  under which safety is preserved, e.g.  if $\prag{P}$ is calibrated
  for $U| V$ then it is also calibrated for $U' \mid V$ for every $U'$
  with $U\determines U'$ (see remark underneath
  Theorem~\ref{thm:confidence}).  Weakening the conditions for the
  proofs and providing more detailed interrelations is a major goal
  for future work, as well as investigating whether the hierarchy has
  a natural place for {\em causal\/} notions, such as $\prag{P}(U \mid
  \text{\sc do}(v))$ as in Pearl's (\citeyear{Pearl09})
  do-calculus. \label{fig:figure}}
\end{figure}
\begin{example}{\bf \ [Dilation]}\label{ex:dilation}
  A DM has to make a prediction or decision about random variable $U  \in \cU = \{0, 1\}$ given the value of $V \in \cV= \{0,1\}$. She
  knows that the marginal probability $P(U=1) = 0.9$; she suspects
  that $U$ may depend on $V$, but has no idea whether $U$ and $V$ are
  positively or negatively correlated or how strong the correlation
  is. She may thus model her uncertainty as the set $\cP^*$ of {\em
    all\/} distributions $P$ on $\cZ = \cU \times \cV$ that satisfy
\begin{equation}\label{eq:firstconstraint}
P(U=1) = \sum_{v \in \cV} P(U=1,V=v) = 0.9.\end{equation}
Given that $V=1$, what
  should she predict for $U$? A standard answer in imprecise
  probability \citep{Walley91} is to pointwise condition the set
  $\cP^*$, leading one to adopt the probabilities $\cP^*(U=1 \mid V=1)
  := \{P(U=1 \mid V=1) : P \in \cP^* \}$. But this set contains {\em
    every\/} distribution on $U$, including $P(U=1 \mid V=1)= 0$ (the
  latter would obtain for the $P \in \cP^*$ with $P(U= |1- V|) =
  1$). It therefore seems that, after observing $V=1$, the DM has lost
  rather than gained information. By symmetry, the same happens after
  observing $V=0$, so {\em whatever DM observes, she loses
    information} --- a phenomenon known as {\em dilation\/}
  \citep{SeidenfeldW93}. This is intuitively
  disturbing, and it may perhaps be better to simply {\em ignore\/}
  $V$ and predict using the distribution that acts as if $U \perp V$
  and has 
\begin{equation}\label{eq:theignorer}
\prag{P}(U=1 \mid V=v) = {P}(U=1) \text{\ \ \ for all $v \in \cV$},
\end{equation}
i.e. $\prag{P}(U=1 \mid V=v) = 0.9$.  While from a purely subjective
Bayesian standpoint information is never useless and this seems silly,
it is certainly what humans often do in practice, and usually, they
get away with it \citep{Dempster68} --- for concrete examples see
\cite{GrunwaldH04}. Here is where Safe Probability comes in --- it
tells us that $\prag{P}$ is {\em safe\/} to use, in the following
simple sense: for any function $g: \cU \rightarrow \reals$, we have:
\begin{equation}\label{eq:babycalibration}
\text{for all $P \in \cP^*$, all $v \in \cV$:}\ \ \   
\Exp_{U \sim P}[ g(U)] = \Exp_{U \sim \prag{P}} [g(U) \mid V=v ].
\end{equation}
In particular, if we have a loss function $L: \cU
  \times \cA \rightarrow \reals$ mapping outcomes and actions to associated losses, then, for any action $a \in \cA$, we can plug in $g(U) := L(U,a)$ above and then we find that (assuming $\cP^*$ contains the truth):
\begin{quote}
  DM's predictions are guaranteed to be exactly as good, in
  expectation, as {\em she would expect them to be if $\prag{P}$ were
    actually `true'\/} --- even if $\prag{P}$ is not true at all. 
\end{quote}
We immediately add though that if we had a loss function $L': \cU
\times \cV \times \cA\rightarrow \reals $ which would {\em itself\/}
depend on $V$ (e.g. if $V=1$ DM is offered a different bet on $U$ than
if $V=0$) then the $\prag{P}$ based on ignoring $V$ is not safe any
more --- (\ref{eq:babycalibration}) may not hold any more, and the
actual expectation may be different from DM's. In terms of the
formalism we develop below (Definition~\ref{def:leftsafety}, \ref{def:maina}
and~\ref{def:mainb}), this will be expressed as `$\prag{P}$ is safe
for predicting with loss function $L$ but not loss function $L'$', or,
in formal notation, $\prag{P}$ is safe for
$L(\cdot,a) \mid [V]$ but not for $L'(\cdot, a) \mid [V]$. The
intuitive meaning is that DM can safely use $\prag{P}$ to make
predictions against $L$ (her predictions will be as good as she
expects) but not against $L'$. These statements will be immediate
consequences of the more general statements `$\prag{P}$ is safe for
$U \mid [V]$ but not safe for $U\mid V$'.

In some cases, we will not be able to come up with a $\prag{P}$
satisfying (\ref{eq:babycalibration}), and we have to settle for a
$\prag{P}$ that satisfies a weaker notion of safety, such as, for all
$P \in \cP^*$, all functions $g$,
\begin{equation}\label{eq:babyversion}
\Exp_{V \sim P} \left[ \Exp_{U \sim \prag{P}} \left[ g(U) \mid V\right] \right]
= E_{U \sim P}[g(U)], 
\end{equation}
which says that DM predicts as well on average as DM would expect to predict on average if $\prag{P}$ were true, even though $\prag{P}$ may not be true. 
This will be denoted as `$\prag{P}$ is safe for $U \mid \ave{V}$'; and if (\ref{eq:babyversion}) only holds for $g$  the identity (which makes no difference if 
$|\cU| = 2$, but in general it does) we have the even weaker safety for $\ave{U} \mid \ave{V}$ (Figure~\ref{fig:figure}). In Section~\ref{sec:four} we
  thus obtain five basic notions of safety, varying from weak safety,
  in an average sense, to very strong safety, safety for $U \mid V$,
  which essentially means that $\prag{P}(U \mid V)$ must be the
  correct conditional distribution.
\end{example}
In this example we used frequentist terminology, such as `correct' and
`true', and we  continue to do so in this paper. Still, a subjective
interpretation remains valid in this and future examples as well: if
the DM's real beliefs are given by the full set $\cP^*$, she can
safely act as if her belief is represented by the singleton
$\prag{P}$ as long as she also believes that her loss does not depend
on $V$.

\begin{example}{\bf \ [Calibration]} 
\label{ex:calibration}
Consider the weather forecaster on your local television
station. Every night the forecaster makes a prediction about whether
or not it will rain the next day in the area where you live. She does
this by asserting that the probability of rain is $p$, where $p \in
\{0,0.1, \ldots, 0.9,1 \}$. How should we interpret these
probabilities? The usual interpretation is that, in the long run, on
those days at which the weather forecaster predict probability $p$, it
will rain approximately $100 p \%$ of the time.  Thus, for example,
among all days for which she predicted $0.1$, the fraction of days
with rain was close to $0.1$.  A weather forecaster (DM) with this property
is said to be {\em calibrated\/} \citep{Dawid82,FosterV98}.
Like safety itself, calibration is a {\em minimal\/} requirement: for
example, a weather forecaster who predicts, each day of the year, that
the probability of rain tomorrow is $50\%$ will be approximately
calibrated in the Netherlands, but her predictions are not very useful
--- and it is easily seen that, when using a proper scoring rule,
optimal forecasts are calibrated, but calibrated forecasts can be far
from optimal.  On the other hand, in practice we often see calibrated
weather forecasters that predict well, but do not predict with
anything close to the `truth' --- their predictions depend on
high-dimensional covariates consisting of measurements of air
pressure, temperature etc. at numerous locations in the world, and it
seems quite unlikely (and, for practical purposes, unnecessary!)
that, given any specific values of these covariates, they issue the
correct conditional distribution.  While calibration is usually
defined relative to empirical data, a re-definition in terms of an
underlying set of distributions $\cP^*$ is straightforward
\citep{VovkGS05,GrunwaldH11}, and in Section~\ref{sec:calibration} we
show that the probabilistic definition of calibration has a natural
expression in terms of the safety notions introduced above:
$\prag{P}(U \mid V)$ is calibrated for $U$ if it is safe for $U \mid
[V], V'$, for {\em some\/} $V'$ with $V \determines V'$ (all notation
to be explained) --- which implies that (\ref{eq:babycalibration}) is
itself an instance of calibration.
\end{example}
\begin{example}{\bf \ [Bayesian, Fiducial and Confidence
    Distributions]}
\label{ex:confidence}
We are given a parametric probability model $\cM = \{q_{\theta} \mid
\theta \in \Theta \}$ where $\Theta \subseteq \reals^k$ for some $k
\geq 1$, each $q_{\theta}$ defines a probability density or mass
function on data $(X_1, \ldots, X_N) = X^N$ of sample size $N$, each
outcome $X_i$ taking a value in some space $\cX$.  The goal is to make
inferences about $\theta$, based on the data $X^N$ or some statistic
$S(X^N,N)$ thereof. In the common case with fixed $N=n$ and inference
based on the full data, $S(X^N,N) := X^n$, we can transfer this
statistical scenario to our setup by defining $\cP^*$ as a set of
distributions on $\cZ = \Theta \times \cX^n$. RVs $U$ and $V =
S(X^n,n) = X^n$ are then defined as, for each $z = (\theta,x^n)$,
$U(z) := \theta$ and $V(z) := x^n$. DM employs a set $\Pi$ of prior
distributions on $\Theta$, where each $\pi \in \Pi$ induces a joint
distribution $P_{\pi}$ on $\Theta \times \cX^n$ with marginal on
$\Theta$ determined by $\pi$ and, given $\theta$, density of $x^n$
given by $q_{\theta}$, so that if $\pi$ has density $p_{\pi}$, we get
the joint density $p_{\pi}(\theta, x^n) = p_{\pi}(\theta) \cdot
q_{\theta}(x^n)$.  We set $\cP^* := \{ P_{\pi} : \pi \in \Pi \}$ to be
the set of all such joint distributions.  In the special case in which
DM really is a $100\%$ subjective Bayesian who believes that a single
prior $\pi$ captures all uncertainty, we have that $\cP^* = \{
P_{\pi} \}$ contains just a single joint parameter-data distribution,
and we are in the standard Bayesian scenario. Then DM can set
$\prag{P}(\theta \mid X^n) := P_{\pi}(\theta \mid X^n)$, the standard
posterior, and any type of inference about $\theta$ is safe relative
to $\cP^*$. Here we focus on another special case, in
which $\Pi$ contains exactly one density for each $\theta \in \Theta$,
namely the degenerate distribution putting all its mass on
$\theta$. We denote this distribution by $P_{\theta}$ and notice that
then $\cP^* = \{P_{\theta} : \theta \in \Theta \}$, with
$P_{\theta}(\Theta = \theta) = 1$, and for any measurable set $\cA$, $P_{\theta}(X^n \in \cA)$
determined by density $p_{\theta}$, satisfying
$$p_{\theta}(x^n) = p_{\theta}(x^n \mid \Theta = \theta) =  q_{\theta}(x^n).
$$ 
Still, any choice of pragmatic distribution
$\prag{P}(U \mid V) = \prag{P}(\theta \mid X^n)$ can be interpreted as
a distribution on $U \equiv \Theta$ given the data $X^n$, analogous to
a Bayesian posterior. In Section~\ref{sec:continuous} we investigate
how one can construct distributions $\tilde{P}$ of this kind that are
safe for inference about {\em confidence intervals}. for simplicity we
restrict ourselves to the 1-dimensional case, for which we find that
the construction we provide leads to $\tilde{P}$ that are
confidence-safe, written in our notation as `safe for
$\tilde{F}(U|V) \mid [V]$', with $\tilde{F}$ being the CDF (cumulative
distribution function) of $\tilde{P}(U|V)$. Confidence safety is
roughly the same as coverage \cite{Sweeting01}: it means that the `true'
probability that $\theta$ is contained in a particular type of
$\alpha$-credible sets (sets with `posterior' probability $\alpha$
given the data $V$), is equal to $\alpha$.

The $\tilde{P}$ we construct are essentially equivalent to the {\em
  confidence distributions\/} of \citep{SchwederH02}, that were
designed with the explicit goal of having good confidence properties;
they also often coincide with Fisher's \citeyear{Fisher30} fiducial
distributions, which in later work \citep{Fisher35} he started
treating as ordinary probability distributions that could be used
without any restrictions. This cannot be right (see e.g. \cite[page
514]{Hampel06}), but the question has always remained how a
probability calculus for fiducial distributions could be derived that
incorporates the right restrictions. Our work provides a step in this
direction, in that we show how such $\tilde{P}$ snugly fit into our
general framework: confidence safety is a strictly weaker property
than calibration, and has again a natural representation in terms of
the $\ave{U} \mid \ave{V}$ notation mentioned above. Moreover, it is a
special case of {\em pivotal safety\/} which also has repercussions in
quite different contexts --- see Example~\ref{ex:montyhall}.
\end{example}
The example illustrates two important points:
\begin{enumerate}
\item In some cases the literature suggests some method for
  constructing a pragmatic  $\prag{P}$. An example is the latent Dirichlet
  allocation  model \citep{blei2003latent} mentioned above, in
  which data $V$ are text corpora, $\cP^*$, not explicitly given, is a complicated set of realistic distributions
  over $V$ under which data are non-i.i.d., and the literature
  suggests to take $\prag{P}(U \mid V)$ as the Bayesian posterior for
  a cleverly designed i.i.d. model.
\item In other cases, DM may want to construct a $\prag{P}$ herself.
  In Example~\ref{ex:dilation}, the safe $\prag{P}$ was obtained by
  replacing an (unknown) conditional distribution with a (known)
  marginal --- a special case of what was called $\cC$-conditioning by
  \cite{GrunwaldH11}.  Marginal distributions and distributions that
  ignore aspects of $V$ play a more central role in this construction
  process: they also do in the confidence construction mentioned above,
  where one sets $\prag{P}(U \mid V)$ equal to a distribution such
  that $\prag{P}(U' \mid V)$, where $U'$ is some auxiliary random
  variable (a {\em pivot\/}), becomes independent of $V$. For the
  original RV $U$ though, in the dilation example, DM acts as if $U$
  and $V$ are independent even though they may not be; in the
  confidence distribution example, DM acts in a `dual' manner, namely
  as if $U$ and $V$ are dependent, even though under $\cP^*$ they are
  not --- which is fine, as long as her conclusions are {\em safe}.
\end{enumerate}
\begin{example}{\bf \ [Event-Based Conditioning and Pivotal Safety via 
Monty Hall]} 
\label{ex:montyhall}
More generally, we may look at safety for pragmatic distributions
$\prag{P}$ that condition on events rather than random variables. To
illustrate, consider the Monty Hall Problem
\citep{vossavant90,Gill11}: suppose that you're on a game show and
given a choice of three doors $\{1,2,3\}$ Behind one is a car; behind
the others are goats.  You pick door $1$.  Before opening door $1$,
Monty Hall, the host opens one of the other two doors, say, door $3$
which has a goat.  He then asks you if you still want to take what's
behind door $1$, or to take what's behind door $2$ instead.  Should
you switch?  You may assume that initially, the car was equally likely
to be behind each of the doors and that, after you go to door $1$,
Monty will always open a door with a goat behind.  Basically you
observe either the event $\{1,3\}$ (if Monty opens door $2$) or
$\{1,2\}$ (if Monty opens $3$). You can then calculate your optimal
decision according to some distribution $\prag{P}(\{1\} \mid \cE)$,
where $\cE \in \{ \{1,3\}, \{1,2\} \}$ is the event you
observed. Naive conditioning suggests to take
$P(\{1 \} \mid \{1, 2 \}) = P(\{1 \} \mid \{1, 3 \}) = (1/3) / (2/3) =
1/2$, and it takes a long time to convince most people that this is
wrong --- but, if DM's would adhere to safe probability, then no
convincing and explanation would be needed: translation of the example
into our `safety' setting immediately shows, without any further
thinking about the problem, that this choice of $\prag{P}$ is {\em
  unsafe}, under all notions of safety we consider!
(Section~\ref{sec:general}).  

Another aspect of the Monty Hall problem is that, in most analyses
that are usually viewed as `correct', one implicitly assumes that the
quiz master flips a {\em fair\/} coin to decide whether to open door
$2$ or $3$ if you choose door 1 so that he has a choice. There have
been heated discussions (e.g. on wikipedia talk pages) about whether
this assumption is justified. In Example~\ref{ex:montyhallb} we show
that the $\prag{P}$ which assumes a fair coin flip by Monty is an
instance of a {\em pivotally safe\/} pragmatic distribution. These
have the properties that for many loss functions (including $0/1$-loss
as in Monty Hall), they lead one to making optimal decisions. Thus, while assuming a fair coin flip may be wrong, it is still {\em harmless\/} to base one's decisions upon it. 
\end{example}
\paragraph{Overview of the Paper}
In Section~\ref{sec:discrete}, we treat the case of countable space
$\cZ$, defining the basic notions of safety in
Section~\ref{sec:four} (where we return to dilation), and showing how
calibration can be cleanly expressed using our notions in
Section~\ref{sec:calibration}. In Section~\ref{sec:continuous} we
extend the setting to general $\cZ$, which is needed to handle the
case of confidence safety (Section~\ref{sec:confidence}), pivots
(Section~\ref{sec:pivots}) and squared error optimality, where we
observe continuous-valued random variables. Section~\ref{sec:general}
briefly discusses non-numerical observations as well as probability
updates that cannot be viewed as conditional distributions. We end
with a discussion of further potential applications of safety as well
as open problems. Proofs and further technical details are delegated
to the appendix.

\section{Basic Definitions for Discrete Random
  Variables}\label{sec:discrete}
For simplicity, we introduce our basic notions only considering
countable $\cZ$, which allows us to sidestep measurability issues
altogether.
Thus below, $\cZ$ is countable; we treat
the general case in Section~\ref{sec:continuous}.
\subsection{Concepts and notations regarding distributions on $\cZ$} 
\label{sec:pstar}
We define a random variable (abbreviated to RV) to be any function $X:
\cZ \rightarrow \reals^k$ for some $k > 0$.
Thus RVs can be multidimensional (i.e. what is usually
called `random vector').
By an `$\cY$-valued RV' or
simply `generalized RV' we mean any function mapping $\cZ$ to an arbitrary set
$\cY$.
For two RVs $U = (U_1, \ldots, U_{k_1}), V= (V_1, \ldots, V_{k_2})$
where $U_j$ and $V_j$ are 1-dimensional random variables, we define
$(U,V)$ to be the RV with components $(U_1, \ldots, U_{k_1}, V_1,
\ldots, V_{k_2})$.

For any generalized RVs $U$ and $V$ on $\cZ$ and function $f$ we write
$U \fdetermines V$ if for all $z \in \cZ$, $V(z) = f(U(z))$. We write
$U \determines V$ (``$U$ determines $V$'', or equivalently ``$U$ is a
{\em coarsening\/} of $V$'') if there is a function $f$ such that $U
\fdetermines V$.  
We write $U \onetoones V$ if $U \determines V$
and $V \determines U$.  
For two GRVs $U$ and $V$ we write $U \equiv V$ if they define the same
function on $\cZ$, and for a distribution $P \in \cZ$ we write $U
{=}_P V$ if $P(U=V) =
1$. 
We write $U \fPdetermines{f}{P} V$ if $P(\{z \in \cZ: V(z) = f(U(z)) \}) = 1$, and $U \Pdetermines{P} V$ if there exists some $f$ for which this holds. 
Clearly $U \determines V$ implies that for all distributions $P$ on $\cZ$, 
$U \Pdetermines{P} V$, but not vice versa. 
Let $S: \cZ \rightarrow \cod{S}$ be a function on $\cZ$. The {\em
  range\/} of $S$, denoted $\range{S}$, the {\em support\/} of $S$
under a distribution $P$, and the range of $S$ given that another
function $T$ on $\cZ$ takes value $t$, are denoted as
\begin{align}\label{eq:range}
& \range{S} := 
\{s \in \cod{S}: s= S(z) \text{\ for some $z \in \cZ$} \} \ \ ; \ \  
\support_{P}(S) := \{s \in \cod{S}: P(S=s) > 0 \}, \nonumber \\
& \range{S\mid T=t} = \{ s \in \cod{S}: s=S(z) \text{\ for some $z \in \cZ$ with $t= T(z)$} \}  \end{align}
where we note that $\support_P(S) \subseteq \range{S}$, with equality
if $S$ has full support.
%

For a distribution $P$ on $\cZ$, and $\cU$-valued RV $U$, we write
$P(U)$ as short-hand to denote the distribution of $U$ under $P$
(i.e. $P(U)$ is a probability measure).

We generally omit double brackets, i.e. if we write $P(U,W)$ for RVs
$U$ and $W$, we really mean $P( R)$ where $R$ is the RV $(U,W)$,

Any generalized RV that maps all $z \in \cZ$ to the same constant is
called {\em trivial}, in particular the RV ${\bf 0}$ which maps all $z
\in \cZ$ to $0$.  For an event $\cE \subset \cZ$, we define the {\em
  indicator random variable\/} $\indicator_{\cE}$ to be $1$ if $\cE$
holds and $0$ otherwise.

\paragraph{Conditional Distributions as Generalized RVs}
For given distribution on $\cZ$ and generalized RVs $V$ and $W$, we
denote, for all $v \in \support_P(V)$, $P \mid V=v$ as the conditional
distribution on $\cZ$ given $V=v$, in the standard manner. We further
define $(\cP^* \mid W=w) := \{ (P \mid W=w) : P \in \cP^*, w \in
\support_{P}(W) \}$ to be the set of distributions on $\cZ$ that can
be arrived at from $\cP$ by conditioning on $W=w$, for all $w$
supported by some $P \in \cZ$.

We further denote, for all $v \in \support_P(V)$, $P(U \mid V=v)$ as
the conditional distribution of $U$ given $V=v$, defined as the
distribution on $U$ given by $P(U = u \mid V=v) := P(U=u,V=v)/P(V=v)$
(whereas $P \mid V=v$ is defined as a distribution on $\cZ$, $P(U \mid
V=v)$ is a distribution on the more restricted space $\range{U}$).

Suppose DM is interested in predicting RV $U$ given RV $V$ and does
this using some conditional distribution $P(U \mid V=v)$ (usually this
$P$ will be the `pragmatic' $\prag{P}$, but the definition that
follows holds generally). Adopting the standard convention for
conditional expectation, we call any function from $\range{V}$ to the
set of distributions on $U$ that coincides with $P(U \mid V=v)$ for
all $v \in \support_P(V)$ a {\em version\/} of the conditional
distribution $P(U \mid V)$. If we make a statement of the form `$P(U
\mid V)$ satisfies ...', we really mean `every version of $P(U \mid
V)$ satisfies...'. We thus treat $P(U \mid V)$ as a $\cE$-valued
random variable where $\cE = \{ {P}(U\mid V=v) : v \in \range{V}
\}$, where, for all $z \in \cZ$ with $P(V = V(z)) > 0$, $P(U \mid
V)(z) := P(U \mid V= V(z))$, and $P(U \mid V)(z)$ set to an arbitrary
value otherwise.

\paragraph{Unique and Well-Definedness}
Recall that DM starts with a set $\cP^*$ of distributions on $\cZ$
that she considers the right description of her uncertainty. 
She will predict sume RV $U$ given some generalized RV $V$ using a {\em pragmatic\/} distribution $\prag{P}$. 

For RV $U: \cZ \rightarrow \reals^k$ and generalized RV $V$, we say
that, for given distribution $P'$ on $\cZ$, ${P}'(U \mid V)$ is {\em
 essentially uniquely defined\/} (relative to $\cP^*$) if for all $P \in \cP^*$,
$\support_P(V) \subseteq \support_{P'}(V)$ (so that $P$-almost surely
$V$ takes value $v$ with $P'(V=v) > 0$). We use this definition both
for $P' \in \cP^*$ and for $P' = \prag{P}$; note that we always {\em
  evaluate\/} whether $P'$ is uniquely defined under distributions in
the `true' $\cP^*$ though.

We say that $\Exp_{{P}'}[U \mid V]$ is well-defined if,
writing $U= (U_1, \ldots, U_k)$, and,
$U^+_j = \max \{U_j, 0 \}$, $U^-_j = \max \{ - U_j, 0 \}$, we have,
for $j= 1..k$, either $\Exp_{{P}'}[U^+_j \mid
V] < \infty$ with $P$-probability 1, or $\Exp_{{P}'}[U^-_j \mid V] <
\infty$ with $P$-probability 1. This is a very weak requirement that
ensures that calculating expectations never involves the operation
$\infty - \infty$, making all expectations well-defined.

\paragraph{The Pragmatic Distribution $\prag{P}$} 
We assume that DM makes her predictions based on a probability
distribution $\prag{P}$ on $\cZ$ which we generally refer to as the
{\em pragmatic distribution}.  In practice, DM will usually be
presented with a decision problem in which she has to predict some
fixed RV $U$ based on some fixed RV $V$, and then she is only
interested in the conditional distribution $\prag{P}(U \mid V)$, and
for some other RVs $U'$ and $V'$, $\prag{P}(U' \mid V')$ may be left
undefined. In other cases she only may want to predict the expectation
of $U$ given $V$ --- in that case she only needs to specify
$\Exp_{\prag{P}}[U \mid V]$ as a function of $V$, and all other
details of $\prag{P}$ may be left unspecified. In
Appendix~\ref{app:partial} we explain how to deal with such {\em
  partially specified\/} $\prag{P}$. In the main text though, for
simplicity we assume that $\prag{P}$ is a fully-specified distribution
on $\cZ$; DM can fill up irrelevant details any way she likes. The
very goal of our paper being to restrict $\prag{P}$ to making `safe'
predictions however, DM may come up with $\prag{P}$ to predict $U$
given $V$ and there may be many RVs $U'$ and $V'$ definable on the
domain such that $\prag{P}(U' \mid V')$ has no bearing to $\cP^*$ and
would lead to terrible predictions; as long as we make sure that
$\prag{P}$ is not used for such $U'$ and $V'$ --- which we will ---
this will not harm the DM.

\subsection{The  Basic Notions of Safety}
\label{sec:four}
All our subsequent notions of `safety' will be constructed in terms of
the following first, simple definitions. 
\begin{definition}\label{def:leftsafety}
  Let $\cZ$ be an outcome space and $\cP^*$ be a set of distributions
  on $\cZ$, let $U$ be an RV and $V$ be a generalized RV on $\cZ$, and
  let $\prag{P}$ be a distribution on $\cZ$.  We say that $\prag{P}$
  is {\em safe\/} for $\ave{U}_{\we} \mid \wksafe{V}_{\we}$ (pronounced
  as `$\prag{P}$ is safe for {\em predicting\/} $\ave{U}$ {\em
    given\/} $\wksafe{V}$'), if
\begin{equation}\label{eq:startsafer}
\text{for all $P \in \cP^*$}: 
\inf_{v \in \support_{\prag{P}}(V)} \Exp_{\prag{P}} [U | V=v] \leq 
\ \Exp_P[U] \leq \sup_{v \in \support_{\prag{P}}(V)} \Exp_{\prag{P}} [U | V=v].
\end{equation}
We say that $\prag{P}$ is {\em
    safe\/} for $\ave{U}_{\we} \mid \ave{V}_{\we}$, if 
\begin{equation}\label{eq:startsafea}
\text{for all $P \in \cP^*$}: \ \Exp_P[U] = \Exp_{P}[ \Exp_{\prag{P}} [U | V]].
\end{equation}
We say that $\prag{P}$ is {\em
    safe\/} for $\ave{U}_{\we} \mid \cali{V}$, if (\ref{eq:startsafer}) holds with both inequalities replaced by an equality, i.e. for all $v \in \support_{\tilde{P}}(V)$, 
\begin{equation}\label{eq:startsafes}
\text{for all $P \in \cP^*$}: \ \Exp_P[U] =  \Exp_{\prag{P}} [U | V = v].
\end{equation}
\end{definition}
In this definition, as in all definitions and results to come,
whenever we write `$\langle$ statement $\rangle$' we really mean `all
conditional probabilities in the following statement are essentially uniquely
defined, all expectations are well-defined, and $\langle$ statement
$\rangle$'.  Hence, (\ref{eq:startsafea}) really means `for all
$P \in \cP^*$, $\prag{P}(U|V)$ is essentially uniquely defined,
$\Exp_{\prag{P}} [U | V]$, $\Exp_P[U]$, and
$\Exp_{P}[ \Exp_{\prag{P}} [U | V]]$ are well-defined, and the latter
two are equal to each other'.  Also, when we wrote $\prag{P}$ is safe
for $\ave{U}_{\we} \mid \ave{V}_{\we}$, we really meant that it is
safe for $\ave{U}_{\we} \mid \ave{V}_{\we}$ {\em relative to the given
  $\cP^*$}; we will in general leave out the phrase `relative to
$\cP^*$', whenever this cannot cause confusion.

To be fully clear about notation, note that in double expectations
like in (\ref{eq:startsafea}), we consider the right random variable
to be bound by the outer expectation; thus it can be rewritten in any
of the following ways:
\begin{align*}
  \Exp_{U \sim P}[U] & =   \Exp_{V \sim P}  \Exp_{U \sim \prag{P} \mid V}[U] \\
  \Exp_{V \sim P} \Exp_{U \sim P \mid V} [U] & = \Exp_{V \sim P}
  \Exp_{U \sim \prag{P} \mid V} [U] \\ 
\sum_{u \in \range{U}} P(U=u) \cdot u & = \sum_{v \in
      \range{V}} P(V=v) \cdot \sum_{u \in \range {U}} \prag{P}(U=u \mid
    V=v) \cdot u,
\end{align*}
where the second equality follows from the tower property of
conditional expectation.  

\paragraph{Towards a Hierarchy} It is immediately seen that, if
$\tilde{P}$ is safe for $\ave{U} \mid \cali{V}$, then it is also safe
for $\ave{U} \mid \ave{V}$, and if it is safe for $\ave{U} \mid
\ave{V}$, then it is also safe for $\ave{U} \mid \wksafe{V}$.  Safety
for $\ave{U} \mid \wksafe{V}$ is thus the weakest notion --- it allows
a DM to give valid upper- and lower-bounds on the actual expectation
of $U$, by quoting $\sup_{v \in \support_{\tilde{P}}(V)}
\Exp_{\tilde{P}}[U \mid V=v]$ and $\inf_{v \in
  \support_{\tilde{P}}(V)} \Exp_{\tilde{P}}[U \mid V=v]$,
respectively, but nothing more. It will hardly be used here, except
for a remark below Theorem~\ref{thm:confidence}; it plays an important
role though in applications of safety to hypothesis testing, on which
we will report in future work.

Safety for $\ave{U} \mid \ave{V}$ evidently bears relations to {\em
  unbiased\/} estimation: if $\prag{P}$ is safe for $\ave{U} \mid
\ave{V}$, i.e. (\ref{eq:startsafea}) holds, then we can think of
$\Exp_{\prag{P}} [U | V]$ as an unbiased estimate, based on observing
$V$, of the random quantity $U$ (see also Example~\ref{ex:supernormal}
later on).  Safety for $\ave{U} \mid \cali{V}$ implies that all
distributions in $\cP^*$ agree on the expectation of $U$ and that
$\Exp_{\tilde{P}}[U \mid V=v]$ is the same for (essentially) all
values of $v$, and is thus a much stronger notion.

\begin{example}{\bf [Dilation: Example~\ref{ex:dilation}, Cont.]}
\label{ex:dilationb}
The first application of definition (\ref{eq:startsafea}) was already
given in Example~\ref{ex:dilation}, where we used a $\prag{P}$ that
ignored $V$ and was safe for $\ave{U} \mid \ave{V}$ and $\ave{U} \mid
\cali{V}$, as we see from (\ref{eq:babyversion}) with
$g$ the identity.  Let us extend the example, replacing $\cU =
\{0,1\}$ in that example by $\cU = \{0,1,2\}$, with $\cP^*$ again
defined as the set of all distributions satisfying
(\ref{eq:firstconstraint}) and $\prag{P}$ defined by, for $v \in
\{0,1\}$, $\prag{P}(U=1 \mid V= v) = 0.9$, $\prag{P}(U=2 \mid V = v) =
0.09$. Then $\prag{P}$ would still be safe for $\ave{\indicator_{U=1}}
\mid \ave{V}$, but not for $\ave{U} \mid \ave{V}$: $\cP^*$ contains a
distribution whose marginal distribution $P(U=2) = 0$, and
(\ref{eq:startsafea}) would not hold for that distribution.
\end{example}
Comparing the `safety condition' (\ref{eq:babyversion}) in
Example~\ref{ex:dilation} to (\ref{eq:startsafea}) in
Definition~\ref{def:leftsafety} we see that
Definition~\ref{def:leftsafety} only imposes a requirement on
expectations of $U$ whereas (\ref{eq:babyversion}) imposed a
requirement also on RVs $U'$ equal to functions $g(U)$ of $U$.  For
$\cU$ with more than two elements as in Example~\ref{ex:dilationb}
above, such a requirement is strictly stronger. We now proceed to
define this stronger notion formally.
\begin{definition}
\label{def:maina} 
Let $\cZ, \cP^*$, $U, V$ and $\prag{P}$ be as above.  We say that
$\prag{P}$ is {\em safe\/} for $U \mid_{\we} \wksafe{V}$ if for all RVs
$U'$ with $U \determines U'$, $\prag{P}$ is safe for $\ave{ U'}
\mid_{\we} \wksafe{V}$. \\ Similarly, $\prag{P}$ is safe for $U \mid
\ave{V}$ if for all RVs $U'$ with $U \determines U'$, $\prag{P}$ is
safe for $\ave{ U'} \mid_{\we} \ave{V}$, and $\prag{P}$ is safe for $U
\mid \cali{V}$ if for all RVs $U'$ with $U \determines U'$, $\prag{P}$
is safe for $\ave{ U'} \mid_{\we} \cali{V}$.
\end{definition}
We see that safety of $\prag{P}$ for $U \mid \cali{V}$ implies that
$\Exp_{\tilde{P}}[g(U) \mid V=v]$ is the same for all values of $v$ in
the support of $\tilde{P}$, and all functions $g$ of $U$. This can
only be the case if $\tilde{P}(U \mid V)$ {\em ignores\/} $V$,
i.e. $\tilde{P}(U \mid V=v) = \tilde{P}(U)$, for all supported $v$.
We must then also have that, for all $v \in \support_{\tilde{P}}(V)$,
that $\tilde{P}(U) = P(U)$, which means that all distributions in
$\cP^*$ agree on the marginal distribution of $U$, and $\tilde{P}(U)$
is equal to this marginal distribution.  Thus, $\prag{P}$ is safe for
$U \mid \cali{V}$ iff it is {\em marginally valid}. A prime example of
such a $\prag{P}(U \mid V)$ that ignores $V$ and is marginally correct
is the $\prag{P}(U \mid V)$ we encountered in
Example~\ref{ex:dilation}.

To get everything in place, we need a final definition. 
\begin{definition}
\label{def:mainb} 
Let $\cZ, \cP^*$, $U, V$ and $\prag{P}$ be as above, and let $W$ be
another generalized RV.  
\begin{enumerate} 
\item We say that $\prag{P}$ is {\em safe\/} for
$\ave{U} \mid_{\we} \wksafe{V}, W$ if for all $w \in
\support_{\prag{P}}(W)$, $\prag{P} \mid W=w$  
is safe for $\ave{U} \mid \wksafe{V}$ relative to $\cP^* \mid W=w$. 
We say that $\prag{P}$ is {\em safe\/} for
${U} \mid_{\we} \wksafe{V}, W$  if for all RVs
$U'$ with $U \determines U'$, $\prag{P}$ is safe for  $\ave{U'} \mid_{\we} \wksafe{V}, W$. 
\item The same definitions apply with $\wksafe{V}$ replaced by $\ave{V}$ and $\cali{V}$. 
\item We say that $\prag{P}$ is {\em safe\/} for $\ave{U} \mid_{\we}
  W$ if it is safe for $\ave{U} \mid_{\we} \wksafe{{\bf 0}}, W$; it is
  safe for $U \mid_{\we} W$ if it is safe for ${U} \mid_{\we}
  \wksafe{{\bf 0}}, W$.
\end{enumerate}
\end{definition}
These definitions simply say that safety for `$.. | .., W$' means that
the space $\cZ$ can be partitioned according to the value taken by
$W$, and that for each element of the partition (indexed by $w$) one
has `local' safety given that one is in that element of the partition.

Proposition~\ref{prop:newstart} gives  reinterpretations of some of the notions above. 
The first one, (\ref{eq:startsafec}) will mostly be useful for the proof of other
results; the other three serve to make the original definitions more  transparent: \begin{proposition}{\bf \ [Basic Interpretations of Safety]}\label{prop:newstart}
  Consider the setting above. We have:
\begin{enumerate}
\item $\prag{P}$ is safe for
$U \mid_{\we} \ave{V}$ iff for all $P \in \cP^*$, there exists a  
distribution
  $P'$ on $\cZ$ with for all $(u,v) \in \range{(U,V)}$, 
$P'(U=u , V=v) = \prag{P}(U=u \mid V=v) 
\cdot P(V=v)$, that satisfies
\begin{equation}\label{eq:startsafec}
P'(U) =  P(U).
\end{equation}
\item $\prag{P}$ is safe for $\ave{U} \mid_{\we} V$ iff for all $P \in \cP^*$, 
\begin{equation}
\label{eq:startsafeb}
\Exp_{P}[U \mid V] =_{P} \Exp_{\prag{P}}[ U \mid V].
\end{equation} 
\item $\prag{P}$ is safe for $U \mid V$ iff for all $P \in \cP^*$,
\begin{equation}\label{eq:guaranteed}
{P}(U\mid V)=_{P} \prag{P}(U \mid V).
\end{equation}
\item 
$\prag{P}$ is safe for $U \mid [V],W$ iff for all $P \in \cP^*$,
\begin{equation}\label{eq:calibrationagain}
P(U \mid W) =_P \prag{P}(U \mid V,W).
\end{equation}
\end{enumerate}\end{proposition}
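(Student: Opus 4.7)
The plan is to prove each of the four items by unpacking Definitions~\ref{def:leftsafety}--\ref{def:mainb} and then applying a single unifying device: because $\cZ$ is countable, quantifying safety over every coarsening $U'$ of $U$ and specializing to the indicator coarsenings $U' = \indicator_{U=u}$ collapses expectation statements into pointwise probability statements, at the cost only of a countable union of $P$-null sets.

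For (1), I would define $P'$ by the displayed formula $P'(U=u,V=v) := \prag{P}(U=u\mid V=v)\cdot P(V=v)$ and observe that by construction $\Exp_{P'}[g(U)] = \Exp_P[\Exp_{\prag{P}}[g(U)\mid V]]$ for every function $g$ on $\range{U}$. Hence safety for $\ave{U'}\mid\ave{V}$ via (\ref{eq:startsafea}), applied to every coarsening $U'=g(U)$, is equivalent to $\Exp_{P'}[g(U)] = \Exp_P[g(U)]$ for all such $g$. The forward direction then follows by setting $g=\indicator_{\{u\}}$ for each $u\in\range{U}$; the reverse is immediate since two distributions on $U$ agreeing on all singletons agree on all functions. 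For (2), I would unfold Definitions~\ref{def:mainb}(3) and~(1): safety for $\ave{U}\mid V$ means that for every $v\in\support_{\prag{P}}(V)$, $\prag{P}\mid V=v$ is safe for $\ave{U}\mid\wksafe{{\bf 0}}$ relative to $\cP^*\mid V=v$. Since ${\bf 0}$ has a single supported value, the infimum and supremum in (\ref{eq:startsafer}) coincide and the condition reduces to $\Exp_{\prag{P}}[U\mid V=v] = \Exp_P[U\mid V=v]$; quantifying over $v\in\support_P(V)$ (which lies in $\support_{\prag{P}}(V)$ by essential unique definedness) yields exactly (\ref{eq:startsafeb}).

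For (3), I would apply (2) to every coarsening $U'$ of $U$: setting $U'=\indicator_{U=u}$ gives $P(U=u\mid V) =_P \prag{P}(U=u\mid V)$ for every $u\in\range{U}$, and since $\range{U}$ is countable the exceptional $P$-null sets combine to a single $P$-null set, yielding (\ref{eq:guaranteed}); the converse is immediate. For (4), I would unpack Definition~\ref{def:mainb}(1) to read: safety for $U\mid\cali{V},W$ means that for every $w\in\support_{\prag{P}}(W)$, $\prag{P}\mid W=w$ is safe for $U\mid\cali{V}$ relative to $\cP^*\mid W=w$. I would then invoke the characterization already established in the paragraph preceding Definition~\ref{def:mainb}, namely that safety for $U\mid\cali{V}$ is equivalent to marginal validity: $\prag{P}(U\mid V=v) = \prag{P}(U)$ for every supported $v$ and $\prag{P}(U) = P(U)$ for every $P$ in the reference set. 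Applied slice by slice in $w$, this gives $\prag{P}(U\mid V,W) =_P P(U\mid W)$, which is (\ref{eq:calibrationagain}). The main obstacle I anticipate is not conceptual but notational: making sure that essential unique definedness and well-definedness of expectations are maintained throughout so that the $=_P$ statements hold on sets of full $P$-measure; in the countable setting of Section~\ref{sec:discrete} this just amounts to taking countable unions of $P$-null sets, which is routine.
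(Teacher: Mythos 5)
Your proposal is correct and takes essentially the same route as the paper's appendix proof: unfold Definitions~\ref{def:leftsafety}--\ref{def:mainb} and specialize the coarsening quantifier to indicator functions of $U$ so that expectation identities become distributional ones (the paper uses $\indicator_{U \leq \vec{a}}$ where you use $\indicator_{U=u}$, and it obtains item~3 as the special case $V \equiv {\bf 0}$, $W := V$ of item~4 rather than from item~2, but in the countable setting these differences are cosmetic). The only point worth noting is that your singleton indicators exploit countability of $\range{U}$, whereas the paper's CDF-type indicators carry over verbatim to the general-$\cZ$ extension of Section~\ref{sec:continuous}, where the proposition (in particular (\ref{eq:calibrationagain})) is reused.
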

Together with the preceding definitions, this proposition establishes
the arrows in Figure~\ref{fig:figure} from $U \mid V$ to $\ave{U} \mid
V$, from $\ave{U} \mid V$ to $\ave{U} \mid \ave{V}$ and from $U \mid
\ave{V}$ to $\ave{U} \mid \ave{V}$. The remaining arrows will be
established by Theorem~\ref{thm:first} and~\ref{thm:confidence}.

Note that (\ref{eq:calibrationagain}) says that $\tilde{P}$ is safe
for $U \mid {\cali V}, W$ if $\prag{P}$ ignores $V$ {\em given\/} $W$,
i.e.  according to $\prag{P}$, $U$ is conditionally independent of $V$
given $W$. Thus, $\prag{P}$ can be safe for $U \mid [V],W$ and still
$\prag{P}(U\mid V)$ may depend on $V$; the definition only requires
that $V$ is ignored once $W$ is given.

(\ref{eq:guaranteed}) effectively expresses that $\prag{P}(U \mid V)$
is valid (a frequentist might say `true') for predicting $U$ based on
observing $V$, where as always we assume that $\cP^*$ itself correctly
describes our beliefs or potential truths (in particular, if $\cP^* =
\{P\}$ is a singleton, then any $\prag{P}(U \mid V)$ which coincides
a.s. with $P(U \mid V)$ is automatically valid). Thus, `validity for
$U \mid V$', to be interpreted as {\em $\prag{P}$ is a valid
  distribution to use when predicting $U$ given observations of $V$\/}
is a natural name for safety for $U\mid V$ . We also have a natural
name for safety for $\ave{U} \mid V$: for 1-dimensional $U$,
(\ref{eq:startsafeb}) simply expresses that all distributions in
$\cP^*$ agree on the conditional expectation of $U \mid V$, and that
$\Exp_{\prag{P}}[U\mid V]$ is a version of it.  which implies (see
e.g. \cite{Williams91}) that, with the function $g(v) :=
\Exp_{\prag{P}}[U\mid V=v]$,\begin{equation} \Exp_{(U,V) \sim P}[(U -
  g(V))^2] = \min_{f} \Exp_{(U,V) \sim P}[(U - f(V))^2],
\end{equation}
the minimum being taken over all functions from $\range{V}$ to
$\reals$. This means that $\prag{P}$ encodes the {\em optimal\/}
regression function for $U$ given $V$ and hence suggests the name {\em squared-error optimality}. Summarizing the names we encountered (see Figure~\ref{fig:figure}): 
\begin{definition}{\bf [(Potential) Validity, Squared Error-Optimality, Unbiasedness, Marginal Validity]}\label{def:names}
  If $\prag{P}$ is safe for $U \mid_{\we} V$,
  i.e. (\ref{eq:guaranteed}) holds for all $P \in \cP^*$, then we also
  call $\prag{P}$ {\em valid\/} for $U \mid V$ (again, pronounce as
  `valid for {\em predicting\/} $U$ {\em given\/} $V$'). If
  (\ref{eq:guaranteed}) holds for {\em some\/} $P \in \cP^*$, we call
  $\prag{P}$ {\em potentially valid\/} for $U \mid V$.  If $\prag{P}$
  is safe for $\ave{U} \mid_{\we} V$, we call $\prag{P}$ {\em squared
    error-optimal\/} for $U \mid V$. If $\prag{P}$ is safe for
  $\ave{U} \mid_{\we} \ave{V}$, we call $\prag{P}$ {\em unbiased\/}
  for $U \mid V$. If $\prag{P}$ is safe for $\ave{U} \mid [V]$, we say
  that it is {\em marginally valid\/} for $U \mid V$.
\end{definition}
It turns out that there also is a natural name for safety for $U \mid
\cali{V}, W$ whenever $V \determines W$.  The next example reiterates
its importance, and the next section will provide the name: {\em
  calibration}.
\begin{example}
  Suppose $\tilde{P}$ is safe for $U \mid \cali{V_1}, V_2$. From
  Proposition~\ref{prop:newstart}, (\ref{eq:calibrationagain}) we see
  that this means that for all $P \in \cP^*$, all
  $v_1,v_2 \in \support_P(V_1,V_2)$, that
\begin{equation}\label{eq:validityrevisited}
\Exp_{P}[U' \mid V_2= v_2] = \Exp_{\prag{P}}[U' \mid V_1 = v_1, V_2 = v_2],
\end{equation}
The special case with $V_2 \equiv {\bf 0}$ has already been
encountered in Example~\ref{ex:dilation}, (\ref{eq:babycalibration}).
As discussed in that example, for $V_2 \equiv {\bf 0}$, (\ref{eq:validityrevisited}) 
expresses our basic interpretation of
safety that {\em predictions based on $\prag{P}$ will always be as
  good, in expectation, as the DM who uses $\prag{P}$ expects them to
  be}. Clearly this continues to be the case if (\ref{eq:validityrevisited}) holds for some nontrivial $V_2$.
\end{example}
\subsection{Calibration Safety}\label{sec:calibration}
In this section, we show that {\em calibration}, as informally defined
in Example~\ref{ex:calibration}, has a natural formulation in terms of
our safety notions. We first define calibration formally, and then, in
our first main result, Theorem~\ref{thm:first}, show how being
calibrated for predicting $U$ based on observing $V$ is essentially
equivalent to being safe for $U \mid [V], V'$ for some types of $V'$ that
need not be equal to $V$ itself, including $V' \equiv {\bf 0}$. Thus, we now effectively unify the ideas underlying Example~\ref{ex:dilation} (dilation)
and Example~\ref{ex:calibration} (calibration).
 
Following \cite{GrunwaldH11} we
define calibration directly in terms of distributions rather than
empirical data, in the following way: 
\begin{definition}{\bf \ [Calibration]\ } \label{def:calibration} Let
  $\cZ$, $\cP^*$, $U$, $V$ and $\prag{P}$ be as above.  We say that
  $\prag{P}$ is {\em calibrated\/} (or {\em calibration--safe\/}) for
  $\ave{U} \mid V$ if for all $P \in \cP^*$, all ${\mu} \in \{
  \Exp_{\prag{P}}[U \mid V= v]: v \in \support_P( V )\}$,
\begin{equation}\label{eq:weakcalibrationsafe}
\Exp_{P}[U  \mid \; \Exp_{\prag{P}}[U \mid V] = {\mu}\;] =  {\mu}.
\end{equation}
We say that $\prag{P}$ is
  {\em calibrated\/} for ${U} \mid V$ if  for all $P \in \cP^*$, all
  ${p} \in \{ \prag{P}(U \mid V= v): v \in \support_P( V )\}$,
\begin{equation}\label{eq:calibrationsafe}
P(U  \mid \; \prag{P}(U \mid V) = {p}\;) =  {p}
\end{equation}
\end{definition}
Hence, calibration (for $U$) means that given that a DM who uses
$\prag{P}$ {\em predicts\/} a specific distribution for $U$, the {\em
  actual\/} distribution is indeed equal to the predicted
distribution. Note that here we once again treat
$\prag{P}(U \mid V)$ as a generalized RV. 

In practice we would want to weaken Definition~\ref{def:calibration}
to allow some slack, requiring the $\mu$ (viz. $p$) inside the
conditioning to be only within some $\epsilon > 0$ of the $\mu$
(viz. $p$) outside, but the present idealized definition is sufficient
for our purposes here. Note also that the definition refers to a
simple form of calibration, which does not involve selection rules
based on past data such as used by, e.g., \cite{Dawid82}.

We now express calibration in terms of our safety
notions. We will only do this for the `full distribution'--version (\ref{eq:calibrationsafe}); a similar result can be established for the average-version.

\begin{theorem}\label{thm:first} 
  Let $U, V$ and $\prag{P}$ be as above. The following three statements are equivalent:
\begin{enumerate} \item  $\prag{P}$ is calibrated
  for ${U} \mid_{\we} {V}$; 
\item There exists a RV $V'$ on $\cZ$ with $V \Pdetermines{\prag{P}} V'$ such that
$\prag{P}$ is safe for $U \mid \cali{V} ,V'$
%
\item $\prag{P}$ is safe for ${U} \mid {V''}$ where
$V''$ is the generalized RV given by $V'' \equiv \prag{P}(U \mid V)$.
\end{enumerate}
\end{theorem}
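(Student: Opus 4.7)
The plan rests on a single observation that I would make precise at the start: the generalized RV $V'' \equiv \prag{P}(U \mid V)$ depends only on $V$ (it is a deterministic function of $V$, under every distribution), and by direct computation $\prag{P}(U \mid V'' = p) = p$ for every $p$ in the support, because conditioning $\prag{P}$ on $V'' = p$ restricts to those $V$-values $v$ for which $\prag{P}(U \mid V = v) = p$, each of which yields $U$-distribution $p$. Thus, as a generalized RV, $\prag{P}(U \mid V'') \equiv V''$. Everything else then reduces to a careful application of Proposition~\ref{prop:newstart}.

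With this in hand, the equivalence (1) $\Leftrightarrow$ (3) is almost a restatement. By Proposition~\ref{prop:newstart}(3), safety of $\prag{P}$ for $U \mid V''$ means that for all $P \in \cP^*$, $P(U \mid V'') =_P \prag{P}(U \mid V'') = V''$, i.e., $P(U \mid V''=p) = p$ for $P$-almost all values $p$ in the support. But this is precisely calibration (\ref{eq:calibrationsafe}), since $\{p : p = \prag{P}(U \mid V = v), v \in \support_P(V)\}$ is exactly the set of values $V''$ takes on $\support_P(V)$.

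For (3) $\Rightarrow$ (2), I would simply take $V' := V''$. Then $V \Pdetermines{\prag{P}} V'$ is automatic since $V''$ is a function of $V$. Safety for $U \mid [V], V''$ follows from Proposition~\ref{prop:newstart}(4): we need $P(U \mid V'') =_P \prag{P}(U \mid V, V'')$, and the right-hand side equals $\prag{P}(U \mid V) = V''$ because $V''$ is $\prag{P}$-determined by $V$; the left-hand side equals $V''$ by (3), as computed above.

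The direction (2) $\Rightarrow$ (1) is the only one requiring a small detour. Given $V'$ with $V \Pdetermines{\prag{P}} V'$ and safety for $U \mid [V], V'$, Proposition~\ref{prop:newstart}(4) gives $P(U \mid V') =_P \prag{P}(U \mid V, V') = \prag{P}(U \mid V) = V''$ (the second equality uses that $V'$ is a $\prag{P}$-function of $V$). So under every $P \in \cP^*$, the conditional distribution $P(U \mid V')$ is $P$-a.s.\ a function of $V''$; in particular $V''$ is itself a $P$-a.s.\ function of $V'$ (since $V''$ equals this conditional distribution). Then by the tower property, for $p$ in the support,
\[
P(U \mid V'' = p) \;=\; \Exp_P\bigl[\, P(U \mid V', V'') \,\big|\, V'' = p\,\bigr] \;=\; \Exp_P\bigl[\, P(U \mid V') \,\big|\, V'' = p\,\bigr] \;=\; \Exp_P[V'' \mid V'' = p] \;=\; p,
\]
which is calibration. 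The main obstacle throughout is the bookkeeping between $\prag{P}$-a.s.\ functional relationships (such as $V \Pdetermines{\prag{P}} V'$ and hence $V'$ $\prag{P}$-determining $V''$) and $P$-a.s.\ statements; this is precisely what the standing convention on essential unique-definedness (stated beneath Definition~\ref{def:leftsafety}) lets one bridge, since under that convention $\support_P(V) \subseteq \support_{\prag{P}}(V)$ so that any $\prag{P}$-a.s.\ equality on $V$-outcomes carries over to a $P$-a.s.\ equality for every $P \in \cP^*$.
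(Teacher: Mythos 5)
Your proposal is correct, and it reaches the same endpoints as the paper while organizing the argument differently. The equivalence $(1) \Leftrightarrow (3)$ is handled just as in the paper (unfold calibration via Proposition~\ref{prop:newstart}, part 3), except that you actually prove the identity $\prag{P}(U \mid V'') \equiv V''$ by the mixture argument, where the paper asserts it as ``almost immediate''. The difference lies in how the two directions involving (2) are closed: the paper routes both through Proposition~\ref{prop:ignore} --- using its equivalence $(3)\Rightarrow(1)$ for the implication $(3)\Rightarrow(2)$, and its implication $(1)\Rightarrow(4)$ together with its final part for $(2)\Rightarrow(3)$ --- whereas you bypass Proposition~\ref{prop:ignore} entirely, verifying $(3)\Rightarrow(2)$ by directly checking the criterion (\ref{eq:calibrationagain}) of Proposition~\ref{prop:newstart}, part 4 (both sides reduce to $V''$), and closing the cycle with $(2)\Rightarrow(1)$ via a tower-property computation under each $P \in \cP^*$, using that $P(U\mid V') =_P V''$ forces $V''$ to be a $P$-a.s.\ function of $V'$. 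What the paper's route buys is reuse of Proposition~\ref{prop:ignore}, which it regards as of independent interest (the ``ignoring'' characterizations); what your route buys is a shorter, self-contained proof of the theorem whose almost-sure bookkeeping is handled, as you note, by the standing convention ensuring $\support_P(V)\subseteq\support_{\prag{P}}(V)$. One cosmetic remark: in $(3)\Rightarrow(2)$ you take $V' := V''$, which is distribution-valued rather than an RV in the paper's strict real-vector sense, but the paper's own proof makes exactly the same choice, so this is not a gap relative to the intended reading.
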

Note that, since safety for $U \mid V$ implies safety for $U \mid
\cali{V}, V'$ for $V'= V$, {\em (2.) $\Rightarrow$ (1.)\/} shows that
safety for $U \mid V$ implies calibration for $U \mid V$. By mere
definition chasing (details omitted) one also finds that {\em (2.)\/}
implies that $\tilde{P}$ is safe for $U \mid \ave{V},V'$ and, again by
definition chasing, that $\tilde{P}$ is safe for $U \mid
\ave{V}$. Thus, this result establishes two more arrows of the
hierarchy of Figure~\ref{fig:figure}.  Its proof is based on the
following simple result, interesting in its own right:
\begin{proposition} \label{prop:ignore} Let $V$ and $V'$ be
  generalized RVs such that $V \fPdetermines{f}{\prag{P}} V'$ for some function $f$.  The following statements are
  equivalent:
\begin{enumerate} 
\item $\prag{P}(U \mid V,V')$ ignores $V$, i.e. $\prag{P}(U \mid V,V') =_{\prag{P}} \prag{P}(U \mid V')$. 
\item For all $v' \in \support_{\prag{P}}(V')$, 
for all $v \in \support_{\prag{P}}(V)$ with $f(v) = v'$: $\prag{P}(U \mid V =
v) = P(U \mid V'= v')$.
\item $V' \Pdetermines{\prag{P}} \prag{P}(U \mid V)$
\item $V' \Pdetermines{\prag{P}} V''$ and $\prag{P}(U \mid V',V'')$
  ignores $V'$, where $V'' = \prag{P}(U \mid V)$.
\end{enumerate} Moreover, if $\prag{P}$
is safe for $U \mid V$ and $\prag{P}(U \mid V, V')$ ignores $V$, then $\prag{P}$ is safe for $U \mid V'$.
\end{proposition}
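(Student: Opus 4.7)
The plan is to establish (1) $\Leftrightarrow$ (2) $\Leftrightarrow$ (3), then (1) $\Leftrightarrow$ (4), all by definition-chasing inside $\prag{P}$, and finally to derive the moreover clause as a transfer argument from $\prag{P}$-almost sure to $P$-almost sure. Throughout, I would use the hypothesis $V \fPdetermines{f}{\prag{P}} V'$ in the form $\prag{P}(U \mid V, V') =_{\prag{P}} \prag{P}(U \mid V)$, so that conditioning on $(V, V')$ collapses $\prag{P}$-almost surely to conditioning on $V$ alone.

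For (1) $\Leftrightarrow$ (2), the observation just made reduces (1) to $\prag{P}(U \mid V) =_{\prag{P}} \prag{P}(U \mid V')$, and this is precisely the pointwise claim in (2) that $\prag{P}(U \mid V=v)$ is constant on each fiber $f^{-1}(v')$, with common value $\prag{P}(U \mid V'=v')$. For (2) $\Leftrightarrow$ (3), I would note that the asserted factorisation of $v \mapsto \prag{P}(U \mid V=v)$ through $f$ is by definition the statement $V' \Pdetermines{\prag{P}} \prag{P}(U \mid V)$.

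For (1) $\Leftrightarrow$ (4), I would set $V'' \equiv \prag{P}(U \mid V)$ and argue as follows. Under (1), $V'' =_{\prag{P}} \prag{P}(U \mid V')$ is $V'$-measurable, so $V' \Pdetermines{\prag{P}} V''$, and then $\prag{P}(U \mid V', V'') =_{\prag{P}} \prag{P}(U \mid V') =_{\prag{P}} V''$. A short direct calculation shows that for each $p \in \support_{\prag{P}}(V'')$, conditioning on $\{V''=p\}$ (a union of $V$-fibers on each of which $\prag{P}(U\mid V)$ equals $p$) gives $\prag{P}(U \mid V''=p) = p$, so $\prag{P}(U \mid V'') =_{\prag{P}} V''$ as well. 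Hence both sides of the ``ignoring'' identity equal $V''$ $\prag{P}$-almost surely, which is (4). Conversely, (4) contains $V' \Pdetermines{\prag{P}} V''$, which is (3).

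For the moreover clause, safety of $\prag{P}$ for $U \mid V$ is $P(U \mid V) =_P \prag{P}(U \mid V)$ for every $P \in \cP^*$ by Proposition~\ref{prop:newstart}. The essential-uniqueness convention lets me transfer the $\prag{P}$-almost sure identity in (1) to a $P$-almost sure identity, giving $P(U \mid V) =_P \prag{P}(U \mid V')$. Taking $\Exp_P[\cdot \mid V']$ of both sides, using the tower property (valid because $V' = f(V)$) and the $V'$-measurability of $\prag{P}(U \mid V')$, yields $P(U \mid V') =_P \prag{P}(U \mid V')$, which is safety for $U \mid V'$. The main obstacle I expect is precisely this transfer step: statements (1)--(4) are internal to $\prag{P}$, but the moreover clause must bridge $\prag{P}$ and an arbitrary $P \in \cP^*$, and it is the essential-uniqueness convention built into the paper's definitional conventions that makes the transfer go through cleanly. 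Within (1)--(4), the one subtle point is the identity $\prag{P}(U \mid V'') = V''$, which at first looks circular because $V''$ is distribution-valued, but becomes transparent once $\{V''=p\}$ is unpacked as the union of $V$-fibers on which $\prag{P}(U \mid V=v) = p$.
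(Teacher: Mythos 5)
Your proof is correct and takes essentially the same approach as the paper's: both rely on the collapse $\prag{P}(U\mid V,V')=_{\prag{P}}\prag{P}(U\mid V)$ from $V\fPdetermines{f}{\prag{P}}V'$, on the identity $\prag{P}(U\mid V'')=_{\prag{P}}V''$ with $V''\equiv\prag{P}(U\mid V)$, and on a disintegration/tower-property argument (the paper phrases it as an explicit mixture over fibers) for the moreover clause. The only minor gloss is that (3) $\Rightarrow$ (2) is not literally ``by definition'': it needs the same short tower/mixture step you already carry out for $\prag{P}(U\mid V'')=_{\prag{P}}V''$ to pin the factoring function down to $v'\mapsto\prag{P}(U\mid V'=v')$, rather than just \emph{some} function of $V'$.
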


\section{Continuous-Valued $U$ and $V$; Confidence and Pivotal Safety}
\label{sec:continuous}
Our definitions of safety were given for countable $\cZ$, making all
random variables involved have countable range as well.  Now we allow
general $\cZ$ and hence continuous-valued $U$ and general uncountable
$V$ as well, but we consider a version of safety in which we do not
have safety for $U \mid V$ itself, but for $U'\mid V$ for some $U'$
with $(U,V) \determines U'$ such that the range of
$\prag{\cP}_{[V]}(U') := \{ \prag{P}(U' \mid V=v): v \in \range{V} \}$
is still countable. To make this work we have to equip $\cZ$ with an
appropriate $\sigma$-algebra $\Sigma_{\cZ}$ and have to add to the
definition of a RV that it must be measurable,\footnote{Formally we
  assume that $\cZ$ is equipped with some $\sigma$-algebra
  $\Sigma_{\cZ}$ that contains all singleton subsets of $\cZ$.  We
  associate the co-domain of any function $X: \cZ \rightarrow \reals^k$
  with the standard Borel $\sigma$-algebra on $\reals^k$, and we call
  such $X$ an RV whenever the $\sigma$-algebra $\Sigma_{\cZ}$ on $\cZ$
  is such that the function is measurable.}  and we have to modify the
definition of support $\support_{P}(U)$ to the standard
measure-theoretic definition (which specializes to our definition
(\ref{eq:range}) whenever there exists a countable $\cU$ such that
$P(U \in \cU) = 1$). Yet nothing else changes and all previous
definitions and propositions can still be used.\footnote{If we were to
  consider safety of the form $U \mid V$ for uncountable
  $\prag{\cP}_{[V]}(U)$, then this set of probability distributions
  would have to be equipped with a topology, which is a bit more
  complicated and is left for future work.}

\paragraph{Additional Notations and Assumptions}
In this section we frequently refer to (cumulative) distribution
functions of 1-dimensional RVs, for which we introduce some notation:
for distribution $P \in \cP^*$ and RV $W: \cZ \rightarrow \reals$, let
${F}_{[W]}(\cdot)$ denote the distribution function of $W$, i.e.
${F}_{[W]}(w) = P(W \leq w)$. The notation is extended to
conditional distribution functions: for given $\prag{P}(U \mid V)$, we let
$\cdf(u |v) := \prag{P}(U\leq u \mid V=v )$.  The subscripts $[W]$ and
$[U|V]$ indicate the RVs under consideration; we will omit them if
they are clear from the context.  Note that we can consider these
distribution functions as RVs: for all $z \in \cZ$,
${F}_{[W]}(W)(z) = {P}(W \leq W(z))$ and
$\cdf(U |V)(z) = \prag{P}(U \leq U(z) \mid V= V(z))$.

Since this greatly simplifies matters, we will often assume that
either $P \in \cP^*$ or
$\tilde{P}(U \mid V)$ satisfy the following:
\paragraph{Scalar Density Assumption} 
A distribution $P(U)$ for RV $U$ satisfies the {\em scalar density
  assumption\/} if (a) $\range{U}\subseteq \reals$ is equal to some
(possibly unbounded) interval, and (b) $P$ has a density $f$
relative to Lebesgue measure with $f(u) > 0$ for all $u$ in the interior of $\range{U}$. We say that $P(U|V)$ satisfies the scalar density assumption if for all $v \in \range{V}$, $P(U \mid V=v)$ satisfies it. \\ \ \\
This is a strong assumption which will nevertheless be satisfied in
many practical cases. For example, normal distributions, gamma
distributions with fixed shape parameter, beta distributions etc. all
satisfy it.
\paragraph{Overview of this Section}
The goal of the following two subsections is to precisely reformulate the {\em
  fiducial\/} and {\em confidence\/} distributions that have been
proposed in the statistical literature as pragmatic distributions in
our sense, that can be safely used for some (`confidence-related') but
not for other prediction tasks.  Here we focus on the standard statistical
scenario introduced in Example~\ref{ex:confidence}.  
The underlying idea of `pivotal safety' (developed in
Section~\ref{sec:pivots}) has applications in discrete, nonstatistical settings as well, as explored in Section~\ref{sec:decision}.  

\subsection{Confidence Safety}
\label{sec:confidence}
We start with a classic motivating example.
\begin{example}{\bf \ [Example~\ref{ex:confidence},
    Specialized]} \label{ex:normal} As a special case of the
  statistical scenario outlined in Example~\ref{ex:confidence}, let
  $\cM$ be the normal location family with varying mean $\theta$ and
  fixed variance, say $\sigma^2 = 1$, and let
  $V := \hat\theta = \hat{\theta}(X^n)$ where $\hat{\theta}(X^n)$ is
  the empirical average of the $X_i$, which is a sufficient statistic
  that is of course also equal to the ML estimator for data
  $X^n$. Then the sampling density of $\hat\theta$ is itself Gaussian,
  and given by
\begin{equation}\label{eq:ziek}
p(\hat\theta \mid \theta) \propto q_{\theta}(X^n ) \propto e^{- \frac{1}{2}\cdot n \cdot {(\hat\theta- \theta)^2}}.
\end{equation}
In this simple context, Fisher's controversial fiducial reasoning
amounts to observing that (\ref{eq:ziek}) is symmetric in $\hat\theta$ and
$\theta$; thus, if we simply define a new function
$\tilde{p}(\theta \mid \hat\theta) := p(\hat\theta \mid \theta)$, then this
function must, for each fixed $\hat\theta$, be the density of a
probability distribution (the integral over $\theta$ must by symmetry be
1); and this would then amount to something like a `prior--free'
posterior for $\theta$ based on data $\hat\theta$. In this special case, as
well as with the corresponding inversion for scale families, it
coincides with the Bayes posterior based on an improper Jeffreys'
prior. Yet, \cite{lindley1958fiducial} showed that the general
construction for 1-dimensional families, which we review in the next
subsection, {\em cannot\/} correspond to a Bayesian posterior except
for location and scale families: for different sample sizes, the
`fiducial' posterior for data of size $n$ corresponds to the Bayes
posterior for a prior which depends on $n$.
\end{example}
\cite{Fisher30} noted that $\tilde{p}$ as constructed above lead to
valid inference about confidence intervals. Later \citep{Fisher35} he
made claims that $\tilde{p}$ could be used for general prior-free
inference about $\theta$ given data/statistic $\hat\theta$. This is not
correct though, and more recently, $\tilde{p}$ is more often regarded
as an instance of a {\em confidence distribution\/}
\citep{SchwederH02}, a term going back to \cite{Cox58} --- these are
by and large the same objects as fiducial distributions, though with a
stipulation that they only be used for certain inferences related to
confidence. In the remainder of this subsection, we develop a
variation of safety that can capture such confidence statements. In
the next subsection, we review the general method for designing
confidence distributions for 1-dimensional statistical families and we
shall see that, under an additional condition, they are indeed
confidence--safe in our sense. 
In the remainder of this section, we focus on 1-dimensional families
and interpret the RVs $U$ and $V$ as in our statistical application of
Example~\ref{ex:confidence} and~\ref{ex:normal}. Thus, $U \equiv \theta$ would be a
1-dimensional scalar parameter of some model $\{P_{\theta} : \theta \in \Theta \}$, $V \equiv S(X^n)$ would be a statistic
of the observed data. In Example~\ref{ex:normal},
$V \equiv \hat\theta(X^n)$ is the ML estimator.

We are thus interested in constructing, for each $v \in \range{V}$, an
interval of $\reals$ that has (say) 95\% probability under
$\prag{P}(U\mid V=v)$.  To this end, we define for each $v \in \range{V}$, an
interval $\cs_v = [\underline{u}_v,\bar{u}_v]$ where $\underline{u}_v$
is
such that $\cdf(\underline{u}_v \mid v) = 0.025$ 
and $\bar{u}_v$ is 
such that $\cdf(\bar{u}_v \mid v) = 0.975$. This set
obviously has $95\%$ probability according to $\prag{P}(U \mid
V=v)$. In our interpretation where $U = \theta$ is the parameter of a
statistical model, we may interpret $\prag{P}$ as DM's assessment,
given data $V = S(X^n)$, of the uncertainty about $U$,
i.e. $\prag{P}$ is a `posterior' and, analogous to Bayesian
terminology, we may call $\cs_v$ a $95\%$ {\em credible set\/} given
$V$. The question is now under what conditions we have {\em coverage},
i.e. that $\cs_V$ is also a $95\%$ frequentist {\em confidence
  interval}, so that our credible set can be given frequentist
meaning. By definition of confidence interval, this will be the case
iff for all $P \in \cP^*$, $ P(U \in \cs_V) = 0.95, $ i.e. iff for all
$P \in \cP^*$, $v \in \range{V}$,
\begin{equation}\label{eq:mainconfidencea}
\Exp_{P}[\indicator_{U \in \cs_V}] = \Exp_{\prag{P}} [\indicator_{U \in \cs_V} \mid V=v],
\end{equation}
where we used that, by construction, $\Exp_{\prag{P}} [\indicator_{U
  \in \cs_V} \mid V=v]=0.95$ for all $v \in \range{V}$. 
As we shall see (\ref{eq:mainconfidencea}) holds for our
normal example, so the posterior constructed in (\ref{eq:ziek})
produces valid confidence intervals. (\ref{eq:mainconfidencea}) is of
the form of a `safety' statement and it suggests that confidence
interval validity of credible sets can be phrased in terms of safety
in general. Indeed this is possible as long as $\tilde{P}(U|V)$ satisfies the scalar density assumption: for fixed $0 \leq a < b \leq 1$,
we can define the  set $\ct_v = [\underline{u}^a_v,\bar{u}^b_v]$ where
$\cdf(\underline{u}_v^a \mid v)
= a$ and 
$\cdf(\bar{u}^b_v \mid v) = b$, so that for each $v \in \range{V}$, $\ct_v$ is a $b-a$
credible set. Reasoning like above, we then get that $\ct_V$ is also a
$b-a$ confidence interval iff for all $P \in \cP^*$, all $v \in \range{V}$
\begin{equation}\label{eq:mainconfidenceb}
\Exp_{(U,V) \sim P}[\indicator_{U \in \ct_V}] = \Exp_{P} \Exp_{\prag{P}} 
[\indicator_{U \in \ct_V} \mid V=v ],
\end{equation}
which, from the characterization of safety for $U \mid [V]$,
Proposition~\ref{prop:newstart}, (\ref{eq:calibrationagain}) and
(\ref{eq:validityrevisited}) suggests the following definition:
\begin{definition}\label{def:confidence}
  Let $U$, $V$ and  $\prag{P}$ be such that $\tilde{P}(U|V=v)$ satisfies the scalar density assumption for all $v \in \range{V}$. We say that
  $\prag{P}$ is (strongly) {\em confidence--safe\/} for $U \mid V$ if for all
  $0 \leq a < b \leq 1$, it is safe for
  $\indicator_{U \in \ct_V} \mid \cali{V}$.
\end{definition}
The requirement that $\tilde{P}$ satisfies the scalar density
assumption is imposed because otherwise $\ct_V$ may not be defined for some $a,b \in [0,1]$. 
We could also consider distributions
that have coverage in a slightly weaker sense, and define weak
confidence-safety for $U \mid V$ as safety for
$\indicator_{U \in \ct_V} \mid \ave{V}$; we have not (yet) found any
natural examples though that exhibit weak confidence-safety but not
strong confidence safety. 
\begin{example}\label{ex:supernormal}
  In the next subsection we show that $\prag{P}(\theta \mid V =
  \hat\theta(X^n))$ as defined in Example~\ref{ex:normal} (normal
  distributions) is confidence-safe. For example, we may specify a
  $\tilde{P}(\theta \mid \hat\theta)$--95\% credible set $\ct_V$ with
  $a=0.025$ and $b = 0.975$ as the area under the normal curve
  centered at $V= \hat\theta$ and truncated  so that the area under
  the left and right remaining tails is $0.025$ each. Now suppose that
  $X^n \sim P_{\theta}$ for arbitrary $\theta$. By confidence--safety
  we know that the probability that we will observe $\hat\theta$ such
  that $\theta \not \in \ct_V$ is exactly $0.05$, just as it would be
  if $\tilde{P}$ where the true conditional distribution --- an instance of a {\em safe\/} inference based on $\tilde{P}$. For an
  example of an inference that is {\em unsafe}, suppose DM really is
  offered a gamble for \$1 that pays out \$2 whenever $\theta > 0$ (we
  could take any other fixed value as well), and pays out $0$
  otherwise. She thus has two actions at her disposal, $a=1$ (accept
  the gamble) and $a=0$ (abstain), with loss given by $L(\theta,0) =
  0$ for all $\theta$ and $L(\theta,1) = 1$ if $\theta < 0$ and
  $L(\theta,1) = -1$ otherwise. She might thus be tempted to follow
  the decision rule $\delta(\hat\theta)$ that accepts the gamble
  whenever she observes $\hat\theta$ such that $\tilde{P}(\theta > 0
  \mid \hat\theta) > .5$ and abstains otherwise; for that rule
  minimizes, among all decision rules, her expected loss $\Exp_{\theta
    \sim \tilde{P} \mid \hat\theta}[L(\theta,\delta(\hat\theta))]$,
  and gives negative expected loss.

  This decision rule should not be followed though, because it is
  based on an inference that is not safe in any of our senses: safety
  would mean that $\tilde{P}$ is safe for
  $L(\theta,\delta(\hat\theta)) \mid \text{\sc s}$, where $\text{\sc s}$ can be
  substituted by $\cali{\hat\theta}$, $\ave{\hat\theta}, $ or
  ${\hat\theta}$. The first does not apply since $\hat\theta$ is not
  ignored in the probability assessment; the third does not hold because it would imply the second, which also does not hold. To see this,
note that if data comes from $P_{\bar\theta}$ with $\bar\theta < 0$ then we have
$$
\Exp_{\hat\theta \sim P_{\bar\theta}} [L(\bar\theta,\delta(\hat\theta))] > 0 > \Exp_{\hat\theta \sim P_{\bar\theta}} [\Exp_{\theta \sim \tilde{P} \mid
    \hat\theta}[L(\theta,\delta(\hat\theta))]], 
$$
so that her actual expected loss is positive whereas she thinks it to
be negative. This violates (\ref{eq:startsafea}) in
Definition~\ref{def:leftsafety} so that $\tilde{P}$ is not safe for
$L(\theta,\delta(\hat\theta)) \mid \ave{\hat\theta}$. Note that, if
$\tilde{P}$ were safe for $\theta \mid \hat\theta$ (as a subjective
Bayesian would believe if $\tilde{P}$ were her posterior) then it
would also be safe for $L(\theta,\delta(\hat\theta))$ (because
$L(\theta,\delta(\hat\theta))$ can be written as a function of
$(\theta,\hat\theta)$), and then use of $\delta$ would be safe after
all.

For an intuitive interpretation, consider a long sequence of
experiments. For each $j$, in the $j$-th experiment, a sample of size
$n=10$ is drawn from a normal with some mean $\theta_j$. Each time DM
investigates whether $\theta_j > 0$. Assume that, in reality, all of
the $\theta_j$ are $< 0$, but DM does not know this. Then every once
in a while $\hat\theta$ will be large enough for our unsafe DM to
gamble on it, but every time this happens she loses; all other times
she neither loses nor wins, so her net gain is negative in the long
run.

Thus, $\tilde{P}$ is not safe for $\theta \mid \hat\theta$ in
general. However, it is still safe for $U' |V$ for some other
functions of $U \equiv \theta$ besides $U' = \ct_V$. For example, it
leads to unbiased estimation of the mean: $\tilde{P}$ is safe for
$\ave{\theta} \mid \ave{\hat\theta}$, as is easily established. This
is however a special property of the confidence distribution for the
normal location family and does not hold for general 1-dimensional
confidence distributions as reviewed below.
\end{example}
\subsection{Pivotal Safety and Confidence}
\label{sec:pivots}
Trivially, if $\tilde{P}$ is safe for $U|V$ (hence valid) and the
scalar density assumption holds, then it is also confidence-safe for
$U|V$.  We now determine a way to construct confidence-safe
$\tilde{P}$ if not enough knowledge is available to infer a
$\tilde{P}$ that is valid.  To this end, we invoke the concept of a
{\em pivot}, usually defined as a function of the data and the
parameter that has the same distribution for every $P \in \cP^*$ and
that is monotonic in the parameter for every possible value of the
data \citep{BarndorffNielsenCox94}. We adopt the following variation
that also covers a quite different situation with discrete outcomes:
\begin{definition}{\bf \ [pivot]} \label{def:pivot} Let $U$ and $V$ be
  as before and suppose either (continuous case) that
  $U: \cZ \rightarrow \reals$ and $V: \cZ \rightarrow \reals$ are
  real-valued RVs, and that for all $v\in \range{V}$, $\range{U|V=v}$
  is a (possibly unbounded) interval (possibly dependent on $v$), or
  (discrete case) that $\cZ$ is countable.  We call RV $U'$ a
  (continuous viz. discrete) {\em pivot\/} for $U \mid V$ if
\begin{enumerate}
\item $(U,V) \determines U'$ so that the function $f$ with $U'= f(U,V)$ exists.
\item For each fixed $v \in \range{V}$, the function $f_v:
  \range{U|V=v} \rightarrow \range{U'}$, defined as $f_v(u) := f(u,v)$
  is  1-to-1 (an injection); in the continuous case we further require
  $f_v$ to be continuous and uniformly monotonic, i.e. either $f_v(u) $ is increasing
  in $u$ for all $v \in \range{V}$, or $f_v(u)$ is decreasing in $u$
  for all $v \in \range{V}$.
\item All $P \in \cP^*$ agree on $U'$, i.e. for all $P_1, P_2 \in
  \cP^*$, $P_1(U') = P_2(U')$, where in the continuous case we further
  require that $P_1$ (hence also $P_2$) satisfies the scalar density assumption. 
\end{enumerate}
We say that a pivot $U'$ is {\em simple\/} if for all $v \in \range{V}$, 
the function $f_v$ is a bijection. 
\end{definition}
The scalar density assumption (item 3) does not belong to the standard
definition of pivot, but it is often assumed implicitly, e.g. by
\cite{SchwederH02}. The importance of `simple' pivots (a nonstandard
notion) will become clear below.

In the remainder of this section we focus on the statistical case of
the previous subsection, which is a special case of
Definition~\ref{def:pivot} above --- thus invariably $U \equiv
\theta$, the 1-dimensional parameter of a model $\{P_{\theta} \mid
\theta \in \Theta \}$, and $V$ is some statistic of data $X^n$. In
Section~\ref{sec:decision} we return to the discrete case.

If a continuous pivot as above exists, then all $P \in \cP^*$ have the
same distribution function $F_{[U']}(u') := P(U'\leq u')$. We may thus
define a pragmatic distribution by setting, for all $v \in \range{V}$,
all $u \in \range{U \mid V=v}$,
\begin{equation}\label{eq:belangrijker}
\tilde{F}_{[U|V]}(u
\mid v) := \begin{cases} F_{[U']}(f_v(u)) & \text{if $f_v(u)$ increasing in $u$} \\
1 - F_{[U']}(f_v(u)) &  \text{if $f_v(u)$ decreasing.}
\end{cases}
\end{equation} 
The definition of pivot ensures that for each $v \in \range{V}$,
$\tilde{F}_{[U|V]}(u \mid v)$ is a continuous increasing function of
$u$ that is in between $0$ and $1$ on all $u \in \range{U \mid V=v}$,
and hence $\tilde{F}_{[U|V]}(u \mid v)$ is the CDF of some
distribution $\tilde{P}(U|V)$.  It can be seen from the standard
definition of a confidence distribution \citep{SchwederH02} that this
$\prag{P}(U|V)$ is a confidence distribution, and that every
confidence distribution can be obtained in this
way.\footnote{Mirroring the discussion underneath Definition 1 from
  \cite{SchwederH02}: if $\tilde{F}'(U|V)$ is the CDF of a confidence
  distribution as defined by them, then $U':= \tilde{F}'(U|V)$ is a pivot and then the
  construction above applied to $U'$ gives
  $\tilde{F}(U|V) := \tilde{F}'(U|V)$. Conversely, if $U'$ is an
  arbitrary continuous pivot, then by the requirement that $P(U')$ has
  a density with interval support, $F(U')$ is itself uniformly
  distributed on its support $[0,1]$ and there is a 1-to-1 continuous
  mapping between $U'$ and $F(U')$. Thus, whenever $U'$ is a
  continuous pivot, $F(U')$ is itself a pivot as well, and
  $\tilde{F}(U|V)$ as defined here satisfies the definition of
  confidence distribution.} Hence, (\ref{eq:belangrijker}) essentially
defines confidence distribution. Theorem~\ref{thm:confidence} below
shows that when based on {\em simple\/} pivots, confidence distributions are
also confidence-safe.
\begin{example}\label{ex:prefreund}
  Consider  the statistical setting  with $U
  \equiv \theta$, $V \equiv \hat\theta(X^n)$, and (a) 
for all $\theta \in \Theta$,
  $P_{\theta}(V)$ itself satisfies the scalar density assumption, and (b)
  for each fixed $v \in \range{V}$, we have that $F_{\theta,[V]}(v) :=
  P_{\theta}(\hat\theta(X^n) \leq v)$ is monotonically decreasing in
  $\theta$. This will hold for 1-dimensional exponential
  families with a continuously supported sufficient statistic (such as
  the normal, exponential, beta- and many other models), taken in
  their mean-value parameterization $\Theta$.  Then (by (b)) $U' = F_{\theta,[V]}(V)$ is itself a decreasing pivot, with (by (a)) the additional property that the function $f_{\theta}$ from $\range{V}$ to $\range{U'}$ given by $f_{\theta}(v) := f(\theta,v)$ is strictly increasing in $v$. 
Then (\ref{eq:belangrijker}) simplifies, because (using this strict increasingness in the second equality):
$$F_{\theta,[V]}(v) = P_{\theta}(V \leq v) = P_{\theta}(F_{\theta,[V]}(V) \leq f_{\theta}(v)) = F_{\theta, [F_{\theta,[V]}]}(f_{\theta}(v)) = F_{[U']}(f_v(\theta)),$$
and noticing that the right-hand side appears in (\ref{eq:belangrijker}), we can plug in the left-hand side there as well and we see that we can directly set
\begin{equation}\label{eq:belangrijkst}
\tilde{F}(\theta \mid \hat\theta) = 1 - F_{\theta}(\hat\theta).
\end{equation}
Thus for such models the recipe (\ref{eq:belangrijker}) simplifies (see also \cite{veronese2015fiducial}).
\end{example}
We now define `pivotal safety' which, as demonstrated below, in the
statistical case essentially coincides with confidence safety --- the
reason for the added generality is that it also has meaning and
repercussions in the discrete case.  The extension to `multipivots' is
just a stratification that means that, given any $w \in \range{W}$,
$\tilde{P} \mid W=w$ is pivotally safe for $U \mid V$ relative to
$\cP^* \mid W=w$; it is not really needed in this text, but is
convenient for completing the hierarchy in Figure~\ref{fig:figure}.

\begin{definition}\label{def:pivotal} Let $U$ and $V$ be as before and let 
  $\tilde{P}$ be an arbitrary distribution on $\cZ$ (not necessarily
  given by (\ref{eq:belangrijker})) .  If $V$ has full support under
  $\prag{P}$, i.e. $\support_{\prag{P}}(V) = \range{V}$ and $U'$ is a
  (continuous or discrete) pivot such that $\prag{P}$ is safe for $U'
  \mid \cali{V}$, i.e. for all $v \in \range{V}$, 
$$
\tilde{P}(U'\mid V=v) = \tilde{P}(U'),
$$
then we say that $\tilde{P}$ is pivotally safe for $U \mid V$, with
pivot $U'$.

Now let $W$ be a generalized RV such that $V \determines W$.  Suppose that for all $w \in
\range{W}$, $U'$ is a pivot relative to the set of distributions $(\cP^*
\mid W =w)$ and $\tilde{P}$ is safe for $U' \mid \cali{V},
W$. Then we say that $\tilde{P}$ is pivotally safe for $U \mid V$
with {\em multipivot\/} $U |W$.
\end{definition} 

\begin{example}{\bf [normal distributions and general confidence distributions]}\label{ex:freund}
  In Example~\ref{ex:normal}, $U = \theta$, the mean of a normal with
  variance $1$, and we set $V = \hat\theta(X^n)$ to be the average of
  a sample of size $n$. Then it is easily seen that $U' = \theta -
  \hat\theta = U-V$ is a pivot according to our definition, having a
  $N(0,1)$ distribution under all $P \in \cP^*$.  If we adopt
  $\prag{P}(U \mid V)$, under which $U' \sim N(0,1)$ independently of
  $V$, then $\tilde{P}$ is safe for $U' \mid [V]$ (see
  Proposition~\ref{prop:newstart}, (\ref{eq:calibrationagain})) so we
  have pivotal safety. And indeed one can verify that this $\tilde{P}$
  coincides with the recipe given by (\ref{eq:belangrijker}).
\end{example}

\commentout{  More generally, we can consider {\em confidence distributions\/} for
  1-dimensional models $\prag{P}(\theta \mid X^n)$ in their standard
  definition as given e.g. by \cite[Definition 1]{SchwederH02}. Their
  definition says that, for $\theta \in \Theta$ where $\Theta$ is a
  possibly unbounded interval in $\reals$, $\prag{P}(\theta \mid
  X^n)$ with distribution function $g(\theta, X^n) :=
  \cdf(\theta \mid X^n)$ is a confidence distribution
  if, (1), for each $X^n$, it is strictly increasing in $\theta$, and
  (2), for each $\theta \in \theta$, $g(\theta,X^n)$ has a uniform
  distribution under $q_\theta$.  \cite{SchwederH02} give many examples of
  models for which confidence distributions exist.}
\commentout{We continue to list some simple properties of pivots and pivotal safety. The second says that pivotal safety is a generalization of confidence safety:
\begin{proposition}\label{prop:pivot}
Let $U, V$ be RVs on $\cZ$ as before. 
\begin{enumerate}\item Suppose that $U'$ is an (arbitrary) continuous pivot and set 
  $U'' := F_{[U']}(U')$, where $F_{[U']}$ is the distribution function
  of $U$ under all $P \in \cP^*$. Then $U''$ is a pivot that is
  uniformly distributed on $[0,1]$ under all $P \in \cP^*$, and the confidence
  distributions $\tilde{P}(U|V)$ constructed from pivot $U'$
  resp. $U''$ using (\ref{eq:belangrijker}) coincide. Finally, if $U'$ is a simple pivot then $U''$ is also a simple pivot. 
\item 
Let $\tilde{P}$ be a distribution on $\cZ$. If for all $v \in \range{V}$, $\tilde{P}(U \mid V=v)$ satisfies the
  scalar density assumption (which is needed for confidence-safety to be
  well-defined), then the following are equivalent:
\subitem(i) $\tilde{P}$ is strongly confidence-safe for $U
  \mid V$ 
\subitem(ii) $\tilde{P}$ is safe for $\tilde{F}(U|V)$
\subitem(iii) $\tilde{P}$ is pivotally safe with pivot $U':= \tilde{F}(U|V)$
\subitem(iv) $\tilde{P}$ is pivotally safe for some pivot $U''$
\end{enumerate}
\end{proposition}}

While in the simplest form of calibration, safety for $U \mid
\cali{V}$, we had that $\tilde{P}(U \mid V)$ was the marginal of $U$,
so that $U$ and $V$ are independent under $\tilde{P}$, in the simplest
pivotal safety case, the situation is comparable, but now the
auxiliary variable $U'$ instead of the original variable $U$ is
independent of $V$ under $\prag{P}$.  Note though that we do not
necessarily have that $U'$ and $V$ are independent for all or even for
{\em any\/} $P \in \cP^*$. In Example~\ref{ex:montyhallb} (Monty Hall)
below, there is in fact just one single $P \in \cP^*$ for which $U'
\perp V$ holds. In the statistical application, we even have that
$P(U' \mid V)$ is a degenerate distribution (putting all its mass on a
particular real number depending on $V$) under each $P \in \cP^*$, as
can be checked from Example~\ref{ex:normal}.

The relation between pivotal, confidence-safety and the $\tilde{P}$
defined as in (\ref{eq:belangrijker}) is given by the following
central result.

\begin{theorem}\label{thm:confidence}
The following statements are all equivalent:
\begin{enumerate}
\item $\tilde{P}$ is pivotally safe for $U|V$ with some {\em simple\/}
  continuous pivot $U'$
\item $\prag{P}(U|V)$ is of form (\ref{eq:belangrijker}), where $U'$
  is a simple continuous pivot
\item $\prag{P}$ is confidence-safe for $U|V$  and for each $v \in \range{V}$, $\prag{P}(U|V=v)$ satisfies the scalar density assumption
\item $\prag{P}$ is safe for $\prag{F}(U|V) |V $ and for each $v \in \range{V}$, $\prag{P}(U|V=v)$ satisfies the scalar density assumption
\item $\prag{P}$ is pivotally safe for $U|V$ with pivot $\prag{F}(U|V)$, which is continuous and simple.
\end{enumerate}\end{theorem}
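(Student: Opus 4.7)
The plan is to prove the five-way equivalence by establishing the cycle $(5) \Rightarrow (1) \Rightarrow (2) \Rightarrow (3) \Rightarrow (4) \Rightarrow (5)$. The implication $(5) \Rightarrow (1)$ is immediate by taking the simple continuous pivot to be $\prag{F}(U|V)$ itself, so the real work is in the other four implications.

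For $(1) \Rightarrow (2)$, I would start from a simple continuous pivot $U' = f(U,V)$ with $\prag{P}$ safe for $U' \mid \cali{V}$. Pivotal safety means $\prag{P}(U' \mid V=v) = \prag{P}(U')$ for all $v$; combined with $U'$ being a pivot, all $P \in \cP^*$ agree on the CDF $F_{[U']}$, and $\prag{P}$ shares this distribution for $U'$. Since $f_v$ is a continuous bijection from $\range{U|V=v}$ to $\range{U'}$, it is either strictly increasing or strictly decreasing, and I can compute $\prag{F}(u \mid v) = \prag{P}(U \le u \mid V=v)$ by the change of variables $U \leq u \iff f_v(U) \leq f_v(u)$ (resp.\ $\geq$), giving exactly formula (\ref{eq:belangrijker}).

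For $(2) \Rightarrow (3)$, suppose $\prag{P}$ is given by (\ref{eq:belangrijker}). The credible set $\ct_v = [\underline{u}_v^a, \bar{u}_v^b]$ satisfies $\prag{F}(\underline{u}_v^a \mid v) = a$, $\prag{F}(\bar{u}_v^b \mid v) = b$, so by (\ref{eq:belangrijker}) its image under $f_v$ is a fixed interval in $\range{U'}$ of $F_{[U']}$-probability $b-a$ that does not depend on $v$. Since all $P \in \cP^*$ agree on $U'$, the event $\{U \in \ct_V\}$ has $P$-probability $b-a$ under every $P$, establishing (\ref{eq:mainconfidenceb}) and hence confidence safety; the scalar density assumption on $\prag{P}(U|V=v)$ is inherited from that on $P(U')$ via the continuous strictly monotone transformation $f_v^{-1}$.

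For $(3) \Leftrightarrow (4)$, the scalar density assumption on $\prag{P}(U \mid V=v)$ ensures that, under $\prag{P}$, $\prag{F}(U|V)$ conditional on $V=v$ is uniform on $(0,1)$, and that for each $v$ the map $u \mapsto \prag{F}(u|v)$ is a continuous strictly increasing bijection from $\range{U|V=v}$ onto $(0,1)$. In particular $\{U \in \ct_v\} = \{\prag{F}(U|V) \in [a,b]\}$. So confidence safety is the statement that $P(\prag{F}(U|V) \in [a,b] \mid V=v) = b-a$ for every $P \in \cP^*$ and every $0 \le a < b \le 1$; since the intervals $[a,b]$ generate the Borel $\sigma$-algebra on $[0,1]$, this is equivalent to $P(\prag{F}(U|V) \mid V=v) = \prag{P}(\prag{F}(U|V) \mid V=v)$ almost surely, which by Proposition~\ref{prop:newstart}(3) is precisely safety for $\prag{F}(U|V) \mid V$. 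Finally $(4) \Rightarrow (5)$: safety for $U'' := \prag{F}(U|V) \mid V$ gives that $P(U'' \mid V)$ is uniform on $(0,1)$ for every $P \in \cP^*$, so all $P$ agree on $U''$, making $U''$ a pivot in the sense of Definition~\ref{def:pivot}; the map $f(u,v) := \prag{F}(u|v)$ is continuous and strictly increasing in $u$ with image $(0,1)$ for every $v$, so $U''$ is simple and continuous, and safety for $U'' \mid V$ implies safety for $U'' \mid \cali{V}$ (the conditional distribution is the same uniform for every $v$), yielding pivotal safety.

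The main obstacle is the equivalence $(3) \Leftrightarrow (4)$, because confidence safety is only a statement about indicators of quantile-based intervals, whereas safety for $\prag{F}(U|V) \mid V$ is a statement about the full conditional law. The argument hinges on two points that need to be handled carefully: that under the scalar density assumption $\prag{F}(U|V)$ is exactly uniform on $(0,1)$ under $\prag{P} \mid V=v$ (so the targets $b-a$ match the $\prag{P}$-probabilities), and that the class of intervals $[a,b]$ is a $\pi$-system generating the Borel $\sigma$-algebra on $[0,1]$, allowing one to upgrade agreement on interval probabilities to equality of conditional distributions. A secondary technical point is verifying bijectivity (not merely monotonic injectivity) of $f_v$ in $(4) \Rightarrow (5)$, which requires that $\prag{F}(\cdot \mid v)$ has full image $(0,1)$; this again follows from the scalar density assumption.
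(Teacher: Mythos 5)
Your cycle $(5)\Rightarrow(1)\Rightarrow(2)\Rightarrow(3)\Rightarrow(4)\Rightarrow(5)$ is a reasonable reorganization, and the steps $(5)\Rightarrow(1)$, $(1)\Rightarrow(2)$ and $(2)\Rightarrow(3)$ are essentially the paper's own computations (the change of variables through $f_v$, and pushing the credible interval $\ct_v$ through $f_v$ to a fixed interval of $F_{[U']}$-mass $b-a$). The genuine gap is in your step $(3)\Rightarrow(4)$, and it propagates into $(4)\Rightarrow(5)$: you paraphrase confidence safety as ``$P(\prag{F}(U|V)\in[a,b]\mid V=v)=b-a$ for every $P\in\cP^*$'', i.e.\ as coverage \emph{conditional on $V=v$ under $P$}. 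But Definition~\ref{def:confidence} requires safety for $\indicator_{U\in\ct_V}\mid\cali{V}$, which by (\ref{eq:startsafes}) equates the \emph{unconditional} quantity $\Exp_P[\indicator_{U\in\ct_V}]=P(U\in\ct_V)$ with the $\prag{P}$-conditional one ($=b-a$); indeed your own $(2)\Rightarrow(3)$ step correctly proved exactly this unconditional identity, so your chain is internally inconsistent at this point. Your $\pi$-system argument therefore upgrades the wrong quantities, and the conclusion you draw --- that $P(\prag{F}(U|V)\mid V)$ and $\prag{P}(\prag{F}(U|V)\mid V)$ coincide $P$-a.s., i.e.\ safety for $\prag{F}(U|V)\mid V$ literally in the sense of Proposition~\ref{prop:newstart}, part 3 --- is not implied by (3) and is in fact false in the paper's motivating statistical scenario: under each $P_\theta\in\cP^*$, $U\equiv\theta$ is degenerate given $V$, so $P(\prag{F}(U|V)\mid V)$ is a point mass while $\prag{P}(\prag{F}(U|V)\mid V)$ is uniform, even though (1), (2), (3) and (5) all hold there.

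What item (4) must mean --- and what the paper's appendix proof actually uses --- is the marginal-type statement: for all $P\in\cP^*$ and all $v$, $P(\prag{F}(U|V))=\prag{P}(\prag{F}(U|V)\mid V=v)$, equivalently $P(\prag{F}(U|V)\le b)=b$ for all $b\in[0,1]$, via (\ref{eq:calibrationagain}). With that reading the repair of your argument is short: from the unconditional coverage identity of your $(2)\Rightarrow(3)$ step, $P(U\in\ct_V)=b-a$ for all $0\le a<b\le 1$, and under the scalar density assumption $\{U\in\ct_V\}=\{a\le\prag{F}(U|V)\le b\}$, so taking $a=0$ (or $a\downarrow 0$) yields $P(\prag{F}(U|V)\le b)=b$, which together with the SDA-implied uniformity of $\prag{P}(\prag{F}(U|V)\mid V=v)$ gives (4) in the intended sense; no generating-class argument about conditional laws under $P$ is needed (or available). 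Likewise, in $(4)\Rightarrow(5)$ the pivot requirement (Definition~\ref{def:pivot}, item 3) is only that all $P\in\cP^*$ agree on the \emph{marginal} law of $U'=\prag{F}(U|V)$, and pivotal safety only needs $\prag{P}(U'\mid V=v)=\prag{P}(U')$; both follow from the marginal reading, whereas your derivation rests on $P$-conditional uniformity of $U'$ given $V$, which you cannot have.
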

The theorem shows that, whenever pivots are simple (as is often the
case), confidence distributions $\tilde{P}$ as defined by
(\ref{eq:belangrijker}) are also confidence-safe. If a pivot is
nonsimple however, confidence distributions can still be defined via
(\ref{eq:belangrijker}) but they may not be confidence-safe under our
current definition. An example of such a case is given by the
statistical scenario where $\cM$ is the 1-dimensional normal family,
but the parameter of interest is $U := |\theta|$ rather than
$\theta$. As shown by \cite{SchwederH16}, the confidence distribution
$\tilde{P}(U\mid \hat\theta)$ defined by (\ref{eq:belangrijker}) then
gives a point mass $\tilde{P}(U=0 \mid \hat\theta = v) = p > 0$ to
$U=|\theta| = 0$ whose size $p$ depends on $v$. This happens because
$\tilde{F}(U \mid v )$ ranges, for such $v$, not from $0$ to $1$ but
from some $a> 0$ (depending on $v$) to $1$. Then pivotal safety cannot
be achieved for $\tilde{P}(U|V)$, since there must be $v_1, v_2$ with
$\tilde{P}(U |V=v_1) \neq \tilde{P}(U |V=v_2)$ whence the definition
is not satisfied.  Now such confidence distributions based on
nonsimple pivots are still useful, and indeed we can prove a weaker
form of pivotal and confidence safety for such cases, by replacing
safety for $U' \mid \cali{V}$ in Definition~\ref{def:pivotal} by
safety for $U' \mid \wksafe{V}$; we will not discuss details here
however.

\paragraph{The Hierarchy}
To see how pivotal and confidence safety fit into the hierarchy of
Figure~\ref{fig:figure}, note that Theorem~\ref{thm:confidence}
establishes the double arrow between pivotal safety and confidence
safety under the scalar density assumption (SDA) --- the requirement
that $(U',V) \determines U$ in the figure amounts to $f_v$ being a
bijection, as we require. The theorem also establishes the relation
between calibration and pivotal safety, under the assumption that
$\tilde{P}(V)$ has full support and the SDA holds for $U$. Then the
simplest form of calibration, safety for $U \mid \cali{V}$, clearly
implies pivotal safety for $U \mid V$ --- just take $U'= U$, which is
immediately checked to be a pivot. This result trivially extends to
the general case of safety for $U \mid \cali{V},V'$ with $V'\neq {\bf
  0}$, this implying pivotal safety with multipivot $U \mid V'$ --- we
omit the details.

It remains to establish the rightmost column of
Figure~\ref{fig:figure}; we will only do this in an informal
manner. \cite{SchwederH02} (and, implicitly, \cite{Hampel06}) already
note that if $\tilde{P}$ is a confidence distribution for RV $U$ given
data $V$, then it remains a confidence distribution for monotonic
functions $U'$ of $U$, but not for general functions of $U$.  In our
framework this translates to, under the scalar density assumption of
Section~\ref{sec:confidence}, that pivotal safety of $U \mid V$
implies pivotal safety for $U' \mid V$ if $U'$ is a 1-to-1 continuous
function of $U$, which readily follows from Definition~\ref{def:pivot}
and Theorem~\ref{thm:confidence} (Definition~\ref{def:pivot} implies
an analogous statement for the discrete case as well). Similarly, it
is a straightforward consequence from the definitions that calibration
for $U \mid V$ implies calibration for $U' \mid V$, for every $U'$
with $U \determines U'$, not necessarily 1-to-1; yet for $U'$ with
$(U,V) \determines U'$, calibration may not be preserved: take
e.g. the setting of Example~\ref{ex:dilation} (dilation) with $U'=
|V-U|$. Then $\tilde{P}(U'= 1 \mid V= 0) = 0.9$, $\tilde{P}(U'= 1 \mid
V=1) = 0.1$, yet $\cP^*$ contains a distribution with $P(U=V) = 1$ and
for this $P$, $P(U'=1 \mid V) \equiv 0$. If $\tilde{P}$ is valid for
$U \mid V$ however, validity is preserved even for every $U'$ with
$(U,V) \determines U'$.
\subsection{Pivotal Safety and Decisions}
\label{sec:decision}
Now we consider pivotal safety for RVs $U$ with countable
$\range{U}$. The Monty Hall example below shows that in this case,
pivotal safety is still a meaningful concept.  We first provide an
analogue of Theorem~\ref{thm:confidence}, in which `confidence safety'
is replaced by something that one might call `local' confidence
safety: safety for a RV $U'$ that determines the {\em probability of
  the actually realized outcome\/} $U$. To this end, we introduce some
notation: for distribution $P \in \cP^*$ and RV $W$, let
$\pmf{W}(\cdot)$ be the RV that denotes the probability mass function
of $W$, i.e.  $\pmf{W}(w) = P(W = w)$; similarly $\prag{p}_{[W]}(w) :=
\prag{P}(W = w)$. The notation is extended to conditional mass
functions as $\tpmf{U|V}(u |v) := \prag{P}(U= u \mid V=v )$.  The
subscript $[W]$ and $[U|V]$ indicates the RVs under consideration; we
will omit them if they are clear from the context.  We can think of
these mass functions as RVs: for all $z \in \cZ$, $p_{[W]}(W)(z) = P(W
= W(z))$;
$\prag{p}_{[U|V]}(U |V)(z) = \prag{P}(U =
U(z) \mid V= V(z))$. The difference between RV $\prag{P}(U \mid V)$
and $\prag{p}_{[U|V]}(U |V)$ is that the former maps $z$ to the {\em
  distribution\/} $\prag{P}(U \mid V= V(z))$; the latter maps $z$ to
the {\em probability of a single outcome\/} $\prag{P}(U = U(z) \mid V= V(z))$.
\begin{theorem}\label{thm:discreteconfidence}
  Let $\cZ$ be countable, $U$ be an RV and $V$ a generalized RV.
  Suppose that for all $v \in \range{V}$, all $p \in
  [0,1]$, $\#\{u: \prag{P}( U=u \mid V=v) = p \} \leq 1$ (i.e. there
    are no two outcomes to which $\prag{P}(U \mid V=v)$ assigns the
    same nonzero probability).  Then the following statements are all
    equivalent:
\begin{enumerate}
\item $\prag{P}$ is safe for $\prag{p}(U|V) \mid [V]$.
\item $\prag{P}$ is pivotally safe for $U \mid V$, with simple pivot
$U'= \prag{p}(U|V)$.
\item $\prag{P}$ is pivotally safe for $U \mid V$ for some simple pivot $U''$.
\end{enumerate}
\end{theorem}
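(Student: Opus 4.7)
\emph{Proof plan.} This is the discrete counterpart of Theorem~\ref{thm:confidence}, with the conditional mass function $\tpmf{U|V}$ playing the role of the conditional CDF $\cdf$ and the distinct-nonzero-probabilities hypothesis playing the role of the scalar density assumption. I would prove the cycle $(1)\Rightarrow(2)\Rightarrow(3)\Rightarrow(1)$, taking $U^{\star} := \tpmf{U|V}(U\mid V)$ as the canonical candidate pivot.

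For $(1)\Rightarrow(2)$, I would unwrap safety for $U^{\star}\mid\cali{V}$ via Definition~\ref{def:maina} and clause~\eqref{eq:calibrationagain} of Proposition~\ref{prop:newstart} (applied with $W\equiv{\bf 0}$) to obtain, for every $P\in\cP^*$ and every $v\in\support_{\prag{P}}(V)$,
\[
P(U^{\star}) \;=\; \prag{P}(U^{\star}\mid V{=}v).
\]
This simultaneously delivers (a) that all $P\in\cP^*$ agree on the distribution of $U^{\star}$, which is clause~3 of Definition~\ref{def:pivot}, and (b) the pivotal-safety condition of Definition~\ref{def:pivotal}. Clause~1 of the pivot definition is immediate from the construction of $U^{\star}$; clause~2 (simplicity) uses the distinct-nonzero-probabilities hypothesis to establish injectivity of $f_v(u) := \tpmf{U|V}(u\mid v)$ on $\support_{\prag{P}}(U|V{=}v)$, with the $v$-independence from~(a)--(b) forcing the image $f_v(\range{U|V{=}v})$ to coincide with $\range{U^{\star}}$ uniformly in $v$. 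Step $(2)\Rightarrow(3)$ is immediate, since $U^{\star}$ itself witnesses $(3)$.

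For $(3)\Rightarrow(1)$, let $U''$ be a simple pivot witnessing $(3)$ and set $\mu := \prag{P}(U'')$. Safety for $U''\mid\cali{V}$ together with the pivot-agreement condition forces $P(U'') = \prag{P}(U''\mid V{=}v) = \mu$ for every $P\in\cP^*$ and every $v\in\range{V}$. The key observation is that $U^{\star}$ is a deterministic function of $U''$ alone: for any $z\in\cZ$ with $V(z)=v$,
\[
U^{\star}(z) \;=\; \prag{P}(U{=}U(z)\mid V{=}v) \;=\; \prag{P}\bigl(U''{=}f_v(U(z))\,\big|\,V{=}v\bigr) \;=\; \mu\bigl(U''(z)\bigr),
\]
where the middle equality uses bijectivity of $f_v$ (simplicity) and the last uses the $v$-independence of $\mu$. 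Hence both $P(U^{\star})$ and $\prag{P}(U^{\star}\mid V{=}v)$ coincide with the pushforward of $\mu$ under the mass function $\mu$, for every $P\in\cP^*$ and every $v\in\support_{\prag{P}}(V)$; via~\eqref{eq:calibrationagain} this is exactly safety for $U^{\star}\mid\cali{V}$.

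The main obstacle I anticipate is the bookkeeping around outcomes $u\in\range{U|V{=}v}$ of $\prag{P}$-probability zero in the simplicity step of $(1)\Rightarrow(2)$: the hypothesis rules out ties only at positive probabilities, so multiple such $u$ may share $f_v$-value $0$. I expect to repair this by using the $v$-independence of $\prag{P}(U^{\star}\mid V{=}v)$ derived from~$(1)$ to pin down $\range{U^{\star}}$ as the common image of the $f_v$ across $v$, after which a single-point identification on the $0$-preimage restores bijectivity onto $\range{U^{\star}}$. This is the only place where the discrete argument diverges nontrivially from the continuous-case proof of Theorem~\ref{thm:confidence}.
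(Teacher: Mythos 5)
Your plan is correct and follows essentially the same route as the paper's proof: the $(1)\Leftrightarrow(2)$ step goes through Proposition~\ref{prop:newstart}, (\ref{eq:calibrationagain}) plus injectivity of $f_v$ from the no-ties hypothesis (the paper's ``Fact 1''), and $(3)\Rightarrow(1)$ rests on the same pointwise identity $\tpmf{U|V}(U|V)=\tpmf{U''}(U'')$ (your $U^{\star}=\mu(U'')$), with pivotal safety then used to equate the relevant distributions under $\prag{P}(\cdot\mid V=v)$ and under each $P\in\cP^*$, which the paper phrases via level sets and you via pushforwards. The zero-probability bookkeeping you flag is likewise left implicit in the paper (its Fact~1 tacitly reads the hypothesis as covering $p=0$), so it is not a point of divergence.
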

This result establishes that, in the discrete case,  if there is {\em some\/}
simple pivot $U''$, then  $\prag{p}(U|V)$ is also a simple  pivot --- thus $\prag{p}(U|V)$ has some generic status. Compare this to 
Theorem~\ref{thm:confidence} which established that $\tilde{F}(U|V)$ is a generic pivot in the continuous case.

We now illustrate this result, showing that, for a wide range of loss
functions, pivotal safety implies that DM has the right idea of how
good her action will be if she bases her action on the belief that
$\tilde{P}$ is true --- even if $\tilde{P}$ is false.  Consider an RV
$U$ with countable range $\cU := \range{U}$. Without loss of
generality let $\cU = \{1, \ldots, k \}$ for some $k > 0$ or $\cU =
\naturals$. Let $L: \cU \times \cA \rightarrow \reals \cup \{
\infty\}$ be a {\em loss function\/} that maps outcome $u \in \cU$ and
action or decision $a \in \cA$ to associated loss $L(U,a)$. We will
assume that $\cA \subset \Delta(\cU)$ is isomorphic to a subset of the
set of probability mass functions on $\cU$, thus an action $a$ can be
represented by its (possibly infinite-dimensional) mass vector $a=
(a(1),a(2), \ldots)$. Thus, $L$ could be any scoring rule as
considered in Bayesian statistics (then $\cA = \Delta(\cU)$), but it
could also be $0/1$-loss, where $\cA$ is the set of point masses
(vectors with $1$ in a single component) on $\cU$, and $L(u,a)= 0$ if
$a(u) = 1$, $L(u,a) = 1$ otherwise.  For any bijection $f: \cU
\rightarrow \cU$ we define its extension $f: \cA \rightarrow \cA$ on
$\cA$ such that we have, for all $u \in \cU$, all $a \in \cA$, with
$a'= f(a)$ and $u'= f(u)$, $a'(u') = a(u)$. Thus any $f$ applied to
$u$ permutes this outcome to another $u'$, and $f$ applied to a
probability vector permutes the vector entries accordingly.

We say that $L$ is {\em symmetric\/} if for all bijections $f$,
all $u \in \cU, a \in \cA$, $L(u,a) = L(f(u),f(a))$. This requirement says that the loss is invariant under
any permutation of the outcome and associated permutation of the
action; this holds for important loss functions such as the
logarithmic and Brier score and the randomized $0/1$-loss, and many others.

We will also require that for all distributions $P$ for  $U$, there
exists at least one Bayes action $a_{P} \in \cA$ with $E_{P}[L(U,a_{P})] =
\min_{a \in \cA} E_{P}[L(U,a)]$ --- which again holds for the
aforementioned loss functions.  If there is more than one such act we
take $a_P$ to be some arbitrary but fixed function that maps each $P$
to associated Bayes act $a$.  In the theorem below we abbreviate
$a_{\prag{P}(U \mid V)}$ (the optimal action according to $\tilde{P}$
given $V$, i.e. a generalized RV that is a function of $V$) to
$\prag{a}_{V}$.
\begin{theorem}\label{thm:loss}
  Let $\prag{P}(U \mid V)$ be a pragmatic distribution where
  $\cZ$ is countable. Suppose that $\prag{P}(U \mid V)$ is
  pivotally safe with a simple pivot. Let $L: \cU \times \cA \rightarrow \reals \cup \{
  \infty\}$ be a symmetric loss function as above, and let
  $\prag{a}_V$ be defined as above. Then $\prag{P}$ is safe for
  $L(U,\prag{a}_V) \mid [V]$, i.e. for all $v \in \range{V}$, all $P
  \in \cP^*$,
$$
\Exp_{(U,V) \sim P}[L(U,\prag{a}_{V})] = \Exp_{\prag{P}}[L(U,\prag{a}_{V}) 
\mid V=v].
$$
\end{theorem}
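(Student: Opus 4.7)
The strategy is to show that the random variable $L(U, \tilde{a}_V)$ is in fact a function of the single pivotal random variable $U' := \tilde{p}(U|V)$ alone (i.e.\ does not depend separately on $V$), and then to invoke safety of $\tilde{P}$ for $U' \mid [V]$, which follows from Theorem~\ref{thm:discreteconfidence} applied to the simple-pivot hypothesis. Once we have $L(U,\tilde{a}_V) = g(U')$ for some fixed function $g$, Definition~\ref{def:maina} gives $\Exp_P[g(U')] = \Exp_{\tilde{P}}[g(U')\mid V=v]$ for every $P\in\cP^*$ and every $v\in\range{V}$, which is exactly the claimed identity.

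To reduce to a function of $U'$, I first invoke Theorem~\ref{thm:discreteconfidence} to conclude that $U' = \tilde{p}(U|V)$ is itself a simple pivot and that $\tilde{P}$ is safe for $U' \mid [V]$. In particular, $\tilde{P}(U'|V=v)$ is the same distribution for all $v$, which means the multiset of probabilities assigned by $\tilde{P}(U|V=v)$ is $v$-invariant; since they are pairwise distinct on the support (this is what ``$U' = \tilde{p}(U|V)$ is a simple pivot'' yields via the $1$-to-$1$ condition in Definition~\ref{def:pivot}), the sorted probability vector $(p_1^*,p_2^*,\ldots)$ is a $v$-independent object. Let $\pi^*$ denote the canonical pmf $\pi^*(i) = p_i^*$ on $\{1,2,\ldots\}$, and for each $v$ let $\sigma_v$ be the rank-bijection from $\support_{\tilde{P}}(U\mid V=v)$ onto $\{1,2,\ldots\}$ defined by $\sigma_v(u) = i \Leftrightarrow \tilde{p}(u|v) = p_i^*$.

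Now I use symmetry of $L$. Fix a Bayes action $\hat{a}^*$ for $\pi^*$. For any $v$ and any $a \in \cA$, symmetry gives $\Exp_{\tilde{P}}[L(U,a)\mid V=v] = \sum_u \tilde{p}(u|v) L(u,a) = \sum_i p_i^* L(i,\sigma_v(a)) = \Exp_{\pi^*}[L(U,\sigma_v(a))]$, so $\sigma_v^{-1}(\hat{a}^*)$ is a Bayes act for $\tilde{P}(U\mid V=v)$. I fix the selection rule $P \mapsto a_P$ so that $\tilde{a}_v = \sigma_v^{-1}(\hat{a}^*)$ (forced when the Bayes act is unique; otherwise a consistent choice made possible by this symmetry computation). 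Applying symmetry once more,
\begin{equation*}
L(U,\tilde{a}_V) \;=\; L\bigl(U, \sigma_V^{-1}(\hat{a}^*)\bigr) \;=\; L\bigl(\sigma_V(U), \hat{a}^*\bigr),
\end{equation*}
and $\sigma_V(U) = r(U')$ with $r(p_i^*) := i$ a $v$-independent inverse-sort map. Hence $L(U,\tilde{a}_V) = g(U')$ for $g(u') := L(r(u'),\hat{a}^*)$, and safety for $U'\mid[V]$ closes the argument.

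The main obstacle is the equivariance step in the middle paragraph: one must be sure that the arbitrary selection of Bayes action in the hypothesis can be replaced by the equivariant choice $\tilde{a}_v = \sigma_v^{-1}(\hat{a}^*)$ without loss of generality, and that $\sigma_v$ is well-defined on the support of $\tilde{P}(U|V=v)$. Both are handled cleanly by the combination of (a) symmetry of $L$, which makes the set of Bayes actions for $\tilde{P}(U|V=v)$ equal to $\sigma_v^{-1}$ applied to the Bayes actions for $\pi^*$, and (b) the fact that $v$-invariance of $\tilde{P}(U'|V=v)$ forces the sorted probability data, hence $\pi^*$ and $\hat{a}^*$, to be independent of $v$.
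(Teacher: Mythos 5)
Your reduction hinges on a wrong pivot. You invoke Theorem~\ref{thm:discreteconfidence} to conclude that $U' := \tpmf{U|V}(U|V)$ is itself a simple pivot and that $\tilde{P}$ is safe for $U'\mid[V]$, and then build the rank bijections $\sigma_v$ out of the claim that $\tilde{P}(U\mid V=v)$ assigns pairwise distinct probabilities on its support. But that distinctness is a \emph{hypothesis} of Theorem~\ref{thm:discreteconfidence} (the condition $\#\{u:\tilde{P}(U=u\mid V=v)=p\}\le 1$), not a conclusion, and it is not among the hypotheses of Theorem~\ref{thm:loss}. Your justification for distinctness --- ``this is what `$U'=\tpmf{U|V}(U|V)$ is a simple pivot' yields'' --- is circular, since the simple-pivot status of $\tpmf{U|V}(U|V)$ is exactly what you tried to import from Theorem~\ref{thm:discreteconfidence} using that distinctness. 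The hypothesis can genuinely fail while those of Theorem~\ref{thm:loss} hold: e.g.\ if $\tilde{P}(U\mid V=v)$ is uniform on a $k$-element set for every $v$, a nonconstant simple pivot $U''$ may exist and the theorem remains true, yet $\tpmf{U|V}(U|V)\equiv 1/k$ is constant, carries no information about $U$, and $L(U,\tilde{a}_V)$ is manifestly not a function of it. So the chain $L(U,\tilde{a}_V)=g(U')$, on which the whole argument rests, breaks.

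The fix is to abandon $\tpmf{U|V}(U|V)$ and work directly with the given simple pivot $U''=f(U,V)$ from the hypothesis, which is what the paper does. There $f_v:\range{U\mid V=v}\to\range{U''}$ is a bijection \emph{by definition of simple pivot}, ties or not, and plays exactly the role you wanted $\sigma_v$ to play; picking any fixed injection $g:\range{U''}\to\cU$ (say $g=f^{-1}_{v_0}$ for a fixed $v_0$), one uses symmetry of $L$ together with the Bayes-act equivariance identity $f(\tilde{a}_{P(U)})=\tilde{a}_{P(f(U))}$ to push the expectation under $\tilde{P}(\cdot\mid V=v)$ onto $U''$, swap $\tilde{P}(U''\mid V=v)$ for the common $P(U'')$ via pivotal safety, and pull back. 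Your equivariant-selection observation (that the selection rule $P\mapsto a_P$ must be chosen consistently with the symmetry when Bayes acts are non-unique) is a fair caveat and is present implicitly in the paper's derivation of $f(\tilde{a}_{P(U)})=\tilde{a}_{P(f(U))}$ as well; but the pivot choice is the substantive error.
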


\begin{example}\label{ex:montyhallb}{\bf [Use of Pivots beyond Statistical Inference: Monty Hall]} 
To illustrate Theorem~\ref{thm:loss},  consider again the Monty Hall problem (Example~\ref{ex:montyhall}) where the contestant chooses
  door $1$. We model this using RV $U \in \{1,2,3\}$ representing the
  door with the car behind it, and $V \in \{2,3\}$ the door opened by
  the quiz master; $\cZ = \{(1,2),(1,3),(2,3),(3,2)\}$ and for
  $z = (u,v)$, $U(z) = u$ and $V(z) = v$ (in this representation it is
  impossible for the quiz master to open a door with a car behind it).
  $\cP^*$ is the set of all distributions $P$ on $\cZ$ with uniform
  marginal $P(U)$, i.e. as usual, we assume the distribution of the
  car location to be uniform.  Let $\prag{P}$ be the conditional
  distribution for $U \mid V$ defined by
  $\prag{P}(U= 1 \mid V=2) = \prag{P}(U=1 \mid V=3) = 1/3$. This
  distribution can be arrived at using Bayes' theorem, starting with a
  particular $P \in \cP^*$, namely the $P$ with
  $P(V=2 \mid U=1) = P(V= 3 \mid U=1) = 1/2$, meaning that when the
  car is actually behind door $1$ and the quiz master has a free choice
  what door to open, he will flip a fair coin to decide. As this game
  was actually played on TV, it was in fact unknown whether the quiz
  master actually determined his strategy this way --- a quiz master
  who would want to be helpful to the contestant would certainly do it
  differently, for example choosing door $3$ whenever that is an
  option.  Nevertheless, most analyses, including Vos Savant's
  original one, assume this particular $\prag{P}$, and wars have been
  raging on the wikipedia talk pages as to whether this assumption is
  justified or not \citep{Gill11}.

  Interestingly, if we adopt this fair-coin $\prag{P}$ then $U' =
  \indicator_{U = 1}$ becomes a discrete simple pivot, in our sense,
  and $\prag{P}$ becomes pivotally safe, as is easily checked from
  Definition~\ref{def:pivot} and Definition~\ref{def:pivotal}: $f_v$
  in the definition is given by $f_2(1)=1, f_2(3) = 0, f_3(1)=1,
  f_3(2) = 0$ ($f_2(2)$ and $f_3(3)$ are undefined).  Thus $\prag{P}$
  is pivotally safe for Monty Hall and thus Theorem~\ref{thm:loss} can
  be applied, showing that, if DM takes decisions that are optimal
  according to $\prag{P}$, then these decisions will be exactly as
  good as she expects them to be for symmetric losses such as
  $0/1$-loss (as in the original Monty Hall problem) but also for the
  Brier and logarithmic loss. Relatedly, \cite{OmmenKFG16} shows that
  basing decisions on $\prag{P}$ will lead to admissible and minimax
  optimal decisions for {\em every\/} symmetric loss function (and, in
  a sequential gambling-with-reinvestment context, even when payoffs
  are asymmetric). This points to a general link between safety and
  minimax optimality, which we will explore in future work. Thus,
  while a strict subjective Bayesian would be {\em forced\/} to adopt
  a single distribution here --- for which we do not see very
  compelling grounds --- one can just adopt the uniform $\prag{P}$ for
  entirely pragmatic reasons: it will be minimax optimal and as good
  as one would expect it to be if it were true, even if it's in fact
  wrong --- it may, perhaps, be the case, that people have inarticulate intuitions in this direction and therefore insist that $\tilde{P}$ is `right'. 
\end{example}

\section{Beyond Conditioning; Beyond Random Variables}
\label{sec:general}
We can think of our pragmatic $\tilde{P}(U|V)$ as probability updating
rules, mapping observations $V=v$ to distributions on $U$. We required
these to be compatible with conditional distributions: $\tilde{P}(U
|V)$ must always be the conditional of {\em some\/} distribution
$\tilde{P}$ on $\cZ$, even though this distribution may not be in
$\cP^*$. Perhaps this is too restrictive, and we might want to
consider more general probability update rules. Below we indicate how
to do this --- and present Proposition~\ref{prop:above} which seems to
suggest that rules that are incompatible with conditional probability
are not likely to be very useful. We then continue to extend our
approach to update distributions given {\em events\/} rather than RVs,
leading to the `sanity check' we announced in the introduction.
For simplicity, in this section we restrict ourselves again to $V$
with countable range.
\begin{definition}\label{def:predictive}{\bf [Probability Update Rule]}
Let $U$ be an RV and $V$ be a generalized RV 
on $\cZ$ where $\range{V}$ is countable. 
A {\em probability update rule} $\pur{Q}(U \| V)$ 
is a function from $\range{V}$ to the set
  of distributions on $\range{U}$. We call $\pur{Q}(U \|V)$ {\em logically coherent\/} if, for each $v \in
  \range{V}$, the corresponding distribution on $\range{U}$,
  denoted $\pur{Q}(U \| V=v)$, satisfies
\begin{equation}\label{eq:compatibility}
\pur{Q}(U \in \{ u: (u,v) \in \range{(U,V)} \} \| V = v) = 1.\end{equation}
We call $\pur{Q}(U \| V)$ {\em compatible with conditional probability\/}
if there exists a distribution $P$ on $\cZ$ with full support for $V$ ($\support_{P}(V)= \range{V}$) such that $\pur{Q}(U \| V) \equiv P(U \mid V)$. 
\end{definition}
Logical coherence is a weak requirement: if RVs $U$ and $V$ are {\em
  logically separated}, i.e.  $\range{(U,V)} = \range{U} \times
\range{V}$ (as is the case in all examples in this paper except
Example~\ref{ex:montyhallb}) then clearly every, arbitrary function
from $\range{V}$ to the set of distributions on $\range{U}$ is a
logically coherent probability update rule.  However, if
$\range{(U,V)} \neq \range{U} \times \range{V}$, then there are
logical constraints between $U$ and $V$. For example, we may have $\cZ
= \{1,2\}$, and $U(z) = V(z) = z$ (so that $U$ and $V$ are
identical). Then a probability update rule $\pur{Q}$ with $\pur{Q}(U=1
\| V=2) = 1$ would be logically incoherent.  Every rule that is
compatible with conditional probability is logically coherent; there
does not seem much use in using logically incoherent rules.

For given $\pur{Q}(U \| V)$ we can now define safety for $U|V$, $U
|\ave{V}$ and calibration for $U|V$ as before, using
Definition~\ref{def:leftsafety}, \ref{def:maina} and~\ref{def:calibration}. Note however that
notions like safety for $U \mid \cali{V}$ and 
pivotal safety are not defined, since these are defined
in terms of marginal distributions  for $U$ (or $U'$,  respectively), and the marginal $\pur{Q}(U)$ is undefined for probability update rules $Q$.
\begin{proposition}\label{prop:above}
  If $\tilde{Q}(U \| V)$ is not compatible with conditional probability, then
  it is not safe for $U \mid \ave{V}$ (and hence, as follows directly from the hierarchy of Figure~\ref{fig:figure},  also unsafe for $U
  \mid V$, and also not calibrated for $U \mid V$).
\end{proposition}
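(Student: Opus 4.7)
I will prove the contrapositive: assuming $\tilde{Q}$ is safe for $U\mid\ave{V}$, I will show that $\tilde{Q}$ is compatible with conditional probability. The main lever will be Proposition~\ref{prop:newstart}(1), which applies unchanged to update rules and tells us that safety is equivalent to the existence, for each $P\in\cP^*$, of a distribution $P'$ on $\cZ$ satisfying $P'(U=u,V=v)=\tilde{Q}(U=u\|V=v)\cdot P(V=v)$ and $P'(U)=P(U)$.

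The first step is to exploit the fact that $P'$ must actually live on $\cZ$. Since no $z\in\cZ$ realises a pair $(u,v)\notin\range{(U,V)}$, we automatically have $P'(U=u,V=v)=0$ for such $(u,v)$, and the factorisation then forces $\tilde{Q}(U=u\|V=v)=0$ whenever $v\in\support_P(V)$ and $(u,v)\notin\range{(U,V)}$. Letting $P$ range over $\cP^*$ shows that $\tilde{Q}$ is logically coherent at every $v\in V_{\mathrm{rel}}:=\bigcup_{P\in\cP^*}\support_P(V)$.

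The second step is to build a single distribution $P^{**}$ on $\cZ$ with $\support_{P^{**}}(V)=\range{V}$ and $P^{**}(U\mid V)\equiv\tilde{Q}(U\|V)$. Values of $\tilde{Q}$ at $v\notin V_{\mathrm{rel}}$ never enter any safety expectation, so I would first replace them by an arbitrary logically coherent distribution on $\range{U\mid V=v}$, yielding a rule that is logically coherent on all of $\range{V}$. Then, fixing any fully supported $\pi$ on $\range{V}$ and, for each $(u,v)\in\range{(U,V)}$, a representative $z_{u,v}\in\cZ$ with $U(z_{u,v})=u$ and $V(z_{u,v})=v$, I set $P^{**}(z_{u,v}):=\pi(v)\cdot\tilde{Q}(U=u\|V=v)$ (and $0$ otherwise). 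Logical coherence ensures $P^{**}$ is a probability distribution, its marginal $P^{**}(V)=\pi$ has full support, and $P^{**}(U\mid V)=\tilde{Q}(U\|V)$, delivering compatibility.

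The main obstacle will be the gap between ``safety pins down $\tilde{Q}$ only on $V_{\mathrm{rel}}$'' and ``compatibility demands full support on all of $\range{V}$''; the harmless-extension step above is what bridges it, the point being that the definition of compatibility asks for \emph{some} witness $P$, not a canonical one induced by $\cP^*$.
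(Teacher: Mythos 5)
Your first two paragraphs match the paper's intended argument: the paper's proof is just ``follows directly from Proposition~\ref{prop:newstart},'' and the intent is precisely what you spell out --- from safety for $U\mid\ave{V}$ extract, for each $P\in\cP^*$, a $P'$ on $\cZ$ whose conditional of $U$ given $V$ agrees with $\tilde{Q}$ on $\support_P(V)$, so that $\tilde{Q}$ is logically coherent on $V_{\mathrm{rel}} := \bigcup_{P\in\cP^*}\support_P(V)$ and extends to a genuine conditional there.

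The bridging step, however, is invalid. When you ``replace'' the values of $\tilde{Q}$ at $v\notin V_{\mathrm{rel}}$ you are building a \emph{different} update rule $\tilde{Q}'$, and the $P^{**}$ you construct satisfies $P^{**}(U\mid V)\equiv\tilde{Q}'(U\|V)$, not $P^{**}(U\mid V)\equiv\tilde{Q}(U\|V)$: the $\equiv$ in Definition~\ref{def:predictive} is equality of the two functions at every $v\in\range{V}$, and $\tilde{Q}'$ disagrees with $\tilde{Q}$ outside $V_{\mathrm{rel}}$. You have therefore only established that the \emph{restriction} of $\tilde{Q}$ to $V_{\mathrm{rel}}$ extends to a compatible rule --- weaker than the compatibility of $\tilde{Q}$ itself. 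Nor can the gap be closed by a cleverer construction: if some $v_0\in\range{V}\setminus V_{\mathrm{rel}}$ has $\range{U\mid V=v_0}\subsetneq\range{U}$ and $\tilde{Q}(U\|V=v_0)$ places mass outside $\range{U\mid V=v_0}$, then $\tilde{Q}$ is not even logically coherent (hence certainly not compatible), yet it stays safe for $U\mid\ave{V}$ because safety only probes $\tilde{Q}$ at $P$-supported values of $V$. So the contrapositive you set out to prove fails in general without an added hypothesis such as $V_{\mathrm{rel}}=\range{V}$; the paper's ``follows directly'' tacitly presupposes this, and your proof has the merit of exposing the gap but does not actually close it.
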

\begin{proof} Follows directly from the characterization of safety for
  $U \mid \ave{V} $ given in
  Proposition~\ref{prop:newstart}. \end{proof} This result suggests
that rules that are incompatible with conditional probability are not
likely to be very useful for inference about $U$; the result says
nothing about the weaker notions of safety with $\ave{U}$ rather than
$U$ on the left though, or with $\wksafe{V}$ instead of $\ave{V}$ or $V$ on the left.  

\paragraph{Conditioning based on Events}
Suppose we are given a finite or countable set of outcomes $\cU$ with a
distribution $P_0$ on it, as well as a set $\cV$ of nonempty subsets
of $U$. We are given the information that the outcome is contained in the set 
$v$ for some $v \in
\cV$, and we want to update our distribution $P_0$ to some new
distribution $P'_0(\cdot \| v)$, taking the information in $v$ into
account. A lot of people would resort to {\em naive conditioning\/} here
\citep{GrunwaldH03}, i.e. follow the definition of conditional
probability and set $P'_0$ to $P_{\text{naive}}(\{u \} \mid v) :=
P_0(\{u\})/P_0(v)$. We want to see whether such a $P_{\text{naive}}$
is {\em safe}. To this end, we must translate the setting to our
set-up: to make a probability update rule in our sense well-defined
(Definition~\ref{def:predictive}), we must have a space $\cZ$ on which
the RV $U$, denoting the outcome in $\cU$, and $V$, denoting the
observed set $v$, are both defined. To this end we call {\em any\/}
set $\cZ$ such that, for all $u \in \cU, v \in \cV$ with $u \in v$,
there exists a $z \in \cZ$ with $U(z) = u$ and $V(z) = v$, a set {\em
  underlying $\cU$ and $\cV$\/} (we could take $\cZ = \cU \times \cV$,
but other choices are possible as well). We then set $\cP^*$ to be the
set of all distributions $P$ on $\cZ$ with marginal distribution $P_0$
on $U$ and, for all $v \in \cV$, $P(U \in v \mid V=v) = 1$. We may now
ask whether the naive update,
\begin{equation}\label{eq:naive}
\prag{Q}(U= u \| V=v) :=P_{\text{naive}}(\{u \}\mid v)
\end{equation} is safe. The
following proposition shows that in general it is not:
\begin{proposition}{\ \bf [\cite{GrunwaldH03}, rephrased]}
For given $P_0$, $\cU$ and $\cV$, let $\cZ$ be any set underlying $\cU$ and $\cV$ and let $\cP^*$ be the associated set of distributions compatible with $P_0$.
We have: $\prag{Q}(U \| V)$ defined as in (\ref{eq:naive}) is the conditional of some distribution $\prag{Q}$ on $\cZ$ that is safe for $U \mid V$ if and only if $\cV$ is a 
partition of $\cU$.\end{proposition}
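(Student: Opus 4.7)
The plan is to reduce the safety condition via Proposition~\ref{prop:newstart}, part~3: $\prag{Q}$ is safe for $U \mid V$ iff $\prag{Q}(U \mid V) =_{P} P(U \mid V)$ for every $P \in \cP^*$. So the proposition becomes the question of when the naive conditional from (\ref{eq:naive}) coincides, on the relevant support, with the genuine conditional $P(U \mid V)$ for \emph{every} $P \in \cP^*$. Throughout, one restricts attention to $u$ and $v$ with $P_0(\{u\})>0$ and $P_0(v)>0$, which are the only ones on which the naive rule is constrained and the only ones that are not null under every $P \in \cP^*$.

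For the $(\Leftarrow)$ direction, assume $\cV$ partitions $\cU$ (restricted to the support of $P_0$). Then for each $u$ in the support there is a unique $v(u) \in \cV$ with $u \in v(u)$, so the constraint $P(U \in v \mid V = v) = 1$ built into $\cP^*$ forces $P(V = v(u) \mid U = u) = 1$ for every $P \in \cP^*$. Hence $V$ is $P$-a.s.\ a deterministic function of $U$; a direct computation gives $P(V = v) = \sum_{u \in v} P_0(\{u\}) = P_0(v)$ and $P(U = u \mid V = v) = P_0(\{u\})/P_0(v)$ for $u \in v$, which is exactly the naive formula. Taking $\prag{Q}$ to be any element of $\cP^*$ then gives a distribution on $\cZ$ whose conditional is the naive rule and which is safe for $U \mid V$ by the reduction above.

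For the $(\Rightarrow)$ direction, suppose $\cV$ is not a partition of the support of $P_0$. Assuming $\cP^* \neq \emptyset$ (else safety is vacuous), one argues that $\bigcup \cV \supseteq \operatorname{supp}(P_0)$, so the only failure mode is overlap: there exist distinct $v_1, v_2 \in \cV$ and $u^* \in v_1 \cap v_2$ with $P_0(\{u^*\}) > 0$. I would exhibit a $P \in \cP^*$ witnessing the mismatch as follows: choose for every $u \in \operatorname{supp}(P_0)$ some $v(u) \in \cV$ with $u \in v(u)$, with $v(u^*) := v_2$, and let $P$ be the joint law under which $U \sim P_0$ and $V = v(U)$ deterministically. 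Then $P(V = v_1 \mid U = u^*) = 0$, so $P(U = u^* \mid V = v_1) = 0$ whenever $v_1 \in \operatorname{supp}_P(V)$, whereas the naive rule predicts $P_0(\{u^*\})/P_0(v_1) > 0$; this contradicts safety by Proposition~\ref{prop:newstart}(3). If $v_1 \notin \operatorname{supp}_P(V)$, one swaps the roles of $v_1$ and $v_2$ (setting $v(u^*) := v_1$) and obtains an analogous violation at $v_2$; the degenerate case $v_1 = v_2 = \{u^*\}$ is ruled out by $v_1 \neq v_2$.

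The main obstacle is the bookkeeping around supports in the $(\Rightarrow)$ direction: one must make sure that the constructed $P$ is actually in $\cP^*$ (i.e., has the correct marginal $P_0$ and respects $P(U \in v \mid V = v) = 1$, which the deterministic construction achieves by design), and that the chosen $v_i$ lies in $\operatorname{supp}_P(V)$ so that the conditional probability at which the mismatch occurs is well-defined. The symmetric fallback $v(u^*) := v_1$ handles the remaining edge case, and the initial reduction to $\operatorname{supp}(P_0)$ keeps the partition notion well-posed.
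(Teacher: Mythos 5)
You've correctly reduced the question, via Proposition~\ref{prop:newstart}(3), to whether the naive conditional agrees with $P(U\mid V)$ for all $P\in\cP^*$ on the $P$-support of $V$, and your $(\Leftarrow)$ direction is fine. (For what it's worth, the paper does not give its own proof of this proposition --- it is imported from \cite{GrunwaldH03} --- so the comparison is really just to the claimed statement.)

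The gap is in the $(\Rightarrow)$ direction. You construct $P$ by sending each $u$ to an \emph{arbitrary} $v(u)\in\cV$ containing it, with $v(u^*):=v_2$, and then hope that $v_1\in\support_P(V)$; if not, you swap to $v'(u^*):=v_1$ and claim a violation at $v_2$. But $v'(u)=v(u)$ for $u\neq u^*$, and $v'(u^*)=v_1\neq v_2$, so $P'(V=v_2)=P_0(\{u\neq u^*: v(u)=v_2\})$ --- the swap does nothing to put mass on $v_2$. If your arbitrary choices never pointed any $u\neq u^*$ to $v_1$ or to $v_2$, both attempts fail. The fix is to make one of the ancillary choices deliberate rather than arbitrary: pick, if it exists, some $u'\in\big((v_1\cup v_2)\cap\support(P_0)\big)\setminus\{u^*\}$, route it to whichever of $v_1,v_2$ contains it (say $v_1$), set $v(u^*):=v_2$, and then $P(V=v_1)\geq P_0(\{u'\})>0$ while $P(U=u^*\mid V=v_1)=0\neq P_0(\{u^*\})/P_0(v_1)$.

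Your closing remark --- that the degenerate case is $v_1=v_2=\{u^*\}$, ruled out by $v_1\neq v_2$ --- is where the support-restriction you adopted earlier stops being harmless. The case where \emph{no} such $u'$ exists is $v_1\cap\support(P_0)=v_2\cap\support(P_0)=\{u^*\}$, which is compatible with $v_1\neq v_2$ as subsets of $\cU$ whenever $P_0$ lacks full support. In that case the naive rule assigns probability $1$ to $u^*$ at both $v_1$ and $v_2$, which is also what every $P\in\cP^*$ does on $\support_P(V)$, so there is \emph{no} violation and naive conditioning is safe even though $\cV$ overlaps. This is a genuine counterexample to the proposition under your support-restricted reading; it shows the statement implicitly assumes either full support of $P_0$, or a notion of partition taken modulo $P_0$-null differences. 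Under full support the edge case collapses to $v_1=v_2=\{u^*\}$ and your remark is correct --- but the swap argument is still wrong as written and needs to be replaced by the deliberate routing described above.
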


If $\cV$ is not a partition of $\cU$, then in some cases
$\prag{Q}(U\|V)$ is still compatible with conditional probability;
then it is still potentially safe for $U |V$; in other cases it is not even
compatible with conditional probability and hence by
Proposition~\ref{prop:above} guaranteed to be unsafe. The main result
of \cite{GillG08} can be re-interpreted as giving a precise
characterization of when this guaranteed unsafety holds.

To illustrate, consider Monty Hall, Example~\ref{ex:montyhall}
again. In terms of events, $\cU = \{1,2,3\}$ and $\cV = \{ \{1,2 \},
\{1,3\}\}$: if Monty opens door $x, x \in \{2,3\}$, then the event
`car behind door $1$ or $x$', i.e. $\{1,x\}$ is observed, so
$P_{\text{naive}}(\{1\} \mid \{1,2\}) = P_{\text{naive}}(\{1\} \mid
\{1,3\}) = 1/2$, leading to the common false conclusion that the car
is equally likely to be behind each of the remaining closed
doors. Clearly though, $\cV$ is {\em not\/} a partition of $\cU$,
since it has overlap, so by the proposition, $P_{\text{naive}}$ is {\em
  unsafe\/} for $U \mid V$. Intuitively, it is easy to see why: if
$U=1$, the quiz master has a choice what element of $\cV$ to present,
and may do this by flipping a coin with bias $\theta$. Therefore the
set $\cP^*$ has an element $P_{\theta}$ corresponding to each $\theta
\in [0,1]$, and the correct conditional distribution $P_{\theta}(U=1 \mid V=v)$ depends on
$\theta$ in a crucial way (and will in fact not be equal to
$\prag{Q}$, no matter the value of $\theta$). But DM need not be
concerned with any of these details: what matters is that naive
conditioning is not safe, which, by Proposition~\ref{prop:above}, is immediate from the fact that $\cV$ is not a partition of $\cU$. 

  The fact that conditioning is problematic
if one conditions on something not equal to a partition has in fact
been known for a long time, see e.g. \cite{Shafer85} for the first
landmark reference. Our point is simply to show that the issue
fits in well with the safety concept.  There is an obvious analogy
here with the Borel-Kolmogorov paradox \citep{schweder1996bayesian}
which presumably could also be recast in terms of safety. As
\cite{Kolmogorov33} writes, `` The concept of a conditional
probability with regard to an isolated hypothesis whose probability
equals 0 is inadmissible.'' Safe probability suggests something more
radical: standard conditional probabilities with regard to an isolated
hypothesis (event) are {\em never\/} admissible --- if one does not
know whether the alternatives form a partition, setting $\prag{P}$ to
be the standard conditional distribution is inherently unsafe.

\section{Parallel, Earlier and Future Work; Open Problems and Conlusion} 
\label{sec:conclusion}
\subsection{Parallel Work: Safe Testing}
There is one application of safe probability that has so many
implications that we decided to devote a separate paper to it, which
we hope to finish soon. This is the use of safety concepts in {\em
  testing}, already alluded to in the introduction. Here, let us just
very briefly outline some main ideas. Consider a testing problem where
we observe data $X^N$ and $h_0$ stands for a null hypothesis which
says that data are a sample from some $P_0$ belong to a statistical
model $\cM_0$. For simplicity we will only consider the case of a
point null hypotheses in this mini-overview, so $\cM_0 = \{P_0\}$ is a
singleton. $h_1$ represents another set of distributions $\cM_1$,
which may, however, be exceedingly large --- in fact it may impossible
for us to state it exactly, for it may be, for example, as broad as
`the data are a sample of text in some human language unknown to
us'. We associate $h_0$ with distribution $P_0$ and corresponding
density or mass function $p_0$, and $h_1$ with some {\em single\/}
distribution $P_1$ with associated $p_1$. If $\cM_1$ is a parametric
model, or a large but still precisely specifiable model such as a
nonparametric model, then we might take $P_1$ to be the Bayes marginal
distribution under some prior $\Pi$, $p_1(X^n) := \int p(X^n)
d\Pi(p)$, but other choices are possible as well, and may sometimes
even be preferable.

We now define a `posterior' $\tilde{P}(H \mid X^N)$ by setting 
\begin{equation}\label{eq:pseudopost}
\tilde{P}(H= h_0 \mid X^N) := \frac{p_0(X^n)}{p_0(X^n)+ p_1(X^n)},
\end{equation}
which would coincide with a standard Bayesian posterior based on prior
$(1/2,1/2)$ if we used a $p_1$ set in the Bayesian way described
above. In that special case it also broadly corresponds to the method
introduced by \cite{BergerBW94} that, in its culminated form
\citep{Berger03} provides a testing method that has a valid
interpretation within the three major testing-schools: Bayes-Jeffreys,
Neyman-Pearson and Fisher. Readers familiar with the MDL (minimum
description length) paradigm \citep{Grunwald07} will notice that for
every complete lossless code for $X_1,X_2, \ldots$ that encodes $X^N$
with $L(X^n)$ bits, setting $p_1(X^n) = 2^{-L(X^x)}$ provides a
probability mass function on sequences of length $n$. Thus, if one has
a code available which one thinks might compress data well, one can
set $p_1$ in this non-Bayesian way. The log-posterior odds $\log
\tilde{P}(h_0 \mid X^N)/\tilde{P}(h_0 \mid X^N) = - \log p_1(X^n) +
\log p_0(X^n)$ then have an interpretation as the {\em number of bits
  saved by compressing the data with the code $L$ compared to the code
  that would be optimal under $P_0$\/}; thus, the approach of
\cite{ryabko2005using} neatly fits into this framework; so does the
Martingale testing approach of \cite{Vovk93} in which $p_1$ is
determined by a sequential gambling strategy; for any gambling
strategy $g$, there is a corresponding probability mass function $p_1$
such that the inverse of the (pseudo-) `Bayes factor'
\begin{equation}\label{eq:bf}
  \frac{\tilde{P}(h_0 \mid X^N)}{\tilde{P}(h_0 \mid X^N)} = 
  \frac{p_0(X^n)}{p_1(X^n)}\end{equation} 
can be
interpreted as the amount of money gained by gambling strategy $g$
under pay-offs that would be fair (yield no gain in expectation) if
the null $P_0$ were true: $p_1(X^n)/p_0(X^n)$ is the factor by which one's initial capital is multiplied if one gambles according to $g$ under odds that are fair under $h_0$, so that the more money gained, the larger the evidence against $h_0$. For an example of useful non-Bayesian gambling strategies (or equivalently, distributions $p_1$) we refer to the {\em switch distribution\/} of \cite{ErvenGR07}. 

\paragraph{Where Safety Comes In}
One can now base inferences on $\tilde{P}(H \mid X^N)$ just as a
Bayesian would --- with the essential stipulation that one only does
this for the subset of inferences that once considers {\em safe\/} in
the appropriate sense. For example, suppose that the data compressor
{\tt gzip} compresses our sequence of data substantially more than our
null hypothesis $P_0$, which says that the outcomes are
i.i.d. Bernoulli$(1/2)$.  Thus, $\tilde{P}(H = h_0 \mid X^N)$ will be
exceedingly small, yet the $p_1$ corresponding to {\tt gzip} may
certainly not be considered `true'. We thus do not want to take the
predictions made by $p_1$ for future data $X_{N+1},\ldots$ too
seriously. We can accomplish this by declaring that $p_1$ is {\em
  not\/} safe for $X_{N+1} \mid X^N$ relative to $\cP^* \mid H=
h_1$. Note that we can declare such unsafety without actually
precisely specifying $\cP^*| H = h_1$, which may be too complicated to
do.  On the other hand, if we do believe that $p_1$ accurately
describes our knowledge of $\cM_1$, e.g. $\cM_1$ is small and $p_1$ is
a Bayes marginal distribution based on substantial prior knowledge
codified into $\Pi$, then we can declare $p_1$ to be safe relative to
$\cP^* \mid H= h_1$.  We thus have a single framework that encompasses
both the Fisherian (falsificationist) and the Bayes/Neyman-Pearson
testing paradigms, depending on what inferences we consider safe. On a
technical level however, this framework avoids many difficulties of
the standard implementations of the Fisherian and the Neyman-Pearson
paradigms. Compared to Fisher, we avoid the use of $p$-values
(although the `Bayes factor' (\ref{eq:bf}) can be interpreted as a
{\em robustified, sample-plan independent\/} $p$-value
\citep{shafer2011test,PasG14}).  We consider this a very good thing in
the light of the many difficulties surrounding $p$-values such as (to
mention just two) their dependence on the sampling plan, making them
impossible to use in many simple situations and their interpretation
difficulties \citep{Berger03,Wagenmakers07}. Compared to
Neyman-Pearson's original formalism, we do not just get an `accept' or
`reject' decision, but also a measure of evidence (the (pseudo-) `Bayes factor')
that can be used to infer stronger conclusions as we get stronger
evidence --- in contrast to conclusions based on $p$-values, such
conclusions often remain {\em safe\/} in the appropriate sense.

Indeed, in the second part of this work we consider safety of
$\tilde{P}(H \mid X^N)$ in terms of loss functions $L(H,\delta(X^N))$
where $H \in \{h_0,h_1\}$ and $\delta(X^N)$ is the Bayes act based on
$\tilde{P}(H \mid X^n)$. In the simplest case takes $\delta$ takes
values in the decision set $\cA= \{ {\tt accept},{\tt reject} \}$. We
find that, under some conditions on $p_1$, $\tilde{P}(H \mid X^n)$ is
safe for $L(H,\delta(X^N)) \mid \wksafe{X^N}$, i.e. we have safety in
the weakest (but still useful) sense defined in this paper. While
standard Type-I and Type-II error guarantees of the Neyman-Pearson
approach can be recast in this way, safety continues to hold if $\cA$
has more than two elements with different losses associated --- a
realistic situation which cannot be handled by either a Neyman-Pearson
or a Fisherian approach. In this situation, making the right decision
means one has to take the strength of evidence into account --- if
there is more evidence against $h_0$, then the best action to take
will have lower loss under $h_1$ but higher loss under $h_0$. As soon
as there are more than two actions, measuring evidence in terms of
unmodified $p$-values leads to unsafe inferences; yet inferences based
on the (pseudo)-posterior tend to remain safe.

Furthermore, we can also check whether we retain safety under {\em
  optional stopping\/} \citep{BergerBW94,shafer2011test,PasG14}. We
find that, under further conditions on $p_0$ and $p_1$, $\tilde{P}$
remains safe for $L(H,\delta(X^N)) \mid \wksafe{X^N}$, even though $N$
is now a RV (determined by the potentially unknown stopping rule) with
an unknwon distribution. Interestingly, things get even better with a
slight modification of $\tilde{P}$, where we set $\tilde{P}(H = h_0 \mid
X^N)$ to $\max\{1, p_0(X^N)/p_1(X^N)\}$, i.e. we use the posterior {\em odds\/}
{\em as\/} the posterior {\em probability}. With this `posterior' we automatically get (weak)
safety under arbitrary optional stopping and for essentially arbitrary
loss functions -- no more conditions on $p_0$ and $p_1$ are
needed. The reason is that with this choice $\tilde{P}(H \mid X^n)$
becomes bounded by a test martingale in the sense of
\cite{Vovk93} and \cite{shafer2011test}. If we want to use the standard posterior
as \cite{Berger03} does, we either need to change the action $\delta$
a little (introducing a so-called `no-decision region', as also done by \cite{BergerBW94}) or make strong
assumptions about $p_0$ and $p_1$.

It is often claimed that optional stopping is not a problem for
Bayesians, since the Bayesian inferences do not depend on the sampling
plan. For objective Bayesian inference, this is incorrect (priors such
as Jeffreys' do depend on the sampling plan); for subjective Bayesian
inference, this statement is correct only if one really fully believes
one's own subjective prior. As soon as one uses a prior partially for
convenience reasons --- which happens in nearly all practical
scenarios --- validity of the conclusions under optional stopping is
compromised. Safe testing allows one to establish validity under
optional stopping, in a `weak safety' sense, even in such cases ---
essentially, one's conclusions will be safe under optional stopping
under any $P$ in the set $\cP^*$ of possible distributions, not just
the single distribution one adopts as a subjective Bayesian.

\subsection{Earlier Work and  Future Work}
The idea that fiducial or confidence distributions can be used for
valid assessment of some, not all, RVs or events that can be defined
on a domain has been stressed by several authors,
e.g. \cite{SchwederH02,xie2013confidence,Hampel06}. The novelty here
is that we formalize the idea and place it in broader context and
hierarchy.  The idea of replacing sets of distributions by a single
representative also underlies the MDL Principle \citep{DeRooijG09},
yet again, without broader context or hierarchy. It is also the core
of the pignistic transformation advocated by \cite{Smets89}
as part of his {\em transferable belief model}, which, apart from the
transformation idea, seems to be almost totally different from safe
probability however --- it would be interesting to sort out the
relations though. I already noted in the introduction that my own
earlier work contains various definitions of partial notions of
safety, but unifying concepts, let alone a hierarchy, were lacking.

\paragraph{Future Work I: Safety vs. Optimality}
There is one crucial issue though that we neglected in this paper, and
that was brought up earlier, to some extent, by \citep{Grunwald00a}
and \citep{GrunwaldH04}: the fact that mere safety isn't enough --- we
want our pragmatic $\prag{P}$ to also have optimality properties (see
e.g. Example~\ref{ex:calibration} for the trivial weather forecaster
who is calibrated (safe) without being optimal). As indicated by
Example~\ref{ex:montyhallb} in the present paper, and also by
\citep{Grunwald00a} and more implicitly by \cite{OmmenKFG16}, there is
a link between safety and minimax optimality which sometimes can be
exploited, but much remains to be done here --- this is our main goal
for future work.

\paragraph{Future Work II: Objective Bayes} Safe Probability may also
be fruitfully be applied to objective Bayesian inference. For example,
consider inference of a Bernoulli parameter based on an `objective'
Jeffreys' prior
$\pi(\theta) \propto \theta^{-1/2} (1- \theta)^{-1/2}$. Use of such a
prior may certainly be defensible because of its geometric and
information-theoretic properties \citep{DeRooijG09}, but what if we
have a very small sample of just 1 or even 0 outcomes?  Then Jeffreys'
prior would tell us, for example, that a bias $\theta$ between $0$ and
$0.01$ is 10 times as likely than a bias between $0.495$ and
$0.505$. Most objective Bayesians would probably not be prepared to
gamble on that proposition.\footnote{One might object that an actual
  value of $\theta$ may not even exist, and certainly will never be observed,
  so one cannot gamble on it. But I could propose this gamble instead:
  I will toss the biased coin 10000 times, and only reveal to you the
  final relative frequency of heads. How much would you bet on it
  being $\leq 0.01$?}  This is fine, but then what propositions would
an objective Bayesian be prepared to gamble on, and what not? Bayesian
inference has no tools to deal with this question --- and--- in a
manner similar to characterization of safety for fiducial
distributions --- safe probability may offer them.
\paragraph{Future Work III: Epistemic Probability} More generally,
both objective Bayesian and fiducial methods have been proposed as
candidates for {\em epistemic probability\/}
\citep{Keynes21,Carnap50,Hampel06} but it is unclear how exactly
such a notion of probability should be connected to decision theory
--- while a Bayesian or frequentist probability of $0.01$ on outcome
$A$ implies that a (not too risk-averse) DM would be willing to pay
one dollar for a lottery ticket that pays off 200 dollar if $A$ turns
out to be the case, for epistemic probability this is not so
clear. Safe probability suggests that it might be fruitful to view
epistemic probabilities as assuming a willingness to bet on a {\em
  strict subset\/} of all events $\cA$ that can be defined on the given
space.

\paragraph{Open Problems}
Other future work involves open problems, as mentioned in the caption
of Figure~\ref{fig:figure}. Of particular interest is whether we can
extend confidence safety to multidimensional $U$.  Earlier work
\citep{DawidS82,Seidenfeld92} suggests that then in general, there
will be multiple, different choices for $\tilde{P}$, none of which is
inherently `best'. A major additional goal for future work is to
identify subjective considerations that may lead one to prefer one
choice over another, cf. the idea of `luckiness'
\citep{DeRooijG09}. Another intriguing question is whether safety can
be re- construed as an {\em extension\/} of measure theory --- which
has also been designed to restrict the notion of (probability)
measures so that they cannot just be applied to any set one likes.
Yet another avenue is to extend the definition of pragmatic
distributions using upper- and lower expectations, replacing
$\tilde{P}$ by a set of distributions $\tilde{\cP}$ (this is briefly
detailed in Appendix~\ref{app:partial}).  Then both $\prag{\cP}$ and
$\cP^*$ would fall into the `imprecise probability' paradigm; we could
still get nontrivial predictions as long as $\prag{\cP}$ is more
`specific' than $\cP^*$. Such an extension would hopefully allow us to
represent the random-set approach to fiducial inference from
\cite{Dempster68} and its modern extensions, such as the inferential
models of \cite{martin2013inferential}, as an extension of pivotal
safety. Here confidence-safe probabilities would be replaced by
confidence-safe probability intervals; perhaps one could even arrive
at a general description of what applications of Dempster-Shafer
theory \citep{Shafer76,Dempster68} are safe at all, and if so, to what
degree they are safe.

\paragraph{Acknowledgements}
This manuscript has benefited a lot from various discussions over the
last fifteen years with Philip Dawid, Joe Halpern and Teddy
Seidenfeld. Special thanks go to Teddy as well as to Gert de Cooman
and Nils Hjort for providing encouragement that was essential to get
this work done.  This research was supported by the Netherlands
Organization for Scientific Research (NWO) VICI Project Nr.
639.073.04.
\bibliography{master,peter,MDL}
\appendix
\section{Technical Extras and Proofs}\label{app:technical}
\subsection{Details for Section~\ref{sec:pstar}: partially specified $\prag{P}$}\label{app:partial}

As promised in the main text, here we consider $\prag{P}$ that are
only partially specified. We may think of
these again as {\em sets\/} of distributions, just as we do for
$\cP^*$.  For example, consider an update rule $\pur{Q}(U \| V)$ as in
Definition~\ref{def:predictive} that is compatible with conditional
probability. Such a $\pur{Q}$ is a prime example of a {\em partially
  specified pragmatic distribution\/}: it is the conditional of at
least one distribution $P$ on $\cZ$, but there may (and will) be many
more, different $P$ for which it is also the conditional. We may thus
associate $\pur{Q}$ with the (nonempty) set $\pragset{\cQ}$ of all such distributions
$P$ on $\cZ$ with $P(U \mid V)= \pur{Q}(U \| V)$.  Then clearly, for every
RV $U'$ on $\cZ$ with $(U,V) \determines U'$, all $Q_1, Q_2 \in \pragset{\cQ}$,
we have $Q_1(U' |V) = Q_1 (U' |V)$; thus the distribution of such
$U_1|V$ is determined by $\pragset{\cQ}$; but for $U'$ not determined by
$(U,V)$, there may be $Q_1, Q_2 \in \pragset{\cQ}$ with
$Q_1(U' |V) \neq Q_2(U'|V)$ and we may have to make  assessments about $U'$ given $V$ in terms of lower- and upper-expectation intervals  
$[\inf_{Q \in \pragset{\cQ}} \Exp_{Q}[U \mid V], \sup_{Q \in \pragset{\cQ}} \Exp_{Q}[U \mid V]$. 

A more involved calculation shows that for all $Q_1, Q_2 \in
\pragset{\cQ}$, we have $Q_1(U |V') = Q_2(U|V')$ iff $V \determines V'
\determines Q(U\|V)$; a condition that also plays a role in
Theorem~\ref{thm:first} on calibration. One might thus try to state
and prove restricted versions of our results, holding for partially
specified $\pur{Q}$ of this form. In practice though, one also
encounters other types of partially specified ${Q}$ (for example, in
regression contexts the function $\Exp_{Q}[U |V]$ might be used, but
no other aspect of $Q$ is relevant). It might thus be more useful to
generalize the whole machinery to arbitrary sets of distributions
$\pragset{\cQ}$; an additional potential advantage is that this might
allow us to determine safety of inferential procedures that output
sets of probabilities that are nonsingleton yet avoid dilation, such
as the inferential model approach of \cite{martin2013inferential} and
more generally Dempster-Shafer theory.  To get a first idea of how
this might work, consider the second part of the basic
Definition~\ref{def:leftsafety}. Here we essentially only have to
change $\prag{P}$ to $\pragset{\cP}$; nothing else changes:
\begin{definition}\label{def:leftsafetyb}
  Let $\cZ$ be an outcome space and $\cP^*$ and $\pragset{\cP}$ be sets of distributions
  on $\cZ$ as defined in Section~\ref{sec:pstar}, let $U$ be an RV
  and $V$ be a generalized RV on $\cZ$.  We say that $(\cP^*, \pragset{\cP})$ is {\em
sharply    safe\/} for $\ave{U}_{\we} \mid \ave{V}_{\we}$ 
  if 
\begin{equation}\label{eq:startsafeagain}
\text{for all $P^* \in \cP^*, \prag{P} \in \pragset{\cP}$}: \ \Exp_{P^*}[U] = \Exp_{P^*}[ \Exp_{\prag{P}} [U | V]].
\end{equation}
\end{definition}
All other definitions may be changed accordingly.  We call the
resulting notions `sharply' safe because it requires, for example,
safety for ${U} \mid {V}$ to imply that {\em all\/} distributions in
$\pragset{\cP}$ agree on $\prag{P}(U \mid V)$, i.e. their conditional
distributions of $U \mid V$ are the same; one could also define weaker notions in which this is only required for {\em some\/} $\tilde{P} \in \tilde{\cP}$. 

\subsection{Details for Section~\ref{sec:four}}
\paragraph{Proof of Proposition~\ref{prop:newstart},}
Let $k$ be such that $\range{U} \subset \reals^k$.

{\em Part 1}. 
Safety of $\prag{P}$ for $U \mid \ave{V}$ implies that for every
vector $\vec{a} \in \reals^k$ the RV $U_{\vec{a}} = \indicator_{U \leq \vec{a}}$ 
satisfies, for all $P \in \cP$, 
\begin{equation*}
\Exp_{V \sim P} \Exp_{U_{\vec{a}} \sim P \mid V} [U_{\vec{a}}] = 
\Exp_{V \sim P} \Exp_{U_{\vec{a}} \sim \prag{P} \mid V} [U_{\vec{a}}],
\end{equation*}
which can be rewritten as 
$$
\sum_{v \in \range{V}} P(V=v) [P(U \leq {\vec{a}} \mid V= v)] =  
\sum_{v \in \range{V}} P(V=v) [\prag{P}(U \leq {\vec{a}} \mid V= v)],
$$
which in turn is equivalent to $P(U \leq \vec{a}) = P'(U \leq \vec{a})$ with $P'$ as in the statement of the proposition.  This shows that safety for $U \mid \ave{V}$ implies (\ref{eq:startsafec}). Conversely, (\ref{eq:startsafec}) implies that for any function RV $U'$ with $\range{U'} \subset \reals^{k'}$ with $U \determines U'$, letting 
$f$ be the function with $U' = f(U)$, we have for every $P \in \cP^*$,
$$
\Exp_{U \sim P}[f(U)] =
\Exp_{U \sim P'}[f(U)] = 
\Exp_{V \sim P} \Exp_{U \sim \prag{P} \mid V}[f(U)],
$$
which implies that $\prag{P}$ is safe for $\ave{U'} \mid \ave{V}$,
Since this holds for every $U'$ with $U \determines U'$, safety for $U
\mid \ave{V}$ follows.

{\em Part 2\/} is just definition chasing. {\em Part 3\/} follows as a special case of Part 4 with $W$ in the role of $V$ and $V  \equiv {\bf 0}$. The if-part of  {\em Part 4\/} is a straightforward consequence of the definition. For the only-if part, note that 
from the definition of safety for $U \mid [V],W$ we infer that it implies that 
for all $P \in \cP$, for all $v \in \support_{\tilde{P}}(V)$, $w \in \support_P(W)$,
 for all functions $f$ and RVs $U' = f(U)$,
\begin{equation}\label{eq:yahoob}
\Exp_{U \sim P \mid
  W=w}[f(U)] =  \Exp_{U \sim \prag{P} \mid V=v,W=w} [f(U)].
\end{equation}
In
particular,  this will hold for every $\vec{a} \in \reals^k$, for the RV
$U_{\vec{a}} = f_{\vec{a}}(U) = \indicator_{U \leq \vec{a}}$. Then
(\ref{eq:yahoob}) can be written as $P(U \leq \vec{a} \mid W=w) =
\prag{P}(U \leq \vec{a} \mid V=v,W=w)$. Thus, the cumulative distribution
functions of $P(U \mid W=w)$ and $\prag{P}(U \mid V=v,W=w)$ are equal at
all $\vec{a} \in \reals^k$, so the distributions themselves must also
coincide, and (\ref{eq:calibrationagain}) follows.

\subsection{Details for Section~\ref{sec:calibration}}
\paragraph{Proof of Proposition~\ref{prop:ignore}}
We let $f_0$ be the function such that $V
\fPdetermines{f_0}{\prag{P}} P(U \mid V)$ and we let $f_1 = f$ be such that $V
\fPdetermines{f_1}{\prag{P}} V'$ ($f_0$ exists by definition, $f_1$ by
assumption). We also let $V'' \equiv \prag{P}(U \mid V)$ and note that every
$v'' \in \range{V''}$ is a probability distribution on $U$.

We first establish {\em (1) $\Leftrightarrow$ (2)}. 
For this, note that since $V \fPdetermines{f_1}{\prag{P}} V'$, we have for all $v \in \range{V}$, for the $v'\in \range{V'}$ with $f_1(v)= v'$, that
\begin{equation}\label{eq:nescafe} 
\prag{P}(U \mid V = v,V' = v') = \prag{P}(U \mid V=v).
\end{equation}
If (1) holds, i.e.  $\prag{P}(U \mid V,V')$ ignores $V$, then the
left-hand side in (\ref{eq:nescafe}) is equal to $\prag{P}(U \mid V'=
v')$, and (2) follows by plugging this into
(\ref{eq:nescafe}). Conversely, if (2) holds, then the right of
(\ref{eq:nescafe}) is equal to $\prag{P}(U \mid V'= v')$ for all $v$ with $f_1(v) = v'$, and (2)
follows by plugging this into (\ref{eq:nescafe}).

{\em (2) $\Rightarrow$ (3)\/} Suppose that (2) holds. This immediately
implies that $V' \fPdetermines{f_2}{\prag{P}} \prag{P}(U \mid V)$ with $f_2(v')
= \prag{P}(U \mid V'= v')$, which is what we had to prove.

{\em (3) $\Rightarrow$ (4)} Suppose that (3) holds. We may thus assume
that $V'\fPdetermines{f_2}{\prag{P}} V'' (\; \equiv \prag{P}(U \mid V)\;)$ for some function
$f_2$. By equivalence $(1)\Leftrightarrow (2)$, which we already
proved, it is sufficient to show that for all $v'' \in \range{V''}$,
for all $v' \in \range{V'}$ with $f_2(v')= v''$, we have $\prag{P}(U
\mid V=v')= \prag{P}(U \mid V'' = v'')$. Since $
\prag{P}(U \mid V'' = v'') = v''$, 
it is sufficient to prove that for all $v'' \in \range{V''}$ and
for all $v' \in \range{V'}$ with $f_2(v')= v''$, we have $\prag{P}(U
\mid V=v')= v''$.

To prove this, fix arbitrary $v'' \in \range{V''}$. For all $v'\in
\range{V'}$ with $f_2(v') = v''$, for all $v \in \range{V}$ with
$f_1(v)= v'$, we must have $f_2(f_1(v))= v''$ and hence $f_0(v) =
v''$, so (by definition of $V'' $) $\prag{P}(U \mid V=v) = f_0(v) =
v''$. Since (from the fact that $V \Pdetermines{\prag{P}} V'$ and the definition
of conditional probability) we can write $\prag{P}(U \mid V= v') =
\sum_{v \in \range{V}: f_1(v) = v'} \prag{P}(U \mid V = v) \alpha_v$
for some weights $\alpha_v \geq 0$, $\sum_{v: f_1(v)= v'} \alpha_v =
1$, and all components of the mixture must be equal to $v''$, it
follows that $\prag{P}(U \mid V = v') = v''$, which is what we had
to prove.

{\em (4) $\Rightarrow$ (2)} We may assume that $V'\fPdetermines{f_2}{\prag{P}}
V'' \equiv \prag{P}(U \mid V)$ for some function $f_2$, and (by
equivalence $(1) \Leftrightarrow (2)$ which we already established)
that for $v'' \in \range{V''}$, all $v' \in \range{V'}$ with $f_2(v')
= v'' $, $\prag{P}(U \mid V' = v') = \prag{P}(U \mid V'' = v'')$. By
definition of $V''$, the latter distribution must itself be equal to
$v''$, so we get:
\begin{equation}\label{eq:robbie} \prag{P}(U \mid V' = v') = v'',
\end{equation}
We must also have, for all $v$ with $f_1(v) = v'$, that $f_2(f_1(v)) =
v''$, so $f_0(v) = v''$, so, by definition of $V''$, $\prag{P}(U \mid
V=v) = v''$. Combining this with (\ref{eq:robbie}) gives that
$\prag{P}(U \mid V=v) = \prag{P}(U \mid V'= v')$, and, because $V
\Pdetermines{\prag{P}} V'$, that $\prag{P}(U \mid V=v, V'= v') = \prag{P}(U
\mid V'= v')$.  This must hold for all $v'' \in \range{V''}$, all $v'
\in \range{V'}$ with $f_2(v') = v''$, and hence simply for all $v'\in
\range{V'}$ and hence $\prag{P}(U \mid V, V')$ ignores $V$.

{\em Final Part\/} 
By Equivalence $(1) \Leftrightarrow (2)$, we have for all $v' \in \range{V'}$, all $v \in \range{V }$ with $f_1(v) = v'$, that $\prag{P}(U \mid V= v) = \prag{P}(U \mid V' = v')$. Combining this equality with the assumed 
safety of $\prag{P}$ for $U \mid V$, we must also have, for all $P \in \cP^*$,
all $v \in \range{V}$ with $f_1(v) = v'$, that 
\begin{equation}\label{eq:poekie}
{P}(U \mid V= v) = \prag{P}(U \mid V' = v'),
\end{equation}
But since $P(U \mid V'= v')$ must be a mixture of $P(U \mid V=v)$
over all $v$ with $f_1(v) = v'$ (as in the proof of $(3)
\Rightarrow (4)$ above), and all these mixture components are
identical by (\ref{eq:poekie}), we get that $P(U \mid V'= v') =
\prag{P}(U \mid V'= v')$. Since this argument is valid for all $v'\in
\range{V'}$, we have established safety for $U \mid V'$.

\paragraph{Proof of Theorem~\ref{thm:first}}
The result $(1) 
\Leftrightarrow (3)$ is almost immediate: calibration of $\prag{P}$ is equivalent to having, for each $P \in \cP^*$, for each $v'' \in \support_P(V'')$, (note that each such $v''$ is a probability distribution on $U$): 
$$P(U \mid \prag{P}(U \mid V_0) = v'') = v'' =
\prag{P}(U \mid V''= v'').$$ Rewriting the expression on the right of
the leftmost conditioning bar as $V'' = v''$, we see that this is
equivalent to having
$$
P(U \mid V'' = v'') =_P
\prag{P}(U \mid V''= v'')
$$
which by Proposition~\ref{prop:newstart} is equivalent to 
safety for $U \mid V''$ 
and so $(1) \Leftrightarrow (3)$ follows. From the definition of safety for $U \mid [V],V'$, Definition~\ref{def:leftsafety}, $(3) \Rightarrow (2)$ now
follows if we can show (by taking, in (2), $V'= V''= \prag{P}(U \mid V)$), 
that (a) $V \Pdetermines{\prag{P}} V''$ and (b) $\prag{P}(U \mid V,V'')$ ignores $V$. The first requirement holds trivially, the second  follows from Proposition~\ref{prop:ignore}, $(3) \Rightarrow (1)$, taking again $V'\equiv \prag{P}(U \mid V)$ (so that automatically $V \determines V'$ and $V' \determines \prag{P}(U \mid V)$). 

It now only remains to show $(2) \Rightarrow (3)$. So suppose that
$\prag{P}$ is safe for $U \mid V'$ and $\prag{P}(U \mid V,V')$ ignores $V$ and $ V \Pdetermines{\prag{P}} V'$. By Proposition~\ref{prop:ignore} $(1) \Rightarrow (4)$, it
follows that $\prag{P}(U \mid V',V'')$ ignores $V'$, where $V'' \equiv
\prag{P}(U \mid V)$. The result now follows by the final part of
Proposition~\ref{prop:ignore}, applied with $V$ in the proposition set
equal to $V'$ and $V'$ in the proposition set equal to $V''$.

\subsection{Details for Section~\ref{sec:continuous}}
\label{app:continuous}
\paragraph{Proof of Theorem~\ref{thm:confidence}}

{\em (1) $\Leftrightarrow$ (2)}.  First assume $U'$ is a simple pivot and that pivotal safety holds for $U'$. Set $f_v(u)$ as in
Definition~\ref{def:pivot} and take it to be increasing for each $v
\in \range{V}$ (the decreasing case is analogous). Since $U'$ is a
pivot and pivotal safety holds, we have, for all $v \in \range{V}$, $u' \in \range{U'}$,
$\prag{F}_{[U'|V]}(u'|v) = F_{[U']}(u')$ so, since $f_v$ is strictly
increasing, $\prag{F}_{[U'|V]}(f_v(u)|v) = F_{[U']}(f_v(u))$ and,
because the pivot is simple so $f_v$ is a bijection, $\prag{F}_{[U|V]}(u|v) =
F_{[U']}(f_v(u))$ for all $u \in \range{U \mid V=v}$, so
$\prag{F}_{[U|V]}(u|v)$ is of form (\ref{eq:belangrijker}).  

For the converse, assume again that $f_v$ is increasing and take 
$\prag{F}_{[U|V]}(u|v)$ of form (\ref{eq:belangrijker}). Then, following the steps above in backward direction, we find that all steps remain valid and show that  
for all $v \in \range{V}$, $u' \in \range{U'}$,
$\prag{F}_{[U'|V]}(u'|v) = F_{[U']}(u')$, which shows that $\tilde{P}$ is pivotally safe for $U |V$ with pivot $U'$.

{\em (1) $\Rightarrow$ (4)}.  To show that the SDA (scalar density
assumption) is satisfied note that, because $U'$ is a continuous pivot, $P(U')$
satisfies the SDA by definition; because pivotal safety holds, so does
$\tilde{P}(U'\mid V=v)$ for each $v \in \range{V}$. Because the pivot
$U$ is simple, the function $f_v$ in Definition~\ref{def:pivot} is a
bijection and it follows that $\tilde{P}('\mid V=v)$ also satisfies
SDA for each $v \in \range{V}$.

Now assume that $U'$ is an increasing pivot, i.e. the function $f_v(u) := f(u,v)$ with $U'= f(U,V)$ is increasing in $u$, for all $v \in \cV$ (the decreasing pivot case is proved analogously). 
For each $b \in [0,1]$, we have: 
\begin{align}\label{eq:mla}
& \{ z\in \cZ: \cdf(U(z) \mid V(z)) \leq b \}) = \nonumber
\{ z\in \cZ: \prag{F}_{[U']}(f(\; U(z),V(z)\; ) \mid V(z)) \leq b \} = \\
& 
\{ z\in \cZ: \prag{F}_{[U']}(f(U(z),V(z))) \leq b \} = 
\{ z\in \cZ: \prag{F}_{[U']}(U'(z)) \leq b \} = \nonumber \\
& 
\{ z\in \cZ: {F}_{[U']}(U'(z)) \leq b \},
\end{align}
where the first equality follows because $f_v$ must be strictly
increasing, the second because $U'$ is a pivot, the third is rewriting
and the fourth again because $U'$ is a pivot. Because $U'$ is a
continuous pivot, it satisfies SDA and thus, for all $P \in \cP^*$,
$F(U')$, the CDF under $P$ of $U'$, is uniform, so $P(F_{[U']}(U')
\leq b) = b$ for all $b \in [0,1]$. Using (\ref{eq:mla}) now
gives that $P(\prag{F}(U|V) \leq b) = b$.

Since as already established, $\tilde{P}(U \mid V=v)$ satisfies the
SDA, we also have $\tilde{P}( \prag{F}(U|V) \leq b \mid V=v) = b$, for
all $v \in \range{V}$.  Together these results imply that $\tilde{P}$
is safe for $\prag{F}(U|V) \mid V$.

{\em (4) $\Rightarrow$ (5)}. 
The third requirement of Definition~\ref{def:pivot} holds by assumption.  To show that the first and second requirements hold, note that by
the SDA  $f_v(u) :=
\prag{F}_{[U|V]}(u|v)$ must be continuous strictly increasing as a
function of $u$ on $\range{U}$ for all $v \in \cV$, so that
$\prag{F}(U|V)$ is a pivot, and again by the SDA, $f_v$ ranges from
$0$ to $1$ and hence it has an inverse, hence it is a bijection, so
that $\prag{F}(U|V)$ is even a simple pivot.

{\em (5) $\Rightarrow$ (1)\/} is trivial.  

{\em (3) $\Leftrightarrow$ (4)\/} Let $U'= \tilde{F}(U|V)$. By
Proposition~\ref{prop:newstart}, (\ref{eq:calibrationagain}), safety for $U' \mid [V]$ is
equivalent to having, for all $P \in \cP^*$, all $v \in \support_P(V)$,
 $P(U') = \prag{P}(U' \mid V=v)$. This in turn is equivalent to having, for all $b \in [0,1]$, 
$P(U' \leq b) = \prag{P}(U' \leq b \mid V=v)$. Since, by the SDA, $\prag{P}(U' \leq b \mid V=v) = b$, we get that safety for $U'\mid [V]$ is equivalent to having 
for all $P \in \cP^*$, all $v \in \support_P(V)$, all $b \in [0,1]$,
 $P(U' \leq b) = b$.  But this is just the same as confidence--safety for $U \mid V$ with $a=0$, which shows $(1) \Rightarrow (2)$. For the converse, we note that we have just shown that safety for $U'\mid [V]$ implies that for all $P \in \cP^*$, all $v \in \support(V)$, all $0 \leq a < b \leq 1$, that  
$P(U' \leq b) = b$ and $P(U'\leq a) = a$ whence $P(a < U' \leq b) = b-a$, implying confidence--safety. 

\commentout{
\paragraph{Proof (of Theorem~\ref{thm:confidence})}
{\em First Part.}
Assume that $f_v$ is increasing in $u$ (the decreasing case is entirely analogous). We have for all $v \in \range{V}$, all $P \in \cP^*$, with $F_{[U']}$ denoting cumulative distribution function for $P(U')$, 
$$
\prag{F}_{[U'|V]}(u'_0|v) =
\prag{F}_{[U|V]}(f_v^{-1}(u'_0)|v) = 
 F_{[U']}(f_v(f_v^{-1}(u'_0))) = F_{[U']}(u'_0),
$$
where the second equality is just definition. This shows pivotal safety of $\prag{F}$. 
{\em Second Part.}
Define $U'= \prag{F}_{[U|V]}(U \mid V)$. Clearly $(U,V) \fdetermines U'$ for
some function $f$. For each $v \in \range{V}$, conditioned on $V= v$,
$U'$ has a uniform distribution under $\prag{P}$, since the
distribution of a monotonically increasing distribution function is
uniform.

{\em $(2) \Leftrightarrow (3)$\/} The implication from (3) to (2)
follows directly from the definition of pivot. For the converse, note
that the first condition needed to satisfy Definition~\ref{def:pivot}
(full support of $V$ under $\prag{P}$) holds by assumptions (a) and
(b); (2) implies the second condition the third condition holds
trivially , and assumptions (a) and (b) once again insure the fourth
condition.

{\em $(3) \Leftrightarrow (4)$\/} The implication from (3) to (4) is
trivial (take pivot $U'$ as in (3)).  For the implication from (4) to
(3), assume that there exists some pivot $U''$ for $U \mid V$. We need
to show that this implies that $U'$ as defined above is also a pivot
under assumptions (a) and (b). Under these assumptions $U'$ is
uniformly distributed under $\prag{P}(\cdot \mid V=v)$, and $f_v$ as
in Definition~\ref{def:pivots} is strictly monotonic in $u$, for all 
$v \in \range{V}$ , so it only remains to be shown that for all $P
\in \cP^*$, $U'$ is uniform under $P$, i.e. for all $b \in [0,1]$,
$P(U' \leq b)= b$. To this end, letting 
$g$ be the function with $U'' = g(U,V)$, note that:
where the first equality follows because we are in the continuous case of Definition~\ref{def:pivots}, hence $g_v$ (defined as $f_v$ in the definition) must have domain equal to $\range{U}$ and must be 1-to-1 and continuous, and hence strictly monotonic. The second follows because $U''$ is a pivot, so its
distribution under $\prag{P}( \cdot \mid V=v)$ is the same for all $v \in \range{V}$; the third is
just rewriting; the fourth follows because $U''$ is a pivot so its
distribution, hence its distribution function, under $\prag{P}$ and
$P$ must be the same.
We thus get
\begin{align*}
& P(U' \leq b) =  
P(\prag{F}(U \mid V) \leq b ) = 
 P(\{ z\in \cZ: \cdf(U(z) \mid V(z)) \leq b \}) = \\ &
P(\{ z\in \cZ: {F}_{[U'']}(U''(z)) \leq b \}) = 
P({F}_{[U'']}(U'') \leq b ) = b,
\end{align*}
where the first two equalities are just rewriting, the third follows
by (\ref{eq:mla}) and the fourth is rewriting.  For the final
equality, note that, since we assume (assumption (b)) that
$\prag{P}(U \mid V=v)$ has a continuous and strictly increasing
cumulative distribution function, and $U''$ is a pivot, by the 1-to-1
continuity requirement for the function $g$, $\prag{P}(U'' \mid V=v)$
must itself have a continuous and strictly increasing distribution
function for all $v \in \range{V}$.  The same then holds for
$\prag{P}(U'')$ and hence $P(U'')$ (distributions are
identical). Thus $P(U'')$ has a continuous and strictly increasing
CDF, which implies that the CDF itself is uniformly distributed,
implying the final inequality.

}

\paragraph{Proof of Theorem~\ref{thm:discreteconfidence}}
Let $U'= \prag{p}(U|V)$ and consider the function $f$ with $U' = f(U,V)$ and let $f_v(u)$ be as in Definition~\ref{def:pivot}. The following fact is immediate by the condition of `uniqueness of nonzero probabilities' imposed on $\prag{P}(U \mid V=v)$:

{\em Fact 1}. \ 
For
each $v \in \range{V}$, $f_v$ is an injection.\\ \ 
\\ 
We then find, by Definition~\ref{def:pivot} and~\ref{def:pivotal} that $U'$ is a simple pivot and $\tilde{P}$ is pivotally safe iff for all $P \in \cP^*$, for all $v \in \range{V}$, 
$$
P(U') = \prag{P}(U') = \prag{P}(U' \mid V= v),
$$
which, by Proposition~\ref{prop:newstart}, (\ref{eq:calibrationagain}), is equivalent to
$\prag{P}$ being safe for $U' \mid [V]$. This establishes $(1)
\Leftrightarrow (2)$.
The implication $(2) \Rightarrow (3)$ is trivial (take $U'$ as
pivot). Thus it only remains to show:

{\em (3) $\Rightarrow$ (1):\/} Let $U''$ be a simple pivot for $U \mid
V$ and suppose that pivotal safety holds with pivot $U''$. We first show that for each $p \in [0,1]$, $\tpmf{U|V}(U|V) = p \Leftrightarrow \tpmf{U''}(U'') = p$, i.e.
\begin{equation}\label{eq:properfiducial}
\{z \in \cZ: \tpmf{U|V}(U(z)|V(z)) = p \} = \{z \in \cZ: \tpmf{U''}(U''(z)) = p \}. 
\end{equation}
To see this, note that, because
for each $v \in \range{V}$, the mapping
$f_{v}(u) := f(u,v)$ is a bijection from $\range{U \mid
  V=v}$ to $\range{U''}$, we have
\begin{align*}
& \{z \in \cZ: \tpmf{U|V}(U(z)|V(z)) = p \} = 
\{z \in \cZ: \tpmf{U''|V}(f_{V(z)}(U(z))|V(z)) = p \} = \\ & 
\{z \in \cZ: \tpmf{U''}(f_{V(z)}(U(z))) = p \} = 
\{z \in \cZ: \tpmf{U''}(U''(z)) = p \},
\end{align*}
where the first equality follows from Fact 1 above, the second because
of pivotal safety, which imposes that $\prag{P}(U'' \mid V=v) = \prag{P}(U'')$ for all $v \in \range{V}$, and the third by definition of $f_v(u)$. Thus,
(\ref{eq:properfiducial}) follows, and it implies that the two events in
(\ref{eq:properfiducial}) must have the same probability under any
single probability measure on $\cZ$, in particular under $\prag{P}(\cdot
\mid V=v)$ for all $v \in \range{V}$ and for all $P \in \cP^*$, i.e.
\begin{align}\label{eq:properb}
\prag{P}(\{z \in \cZ: \tpmf{U|V}(U(z)|V(z)) = p \} \mid V=v) & = 
\prag{P}(\{z \in \cZ: \tpmf{U''}(U''(z)) = p \}
 \mid V=v)
  \\ \label{eq:properc}
{P}(\{z \in \cZ: \tpmf{U|V}(U(z)|V(z)) = p \} )  & = 
{P}(\{z \in \cZ: \tpmf{U''}(U''(z)) = p \}).
\end{align}
Since  $U''$ is a pivot,
$\prag{P}(U'' \mid V=v)$ is the same for all $v \in \range{V}$ and equal
to $\prag{P}(U'')$ and also to $P(U'')$, for all $P \in
\cP^*$. Combining this with (\ref{eq:properb}) we find that
$$
\prag{P}(\{z \in \cZ: \tpmf{U|V}(U(z)|V(z)) = p \} \mid V=v)  = 
{P}(\{z \in \cZ: \tpmf{U''}(U''(z)) = p \}).
$$
Rewriting this further using (\ref{eq:properc}) gives
$$
\prag{P}(\{z \in \cZ: \tpmf{U|V}(U(z)|V(z)) = p \} \mid V=v) = {P}(\{z
\in \cZ: \tpmf{U|V}(U(z)|V(z)) = p \} ),$$ i.e., setting $U'=
\tpmf{U|V}(U|V)$, we find that for all $v \in \range{V}$,
$\prag{P}(U'\mid V=v) = P(U')$; thus $\tilde{P}$ is pivotally safe for
$U|V$ with simple pivot $U'$, and {\em (1).} follows.

\paragraph{Proof of Theorem~\ref{thm:loss}}
By assumption there is some simple pivot $U'= f(U,V)$, such that for
each $v \in \range{V}$, the function $f_v$ on $\range{U \mid V= v}$
defined as $f_v(u) = f(u,v)$ is a bijection to $\range{U'}$. We now
fix some function $g: \cU' \rightarrow \range{U}$ that is 1-to-1 (an
injection, not a bijection).  Such a function must exist; we can, for
example, take $f^{-1}_{v_0}$ for arbitrary but fixed $v_0$ which
exists because $f_{v_0}$ must be a bijection by definition.  Also note
that for any bijection $f: \cU \rightarrow \cU$ and its extension to
$\cA$ as defined in the main text, we have, for every distribution $P$
on $\cU$ with mass function $p$ and $\tilde{a}_{P(U)}$ denoting the
function from $P(U)$ to a Bayes act for $P(U)$ (which we assume to exist), by symmetry of the loss:
\begin{align*}
& \sum_{u \in \cU} P(f(u)) \cdot L(f(u),f(\tilde{a}_{P(U)})) = 
\sum_{u \in \cU} P(u) \cdot L(u,\tilde{a}_{P(U)}) = \\
& \min_{a \in \cA} \sum_{u \in \cU} P(u) \cdot L(u,{a}) = 
\min_{a \in \cA} \sum_{u \in \cU} P(f(u)) \cdot L(f(u),{a}) = 
\sum_{u \in \cU} P(f(u)) \cdot L(f(u),f(\tilde{a}_{P(f(U))})),
\end{align*}
hence, combining the leftmost and righmost expression,
\begin{equation}\label{eq:lancaster}
f(\tilde{a}_{P(U)}) = \tilde{a}_{P(f(U))}.
\end{equation}
Now repeatedly using symmetry of the loss function and (\ref{eq:lancaster}), we have:
\begin{align*}
& \sum_{u \in \cU} \prag{P}(U = u \mid V=v) \cdot L(u,\prag{a}_v) = 
\sum_{u \in \cU} \prag{P}(\{z: U(z) = u \} \mid V=v) \cdot L(u,\prag{a}_v) =
\\ & 
\sum_{u \in \cU} \prag{P}(\{z: f_v(U(z)) = f_v(u) \} \mid V=v) \cdot L(g(f_v(u)),
g(f_v(\prag{a}_v))) = \\ &
\sum_{u' \in \cU'} \prag{P}(\{z: U'(z)) = u' \} \mid V=v) \cdot L(g(u'),a_{\prag{P}(g(f_v(U)) \mid V=v)}) = \\&
\sum_{u' \in \cU'} \prag{P}(\{z: U'(z)) = u' \}) \cdot L(g(u'),a_{\prag{P}(g(U') \mid V=v)})=
\sum_{u' \in \cU'} {P}(\{z: U'(z)) = u' \}) \cdot L(g(u'),a_{\prag{P}(g(U'))})= \\ &
\sum_{u' \in \cU', v \in \range{V}} {P}(\{z: U'(z)) = u' , V(z) = v\}) \cdot L(g(u'),a_{\prag{P}(g(U'))})=  \\ & 
\sum_{u' \in \cU', v \in \range{V}} {P}(\{z: U'(z)) = u' , V(z) = v\}) \cdot L(g(u'),a_{\prag{P}(g(U')\mid V=v)})= \\ & 
\sum_{u' \in \cU', v \in \range{V}} {P}(\{z: f_v^{-1} (U'(z)) = f_v^{-1}(u') , V(z) = v\})  \cdot L(g(f_v(f_v^{-1}(u'))),a_{\prag{P}(g(f_v(U)) \mid V=v)})=
\end{align*}
\begin{align*}
& \sum_{u \in \cU, v \in \range{V}} {P}(\{z: U(z) = u , V(z) = v\}) \cdot L(g(f_v(u)),a_{\prag{P}(g(f_v(U)) \mid V=v)})= \\
& \sum_{u \in \cU, v \in \range{V}} {P}(\{z: U(z) = u , V(z) = v\}) \cdot L(g(f_v(u)),g(f_v(
a_{\prag{P}(U \mid V=v)})))= \\
& \sum_{u \in \cU, v \in \range{V}} {P}(\{z: U(z) = u , V(z) = v\})\cdot L(u,\prag{a}_v) = \Exp_{U,V \sim P} [L(U,\prag{a}_V)],
\end{align*}
and the result follows.
\end{document}